\newcommand{\A}{\mathcal{A}}
\renewcommand{\H}{\mathcal{H}}
\newcommand{\I}{\mathcal{I}}
\newcommand{\defeq}{\vcentcolon=}
\renewcommand{\Im}{\textup{Im}}
\newcommand{\FCA}{\textup{LCA}}
\newcommand{\ora}{\overrightarrow}
\renewcommand{\epsilon}{\varepsilon}
\algnewcommand\Initialize{\textbf{initialize}$\,\,$}
\algnewcommand\Define{\textbf{define}$\,\,$}
\algnewcommand\Update{\textbf{update}$\,\,$}
\newtheoremstyle{the}{15pt}{15pt}{\it}{}{\bfseries}{.}{ }{}
\newtheoremstyle{pro}{15pt}{15pt}{\it}{}{\bfseries}{.}{ }{}
\newtheoremstyle{lem}{15pt}{15pt}{\it}{}{\bfseries}{.}{ }{}
\newtheoremstyle{def}{15pt}{15pt}{}{}{\bfseries}{.}{ }{}
\newtheoremstyle{rem}{15pt}{15pt}{}{}{\it}{.}{ }{}
{\theoremstyle{the} \newtheorem{mytheorem}{Theorem}[section]} \crefname{mytheorem}{Theorem}{Theorems}
{\theoremstyle{the} \newtheorem*{mytheorem*}{Theorem}} \crefname{mytheorem*}{Theorem}{Theorems}
{\theoremstyle{lem} \newtheorem{mylemma}[mytheorem]{Lemma}} \crefname{mylemma}{Lemma}{Lemmas}
{\theoremstyle{pro} \newtheorem{myproposition}[mytheorem]{Proposition}} \crefname{myproposition}{Proposition}{Propositions}
{\theoremstyle{the} \newtheorem{mycorollary}[mytheorem]{Corollary}} \crefname{mycorollary}{Corollary}{Corollaries}
{\theoremstyle{def} \newtheorem{mydefinition}[mytheorem]{Definition}} \crefname{mydefinition}{Definition}{Definitions}
{\theoremstyle{def} \newtheorem{mynotation}[mytheorem]{Notation}} \crefname{mynotation}{Notation}{Notations}
{\theoremstyle{rem} \newtheorem*{myremark}{Remark}} \crefname{myremark}{Remark}{Remarks}
{\theoremstyle{rem} \newtheorem*{myexample}{Example}} \crefname{myexample}{Example}{Examples}
\newcommand{\nom}[2]{#1 \textsc{#2}}
\title{$(k-2)$-linear connected components in hypergraphs of rank $k$}
\author[1,3]{\nom{Florian}{Galliot}}
\author[1,3]{\nom{Sylvain}{Gravier}}
\author[2,3]{\nom{Isabelle}{Sivignon}}
\affil[1]{Univ. Grenoble Alpes, CNRS, Institut Fourier, 38000 Grenoble, France}
\affil[2]{Univ. Grenoble Alpes, CNRS, Grenoble INP, GIPSA-lab, 38000 Grenoble, France}
\affil[3]{Univ. Grenoble Alpes, Maths à Modeler, 38000 Grenoble, France}
\date{}
\begin{document}
\maketitle

\begin{abstract}
  We define a \textit{$q$-linear path} in a hypergraph $\mathcal{H}$ as a sequence $(e_1,\ldots,e_L)$ of edges of $\mathcal{H}$ such that $|e_i \cap e_{i+1}|\in  [\![1,q]\!]$ and $e_i \cap e_j = \varnothing$ if $|i-j|>1$. In this paper, we study the connected components associated to these paths when $q=k-2$ where $k$ is the rank of $\mathcal{H}$. If $k=3$ then $q=1$ which coincides with the well-known notion of \textit{linear path} or \textit{loose path}. We describe the structure of the connected components, using an algorithmic proof which shows that the connected components can be computed in polynomial time. We then mention two consequences of our algorithmic result. The first one is that deciding the winner of the Maker-Breaker game on a hypergraph of rank 3 can be done in polynomial time. The second one is that tractable cases for the NP-complete problem of "Paths Avoiding Forbidden Pairs" in a graph can be deduced from the recognition of a special type of line graph of a hypergraph.
\end{abstract}

\section*{Introduction} 

\hphantom{\indent}There are many possible definitions for a path between two vertices in a hypergraph. Each one has its own associated connectivity problem, consisting in the algorithmic computation of the connected components and the potential study of their structure. Possible fields where such problems apply include system security \cite{GPR14} on undirected hypergraphs as well as propositional logic \cite{GLP93}, system transfer protocols \cite{TT09} or computational tropical geometry \cite{All14} on directed hypergraphs.
\medskip
\\ \indent In an undirected hypergraph, a \textit{linear path} (or \textit{loose path}) is a sequence of edges such that any two consecutive edges intersect on exactly one vertex and any two non-consecutive edges do not intersect. Our main motivation is the connectivity problem associated with linear paths in 3-uniform hypergraphs. The existence of such paths is the subject of numerous extremal results \cite{OS14} \cite{Jac15} \cite{JPR16} \cite{WP21}. For instance, \cite{JPR16} determines the Tur\'an number of the 3-uniform linear path of length 3, so that a 3-uniform hypergraph on $n \geq 8$ vertices with at least $\binom{n-1}{2}$ edges necessarily contains a 3-uniform linear path of length 3. Such results are proven using  counting methods. The study of linear structures in potentially sparser hypergraphs, however, requires tools of a qualitative nature. It then seems reasonable to start by studying the linear connected components. In order to describe their structure, we develop methods that actually generalize to hypergraphs of rank $k \geq 4$ when replacing linearity with a notion of \textit{$(k-2)$-linearity}.
\medskip
\\ \indent We thus introduce the general concept of $q$-linear path, where any two consecutive edges intersect on between 1 and $q$ vertices (and non-consecutive edges do not intersect). Extremal results also exist on paths with similar restrictions on the size of the intersections, for example paths where any two consecutive edges must intersect on exactly $q$ vertices \cite{Tom12} \cite{DLM17} with emphasis on the linear case $q=1$ \cite{FJS14} \cite{GLS20}. Throughout this article, let $\H$ be a hypergraph of rank $k$: as for any hypergraph, we denote its vertex set by $V(\H)$ and its edge set by $E(\H)$. Define the $q$-linear connected component of $x^* \in V(\H)$ as the set $LCC^{\,q}_{\H}(x^*)$ of all vertices $x$ such that there exists a $q$-linear path between $x^*$ and $x$ in $\H$. We will see that $q$-linear paths do not define a transitive relation, so that the $q$-linear connected components of $\H$ do not form a partition of $V(\H)$, unlike most other connectivity problems. This paper is a study of the $q$-linear connected components of $\H$ in the case $q=k-2$, meaning that we only prohibit \textit{tight} intersections of size $k-1$. Linear paths in 3-uniform hypergraphs correspond to the case $k=3$ i.e. $q=1$. Our first main result describes the structure of the subhypergraph $\H[LCC^{\,k-2}_{\H}(x^*)]$ induced by a $(k-2)$-linear connected component.
\medskip
\\ \indent The proof of the structural result is algorithmic and provides us with a way to compute the $(k-2)$-linear connected components in polynomial time. More precisely, our second main result is an algorithm that computes $LCC^{\,k-2}_{\H}(x^*)$ in $O(m^2k)$ time where $m=|E(\H)|$, which remains polynomial even if $k$ is part of the input. This result has consequences on two algorithmic problems that have long existed in the literature.
\\ \indent The first one is the problem of deciding the winner of the \textit{Maker-Breaker} positional game. Two players, Maker and Breaker, take turns picking vertices of a hypergraph $\H$: Maker wins if he owns all the vertices of some edge of $\H$, and Breaker wins if he prevents this from happening. The problem of deciding the winner of the game with optimal play is trivially tractable for hypergraphs of rank 2, and is known to be PSPACE-complete for 6-uniform hypergraphs \cite{RW21}. In a separate paper \cite{GGS22}, we show tractability for hypergraphs of rank 3, by reducing to the linear path existence problem in 3-uniform hypergraphs and using the polynomial-time algorithm provided by the present paper. This validates a conjecture by Rahman and Watson \cite{RW20}.
\\ \indent The second one is the "Paths Avoiding Forbidden Pairs" problem (known as $\textsc{PAFP}$) which, given two vertices $x,y$ in a graph $G$ with blue and red edges, asks whether there exists a blue induced path between $x$ and $y$ in $G$. Indeed, consider a bicolored version of the line graph of a hypergraph, where a blue (resp. red) edge indicates an intersection of size between 1 and $k-2$ (resp. of size $k-1$): if $G$ is the bicolored line graph of some $k$-uniform hypergraph $\H$, then there exists a blue induced path between two vertices of $G$ if and only if there exists a $(k-2)$-linear path in $\H$ between the corresponding (hyper)edges. Since our connectivity problem is solvable in polynomial time, the study of the bicolored line graph recognition problem has the potential to unearth new tractable cases for $\textsc{PAFP}$, which is known to be NP-complete in general \cite{GMO76}.
\medskip
\\ \indent After some basic definitions given in \Cref{Section1}, including the introduction of $q$-linear paths, \Cref{Section2} presents structures that are specific to the case $q=k-2$ as well as some of their properties. It is then shown algorithmically in \Cref{Section3} that these structures describe the $(k-2)$-linear connected components, which can be computed in polynomial time: these are our two main results. Finally, \Cref{Section4} addresses the links that our algorithmic problem has with the Maker-Breaker game and the $\textsc{PAFP}$ problem. We end by formulating some open problems that arise from our study.

\section{$q$-linear paths}\label{Section1}

\subsection{Sequences of edges}

\begin{mydefinition}
	A \textit{sequence of edges of $\H$} is some $\ora{P}=(e_1,\ldots,e_L)$ where $e_i \in E(\H)$ for all $1 \leq i \leq L$. The case $L=0$ is authorized: we may then denote $\ora{P}=()$.
\end{mydefinition}

\begin{mynotation}
	Let $\ora{P}=(e_1,\ldots,e_L)$ be a sequence of edges of $\H$.
	\begin{itemize}[noitemsep,nolistsep]
		\item We define $V(\ora{P}) \defeq e_1 \cup \ldots \cup e_L \subseteq V(\H)$ and $E(\ora{P}) \defeq \{e_1,\ldots,e_L\} \subseteq E(\H)$.
		\item Let $\ora{Q}=(e'_1,\ldots,e'_M)$ be another sequence of edges of $\H$. We denote by $\ora{P} \oplus \ora{Q}$ the concatenation of $\ora{P}$ and $\ora{Q}$, that is $\ora{P} \oplus \ora{Q} \defeq (e_1,\ldots,e_L,e'_1,\ldots,e'_M)$.
	\end{itemize}
\end{mynotation}

\subsection{Description of the problem}

\begin{mydefinition}
	 A \textit{path} in $\H$ is a sequence $\ora{P}=(e_1,\ldots,e_L)$ of edges of $\H$ such that one can write $V(\ora{P})=\{x_1,\ldots,x_N\}$ and
$e_i=\{x_{s_i},x_{s_i+1},\ldots,x_{f_i}\}$ with $s_i<s_{i+1}\leq f_i<f_{i+1}$ for all $1 \leq i \leq L-1$. Note that $e_i \cap e_{i+1} \neq \varnothing$ for all $1 \leq i \leq L-1$. The path is deemed \textit{simple} if $e_i \cap e_j = \varnothing$ for all $1 \leq i,j \leq L$ such that $|i-j|>1$. See \Cref{SimplePath}.
\end{mydefinition}

\begin{figure}[htbp]
	\centering
	\includegraphics[scale=.5]{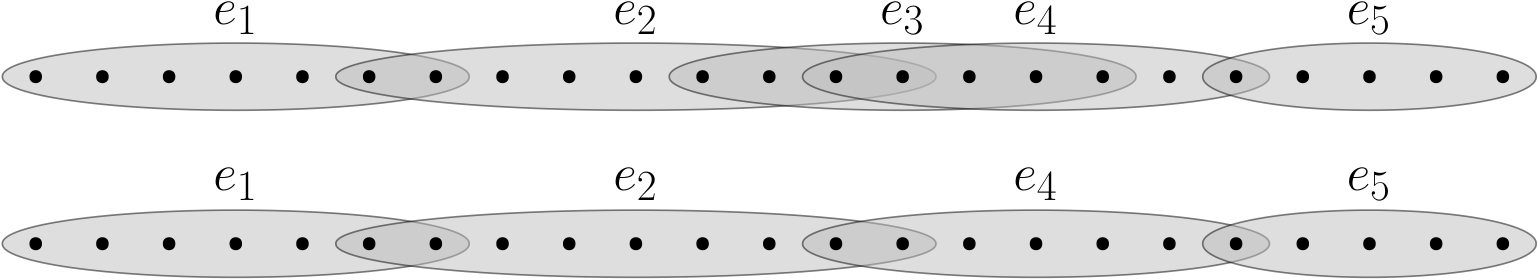}
	\caption{The top path is not simple because $e_2 \cap e_4 \neq \varnothing$. Removing $e_3$ yields a simple path (bottom).}\label{SimplePath}
\end{figure}

\indent We study paths with the additional \textit{$q$-linearity property} that $|e_i \cap e_{i+1}| \leq q$ for some fixed integer $q$. Since we are only interested in existence questions, we can focus on simple such paths: indeed, from any path it is possible to extract a simple path by removing some edges if necessary, and this obviously preserves the $q$-linearity property. An equivalent definition is the following:

\begin{mydefinition}
	Let $q \geq 1$. A \textit{$q$-linear path} in $\H$ is a sequence $\ora{P}=(e_1,\ldots,e_L)$ of edges of $\H$ such that for all $1 \leq i < j \leq L$: $|e_i \cap e_j| \,\,\begin{cases} \,\,\in [\![1,q]\!] & \text{if } j=i+1. \\  \,\,=0 & \text{otherwise.} \end{cases}$.
\end{mydefinition}

\begin{mydefinition}\label[mydefinition]{def-q-linear}
	Let $q \geq 1$ be an integer and let $X,Y \subseteq V(\H)$ be nonempty such that $|X \cap Y| \leq q$. A \textit{$q$-linear path from $X$ to $Y$} in $\H$ is a $q$-linear path $\ora{P}=(e_1,\ldots,e_L)$ in $\H$ such that:
	\begin{itemize}[noitemsep,nolistsep]
		\item If $X \cap Y \neq \varnothing$, then $L=0$.
		\item If $X \cap Y = \varnothing$, then $L \geq 1$ and:
			\begin{enumerate}[noitemsep,nolistsep,label=(\roman*)]
				\item $X \cap e_1 \neq \varnothing$, and if $L \geq 2$ then $X \cap e_i = \varnothing$ for all $2 \leq i \leq L$.
				\item $Y \cap e_L \neq \varnothing$, and if $L \geq 2$ then $Y \cap e_i = \varnothing$ for all $1 \leq i \leq L-1$.
			\end{enumerate}
	\end{itemize}
	Whenever $X=\{x\}$, we may use the abuse of notation $X=x$ (same for $Y$). See \Cref{q-linear_path}.
\end{mydefinition}

\begin{figure}[htbp]
	\centering
	\includegraphics[scale=.5]{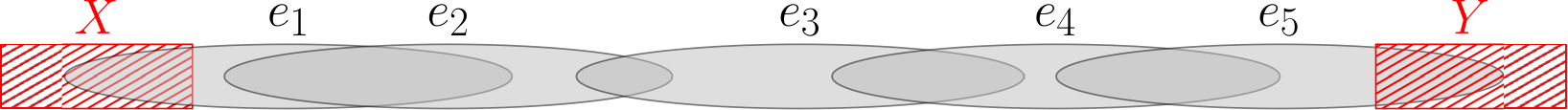}
	\caption{Schematic representation of a $q$-linear path from $X$ to $Y$.}\label{q-linear_path}
\end{figure}

\begin{mylemma}\label[mylemma]{Lemma-subpath}
	Let $\ora{P}=(e_1,\ldots,e_L)$ be a $q$-linear path in $\H$ such that $L \geq 1$. Let $X,Y \subseteq V(\H)$ be disjoint such that $X \cap e_1 \neq \varnothing$ and $Y \cap e_L \neq \varnothing$. Then $\ora{P}$ contains a $q$-linear path $\ora{Q}$ from $X$ to $Y$ in $\H$. More precisely: $\ora{Q}=(e_r,\ldots,e_s)$ where  $s \defeq \inf\{1 \leq i \leq L \,\,\text{such that $e_i \cap Y \neq \varnothing$} \}$ and $r \defeq \sup\{1 \leq i \leq s \,\,\text{such that $e_i \cap X \neq \varnothing$} \}$.
\end{mylemma}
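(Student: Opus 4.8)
The plan is to verify directly that the sequence $\ora{Q} \defeq (e_r,\ldots,e_s)$, with $r$ and $s$ as defined in the statement, meets every requirement of \Cref{def-q-linear}; essentially all the work is bookkeeping with the two extremal indices.

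First I would establish that $r$ and $s$ are well defined and satisfy $1 \le r \le s \le L$. The set $\{1 \le i \le L : e_i \cap Y \neq \varnothing\}$ contains $L$ (since $e_L \cap Y \neq \varnothing$), hence is nonempty, so $s$ exists and $s \le L$. The set $\{1 \le i \le s : e_i \cap X \neq \varnothing\}$ contains $1$ (since $e_1 \cap X \neq \varnothing$ and $1 \le s$), hence is nonempty, so $r$ exists and $1 \le r \le s$. Then I would observe that $\ora{Q}$ is a contiguous subsequence of $\ora{P}$, so the conditions ``$|e_i \cap e_j| \in [\![1,q]\!]$ if $j=i+1$ and $|e_i \cap e_j|=0$ otherwise'' are simply inherited from $\ora{P}$; thus $\ora{Q}$ is a $q$-linear path in $\H$, of length $s-r+1 \ge 1$.

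The remaining step is to check clauses (i) and (ii) of \Cref{def-q-linear}. Since $X$ and $Y$ are disjoint we are in the case $X \cap Y = \varnothing$, so we need $\ora{Q}$ to have length $\ge 1$ (already noted) together with: (i) $X \cap e_r \neq \varnothing$, and $X \cap e_i = \varnothing$ for $r+1 \le i \le s$; (ii) $Y \cap e_s \neq \varnothing$, and $Y \cap e_i = \varnothing$ for $r \le i \le s-1$. Here $X \cap e_r \neq \varnothing$ holds because $r$ belongs to the set of which it is the supremum, and $X \cap e_i = \varnothing$ for $r < i \le s$ by maximality of $r$; similarly $Y \cap e_s \neq \varnothing$ because $s$ belongs to the set of which it is the infimum, and $Y \cap e_i = \varnothing$ for every $i$ with $1 \le i < s$ — in particular for $r \le i \le s-1$, using $r \ge 1$ — by minimality of $s$.

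I do not expect a genuine obstacle: the statement is essentially a definition-chase. The only points that call for a little care are the well-definedness of $r$ and $s$ (which is exactly why the hypotheses $X \cap e_1 \neq \varnothing$ and $Y \cap e_L \neq \varnothing$ are needed) and the degenerate case $r=s$, where $\ora{Q}$ has length $1$: then clauses (i) and (ii) ask only that $e_r$ meet both $X$ and $Y$, which holds, and the ``if $L \ge 2$'' subclauses are vacuous — this is consistent with $X \cap Y = \varnothing$ since $e_r$ may contain distinct vertices of $X$ and of $Y$.
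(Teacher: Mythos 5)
Your proof is correct and follows exactly the route the paper takes: the paper's own proof is the one-liner ``clear by minimality of $s$ and maximality of $r$,'' and your write-up is simply a careful expansion of that same argument (well-definedness of $r,s$, inheritance of $q$-linearity by contiguous subsequences, and the two extremality checks). No gap; nothing to change.
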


\begin{proof}
	This is clear by minimality (resp. maximality) of $s$ (resp. $r$).
\end{proof}

\begin{mydefinition}
	Let $x \in V(\H)$. The \textit{$q$-linear connected component of $x$} in $\H$ is defined as:
	$$ LCC^{\,q}_{\H}(x) \defeq \{y \in V(\H) \,\,\text{such that there exists a $q$-linear path from $x$ to $y$ in $\H$} \}. $$
\end{mydefinition}

\indent It is important to note that $q$-linear paths do not define a transitive relation, so that the $q$-linear connected components of a hypergraph do not necessarily form a partition of its vertex set. Indeed, the union of a $q$-linear path from $x$ to $y$ and a $q$-linear path from $y$ to $z$ does not necessarily contain a $q$-linear path from $x$ to $z$. An illustration in the case $q=1$ is provided in \Cref{NonTransitive} (this graphical representation of 3-uniform hypergraphs will be used throughout, with each edge pictured as a "claw" joining its three vertices). Therefore, the problem consisting in computing the $q$-linear connected component of a given vertex is nontrivial.

\begin{figure}[htbp]
	\centering
	\includegraphics[scale=.6]{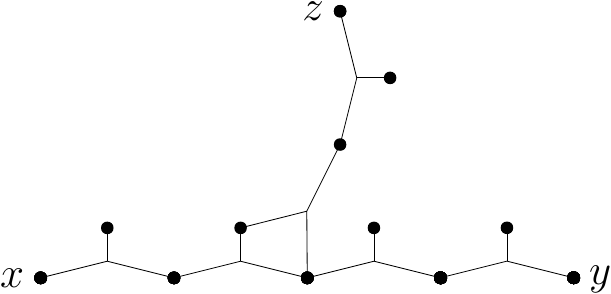}
	\caption{There is no $1$-linear path from $x$ to $z$.}\label{NonTransitive}
\end{figure}

\indent This problem reduces polynomially to the case where $\H$ is uniform. Indeed, if $\H$ is of rank $k$ then let $\H_0$ be the $k$-uniform hypergraph obtained from $\H$ by adding $k-|e|$ new vertices to each edge $e$: it is easy to see that there exists a $q$-linear path from $x$ to $y$ in $\H$ if and only if there exists one in $\H_0$. We thus introduce the following decision problem: \\

\begin{tabularx}{0.95\textwidth}{|l @{} l @{} X|}
	\hline
	\multicolumn{3}{|l|}{$\,\,\textsc{HypConnectivity}_{k,q}$} \\ \hline
	Input $\,$ & : & $\,$ a $k$-uniform hypergraph $\H$ and two distinct vertices $x,y$ of $\H$. \\
	Output $\,$ & : & $\,$ YES if and only if there exists a $q$-linear path from $x$ to $y$ in $\H$. \\ \hline
\end{tabularx} \\

\indent The case $q=k-1$ corresponds to standard (i.e. non-constrained) connectivity in hypergraphs, which is tractable via a simple DFS/BFS-type search. We now address the case $q=k-2$.

\section{$(k-2)$-linear paths in $k$-uniform hypergraphs}\label{Section2}

\hphantom{\indent}In this section, we suppose $\H$ is $k$-uniform with $k \geq 3$.

\subsection{Extendable paths and islands}

\subsubsection{Principle}

\hphantom{\indent}Let $x^* \in V(\H)$ be the vertex whose $(k-2)$-linear connected component we wish to compute. The idea is to design an algorithm that searches through $E(\H)$ and accepts edges under some guarantee that all their vertices are in $LCC^{\,k-2}_{\H}(x^*)$.
\\ \indent Consider the situation in the middle of the execution of the algorithm. Some edges have already been accepted, forming a subhypergraph $\I_1$ of $\H$ containing $x^*$ such that: for all $x \in V(\I_1)$, there exists a $(k-2)$-linear path from $x^*$ to $x$ in $\I_1$. Now, the algorithm encounters some edge $e$ intersecting both $V(\I_1)$ and $V(\H)\setminus V(\I_1)$, and needs to decide whether or not $e$ should be accepted right away: let $x \in e \setminus V(\I_1)$, can we find a $(k-2)$-linear path from $x^*$ to $x$ made of edges in $E(\I_1) \cup \{e\}$?
\\ \indent The only way would be to use a $(k-2)$-linear path $\ora{P}=(e_1,\ldots,e_L)$ from $x^*$ to $X \defeq e \cap V(\I_1)$ in $\I_1$ (\Cref{Lemma-subpath} ensures there exists one), and prolong it with the edge $e$ to reach $e \setminus V(\I_1)$. However, though $\ora{P} \oplus (e)=(e_1,\ldots,e_L,e)$ is obviously $(k-2)$-linear if $|X| \leq k-2$, it might not be if $|X|=k-1$: indeed, in that case, if $X \subset e_L$ then $|e_L \cap e|=k-1$. On this account, if $|X|=k-1$ then we need $\ora{P}$ to not just be any $(k-2)$-linear path from $x^*$ to $X$ but to be one that satisfies $X \not\subset e_L$: such a path will be deemed \textit{$(x^*,X)$-extendable}, because it can be prolonged by an edge that contains $X$ while preserving the $(k-2)$-linearity. An illustration is given in \Cref{Principle}.
\\ \indent So, what property must $\I_1$ have if we want to be able to accept any edge intersecting both $V(\I_1)$ and $V(\H)\setminus V(\I_1)$? As we have just seen, the existence of a $(k-2)$-linear path from $x^*$ to $x$ in $\I_1$ for all $x \in V(\I_1)$ is not sufficient. Additionally to this, we would need the existence of an $(x^*,X)$-extendable path in $\I_1$ for all $X \subset V(\I_1)$ of size $k-1$. If $\I_1$ satisfies these two properties, we will say $\I_1$ is an \textit{island with entry $\{x^*\}$}.

\begin{figure}[htbp]
	\centering
		\includegraphics[scale=.6]{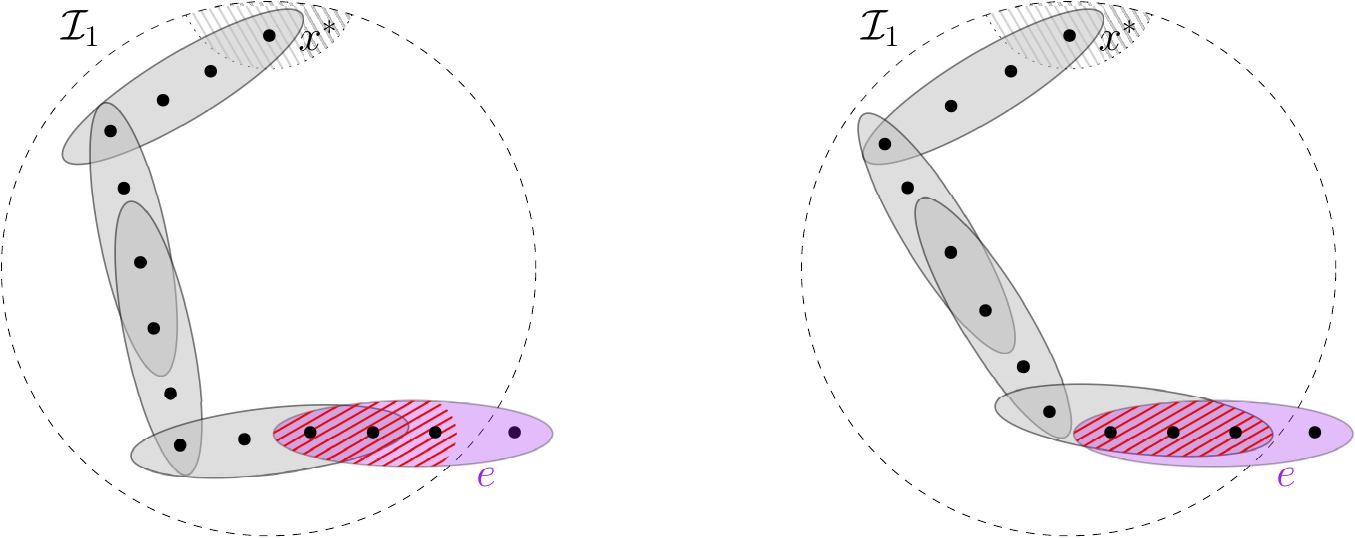}
	\caption{Here $k=4$ and $|X|=3$ (the red hatched area is $X$). The grey path from $x^*$ to $X$ on the left is $(x^*,X)$-extendable, but the one on the right is not because its final edge contains $X$ entirely.}\label{Principle}
\end{figure}

\indent However, the accepted edges might not always form an island. Suppose $\I_1$ is an island and we next discover an edge $e_0$ such that $|e_0 \cap V(\I_1)|=1$ (so we accept $e_0$) i.e. $e_0$ is of the form $e_0=\{x_1\} \cup \epsilon$ where $e_0 \cap V(\I_1)=\{x_1\}$ and $|\epsilon|=k-1$. Then the accepted edges do not form an island anymore: the only known $(k-2)$-linear paths from $x^*$ to $\epsilon$ use $e_0$ so they contain $\epsilon$ entirely, meaning they are not $(x^*,\epsilon)$-extendable. Suppose the next few accepted edges form a subhypergraph $\I_2$ that contains $\epsilon$ but is disjoint from $\I_1$, such that for all $x \in V(\I_2)$ there exists a $(k-2)$-linear path from $\epsilon$ to $x$ in $\I_2$. The algorithm now encounters some edge $e$ whose known vertices are in $\I_2$ (see \Cref{Principle2}): should we accept $e$? Let $X \defeq e \cap V(\I_2)$ and $y \in e\setminus X$. The only way to reach $y$ from $x^*$ is via $\ora{R} \defeq \ora{P} \oplus (e_0) \oplus \ora{Q} \oplus (e)$ where $\ora{P}$ is a $(k-2)$-linear path from $x^*$ to $x_1$ in $\I_1$ and $\ora{Q}$ is a $(k-2)$-linear path from $\epsilon$ to $X$ in $\I_2$. We know such a $\ora{P}$ exists, however there are conditions on $\ora{Q}=(e_1,\ldots,e_L)$ for $\ora{R}$ to be $(k-2)$-linear:
\begin{itemize}[noitemsep,nolistsep]
	\item As before, if $|X|=k-1$ then we need $X \not\subset e_L$.
	\item Since $\epsilon \subset e_0$, we also need $\epsilon \not\subset e_1$.
\end{itemize}
\hphantom{\indent}Such a path $\ora{Q}$ will be deemed \textit{$(\epsilon,X)$-extendable} (this time, there are conditions at both ends of the path). In conclusion, to be able to accept any such $e$, we would need the existence of an $(\epsilon,X)$-extendable path in $\I_2$ for all $X \subset V(\I_2)$ of size at most $k-1$. If $\I_2$ satisfies these two properties, we will say $\I_2$ is an \textit{island with entry $\epsilon$}.

\begin{figure}[htbp]
	\centering
	\includegraphics[scale=.6]{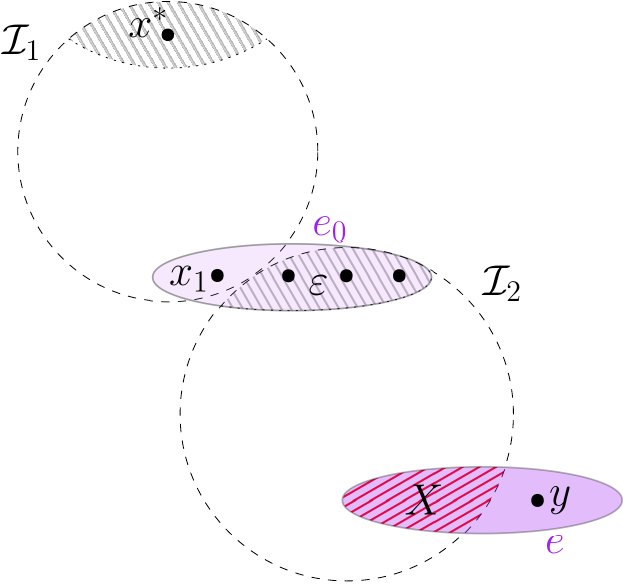}
	\caption{Here $k=4$ so $|\epsilon|=3$.}\label{Principle2}
\end{figure}

\indent We see the premises of the \textit{archipelago} structure of $\H[LCC^{\,k-2}_{\H}(x^*)]$, which we are going to establish.

\subsubsection{Definitions}

\hphantom{\indent}We now give the formal definitions that we are going to use.

\begin{mydefinition}
	Let $X,Y \subseteq V(\H)$ such that $1 \leq |X|,|Y| \leq k-1$ and $|X \cap Y| \leq k-2$. An \textit{$(X,Y)$-extendable path} in $\H$ is a $(k-2)$-linear path $\ora{P}=(e_1,\ldots,e_L)$ from $X$ to $Y$ in $\H$ with the additional property if $L \geq 1$ that $|e_1 \cap X| \leq k-2$ and $|e_L \cap Y| \leq k-2$.
\end{mydefinition}

\begin{figure}[htbp]
	\centering
	\includegraphics[scale=.6]{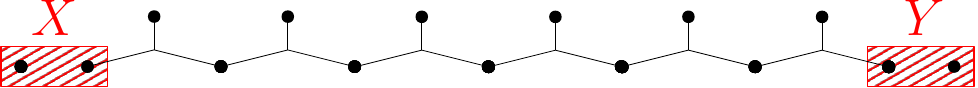}
	\caption{An $(X,Y)$-extendable path in the case $k=3$: the path contains exactly one vertex of $X$ and one vertex of $Y$.}\label{ExtendablePath}
\end{figure}

\indent Note that the condition on $X$ is empty if $|X|\leq k-2$: it is only when $|X|=k-1$ that we need to make sure that prolonging $\ora{P}$ with an edge containing $X$ maintains the $(k-2)$-linearity (same for $Y$). Therefore, if $|X|,|Y| \leq k-2$, then an $(X,Y)$-extendable path is simply a $(k-2)$-linear path from $X$ to $Y$. It is also important to keep in mind that the definition is dependent on $X$ and $Y$: we do not define an "extendable path", we define an "$(X,Y)$-extendable path".

\begin{mydefinition}
	Let $\I$ be a subhypergraph of $\H$ and $\epsilon \subset V(\I)$ such that $1 \leq |\epsilon| \leq k-1$. We say $\I$ is an \textit{island with entry $\epsilon$} if, for all $X \subset V(\I)$ satisfying $1 \leq |X| \leq k-1$ (and $X \neq \epsilon$ if $|\epsilon|=k-1$), there exists an $(\epsilon,X)$-extendable path in $\I$.
\end{mydefinition}

\begin{myexample}
	The \textit{empty island with entry $\epsilon \subset V(\H)$}, where $1 \leq |\epsilon| \leq k-1$, is the island $\I$ with entry $\epsilon$ defined by $V(\I)=\epsilon$ and $E(\I)=\varnothing$. It is an island because, for all $X \subset V(\I)$ satisfying $1 \leq |X| \leq k-1$ (and $X \neq \epsilon$ if $|\epsilon|=k-1$), $\ora{P}=()$ is an $(\epsilon,X)$-extendable path in $\I$. This example is illustrated at the far left of \Cref{Islands}. 
\end{myexample}

\begin{figure}[htbp]
	\centering
	\includegraphics[scale=.6]{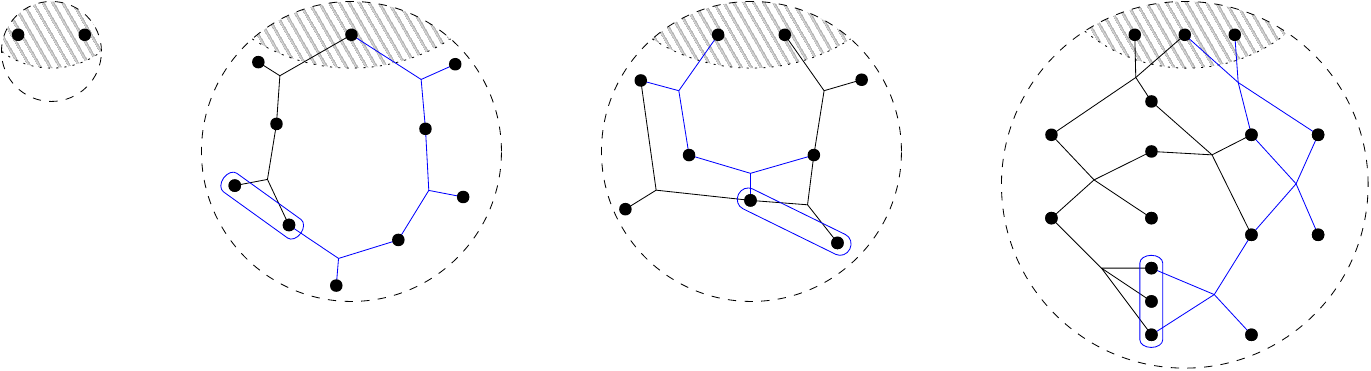}
	\caption{Some islands for $k=3$, except the far right one where $k=4$ (with the same "claw" representation for edges). The grey hatched area will always represent the entry. For three of them, we show an $(\epsilon,X)$-extendable path (in blue) for some $X$ of size $k-1$ (circled in blue).}\label{Islands}
\end{figure}

\subsubsection{Extension lemmas}

\hphantom{\indent}The notion of $(X,Y)$-extendable path has been introduced to prolong and compose $(k-2)$-linear paths. In that direction, we now prove two useful lemmas which are illustrated in \Cref{ExtensionLemma1,ExtensionLemma3}.

\begin{mylemma}\label[mylemma]{Lemma1}
	Let $A,B \subseteq V(\H)$ such that $1 \leq |A|,|B| \leq k-1$ and $|A \cap B| \leq k-2$, and let $\ora{P}$ be an $(A,B)$-extendable path.
	\begin{itemize}[noitemsep,nolistsep]
		\item If $B' \supset B$ is such that $|B'| \leq k-1$ and $B' \cap (A \cup V(\ora{P}) \cup B) = B$, then $\ora{P}$ is an $(A,B')$-extendable path.
		\item If $A' \supset A$ is such that $|A'| \leq k-1$ and $A' \cap (A \cup V(\ora{P}) \cup B) = A$, then $\ora{P}$ is an $(A',B)$-extendable path.
	\end{itemize}
\end{mylemma}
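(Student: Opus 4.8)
The plan is to verify everything directly against \Cref{def-q-linear} and the definition of an $(X,Y)$-extendable path; the argument is short, and by the obvious symmetry — reversing the edge sequence of a $(k-2)$-linear path from $A$ to $B$ yields one from $B$ to $A$ and swaps the two size conditions defining extendability — it suffices to prove the first bullet. So assume $\ora{P}=(e_1,\ldots,e_L)$ is $(A,B)$-extendable and $B'\supseteq B$ with $|B'|\leq k-1$ and $B'\cap(A\cup V(\ora{P})\cup B)=B$. I read the last hypothesis as saying that the ``new'' vertices $B'\setminus B$ avoid both $A$ and all of $V(\ora{P})$; concretely it yields $A\cap B'=A\cap B$ and $e_i\cap B'=e_i\cap B$ for every $i$ (and $|B'|\geq|B|\geq 1$, so the pair $(A,B')$ has admissible cardinalities).

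First I would treat the degenerate case $L=0$. Then $\ora{P}=()$, which, being a $(k-2)$-linear path from $A$ to $B$, forces $A\cap B\neq\varnothing$; hence $A\cap B'=A\cap B$ is nonempty, while $|A\cap B'|=|A\cap B|\leq k-2$, so $()$ is a legitimate $(k-2)$-linear path from $A$ to $B'$, and the two extra size conditions are vacuous. Thus $\ora{P}$ is $(A,B')$-extendable.

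For the main case $L\geq 1$, note $A\cap B=\varnothing$, so $A\cap B'=\varnothing$ and $(A,B')$ is admissible; it remains to check conditions (i) and (ii) of \Cref{def-q-linear} with target $B'$, plus $|e_1\cap A|\leq k-2$ and $|e_L\cap B'|\leq k-2$. Condition (i) and the bound on $|e_1\cap A|$ involve only $A$ and $\ora{P}$, so they carry over unchanged from the $(A,B)$-extendability of $\ora{P}$. For (ii): $B'\cap e_L\supseteq B\cap e_L\neq\varnothing$, and if $L\geq 2$ then $B'\cap e_i=B\cap e_i=\varnothing$ for $1\leq i\leq L-1$; finally $|e_L\cap B'|=|e_L\cap B|\leq k-2$. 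This completes the verification. The only step that needs a flicker of attention is the case $L=0$, where one must first extract from the emptiness of $\ora{P}$ that $A\cap B\neq\varnothing$ before transporting this intersection along $B\subseteq B'$; everything else is bookkeeping with the set identities above.
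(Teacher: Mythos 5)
Your proof is correct and follows essentially the same route as the paper's: reduce to the first bullet by symmetry, observe that the hypothesis on $B'$ gives $A\cap B'=A\cap B$ and $e_i\cap B'=e_i\cap B$, and then check the definition in the empty and nonempty cases (the paper cases on whether $A\cap B'=\varnothing$ rather than on $L$, which is equivalent). Nothing is missing.
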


\begin{proof}
	By symmetry, we only need to prove the first assertion. First notice that $A \cap B = A \cap B'$, so that $|A \cap B'| \leq k-2$ as required in \Cref{def-q-linear}.
	\begin{itemize}[noitemsep,nolistsep]
		\item If $A \cap B' \neq \varnothing$ then $A \cap B \neq \varnothing$, hence $\ora{P}=()$ which is an $(A,B')$-extendable path.
		\item If $A \cap B' = \varnothing$ then $A \cap B = \varnothing$, so we can write $\ora{P}=(e_1,\ldots,e_L)$ where $L \geq 1$. We already know $\ora{P}$ is $(k-2)$-linear, moreover the assumption on $B'$ ensures that $\ora{P}$ is from $A$ to $B'$. Finally, since $\ora{P}$ is $(A,B)$-extendable and $e_L \cap B'=e_L \cap B$, we have $|e_1 \cap A| \leq k-2$ and $|e_L \cap B'|=|e_L \cap B| \leq k-2$, therefore $\ora{P}$ is $(A,B')$-extendable. \qedhere
	\end{itemize}
\end{proof}

\begin{figure}[htbp]
	\centering
	\includegraphics[scale=.49]{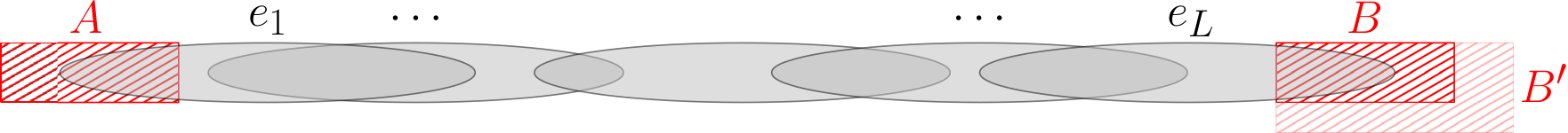}
	\caption{Illustration of \Cref{Lemma1}.}\label{ExtensionLemma1}
\end{figure}
	
\begin{mylemma}\label[mylemma]{Lemma3}
	Let $A,B \subseteq V(\H)$ such that $1 \leq |A|,|B| \leq k-1$ and $|A \cap B| \leq k-2$, and let $\ora{P}$ be an $(A,B)$-extendable path. Let $C,D \subseteq V(\H)$ such that $1 \leq |C|,|D| \leq k-1$ and $|C \cap D| \leq k-2$, and let $\ora{Q}$ be a $(C,D)$-extendable path. We assume that $A \cup V(\ora{P}) \cup B$ and $C \cup V(\ora{Q}) \cup D$ are disjoint. If $e \in E(\H)$ satisfies $e \cap (A \cup V(\ora{P}) \cup B) = B$ and $e \cap (C \cup V(\ora{Q}) \cup D) = C$, then $\ora{P} \oplus (e) \oplus \ora{Q}$ is an $(A,D)$-extendable path.
\end{mylemma}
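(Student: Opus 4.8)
The plan is to set $\ora R \defeq \ora P \oplus (e) \oplus \ora Q$ and verify directly the three requirements making $\ora R$ an $(A,D)$-extendable path: that $\ora R$ is a $(k-2)$-linear path, that it is a path from $A$ to $D$ in the sense of \Cref{def-q-linear}, and that its first edge meets $A$ in at most $k-2$ vertices and its last edge meets $D$ in at most $k-2$ vertices. Write $\ora P=(e_1,\ldots,e_L)$ and $\ora Q=(f_1,\ldots,f_M)$ (allowing $L=0$ or $M=0$), and abbreviate $P^+\defeq A\cup V(\ora P)\cup B$ and $Q^+\defeq C\cup V(\ora Q)\cup D$, which are disjoint by hypothesis with $e\cap P^+=B$ and $e\cap Q^+=C$. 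Three observations I would record at the outset: since $B\subseteq P^+$ we get $B\subseteq e$, and symmetrically $C\subseteq e$; since $A\cap D\subseteq P^+\cap Q^+=\varnothing$ we have $|A\cap D|\le k-2$, and we must exhibit a path of length $\ge 1$, which $\ora R$ is because it contains $e$; and if $e$ coincided with an edge of $\ora P$, that edge would equal $e\cap P^+=B$, impossible since $|e|=k>k-1\ge|B|$, and likewise $e$ differs from every edge of $\ora Q$.

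The key computation is how $e$ meets the old edges. For any edge $g$ of $\ora P$ we have $g\cap e\subseteq P^+\cap e=B$, and since $B\subseteq e$ this forces $g\cap e=g\cap B$. As $\ora P$ is a $(k-2)$-linear path from $A$ to $B$, the set $B$ is disjoint from every edge of $\ora P$ except the last, $e_L$, which it meets in $1\le|e_L\cap B|\le k-2$ vertices --- the upper bound being exactly the $(A,B)$-extendability of $\ora P$. Hence $e$ is disjoint from every edge of $\ora P$ but $e_L$, and $1\le|e\cap e_L|\le k-2$; symmetrically $e$ is disjoint from every edge of $\ora Q$ but $f_1$, with $1\le|e\cap f_1|\le k-2$; and any edge of $\ora P$ is disjoint from any edge of $\ora Q$, being contained in the disjoint sets $P^+$ and $Q^+$. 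Splicing these facts together with the $(k-2)$-linearity of $\ora P$ and of $\ora Q$ taken separately shows that consecutive edges of $\ora R$ meet in between $1$ and $k-2$ vertices while non-consecutive ones are disjoint; that is, $\ora R$ is $(k-2)$-linear.

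It remains to check the ``from $A$ to $D$'' condition and the two endpoint bounds, and here I would split on whether $\ora P$ and $\ora Q$ are empty, the case $L,M\ge1$ serving as the template. When $L,M\ge1$: $A$ meets $e_1$, the first edge of $\ora R$, and no other edge of $\ora P$; $A\cap e\subseteq A\cap B=\varnothing$, the equality because $L\ge1$ forces $A\cap B=\varnothing$ via \Cref{def-q-linear}; and $A$ is disjoint from every $f_j$ and from $D$, all contained in $Q^+$. This is clause (i) of \Cref{def-q-linear}, clause (ii) is its mirror image, and the endpoint bounds $|e_1\cap A|\le k-2$ and $|f_M\cap D|\le k-2$ are precisely the extendability of $\ora P$ and of $\ora Q$. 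When $\ora P=()$ (so $A\cap B\ne\varnothing$), the first edge of $\ora R$ is $e$ itself and $A\cap e=A\cap B$, which is nonempty, has size $\le k-2$ by the standing hypothesis $|A\cap B|\le k-2$, and is disjoint from the later edges as before; the case $\ora Q=()$ is symmetric, and the case where both are empty combines the two. I expect the only real obstacle to be bookkeeping: tracking which edge plays the role of ``first'' and ``last'' edge of $\ora R$ across the degenerate cases, and invoking the single genuinely quantitative input --- that $|e_L\cap B|$ and $|f_1\cap C|$ are at most $k-2$ --- exactly at the spot where $e$ glues the two halves, every other condition being forced by the disjointness of $P^+$ and $Q^+$ together with the path structure of $\ora P$ and $\ora Q$.
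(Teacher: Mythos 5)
Your proposal is correct and follows essentially the same route as the paper's proof: the same splice $\ora{R}=\ora{P}\oplus(e)\oplus\ora{Q}$, the same key identity $g\cap e=g\cap B$ for edges $g$ of $\ora{P}$ (reducing everything to the disjointness of $A\cup V(\ora{P})\cup B$ and $C\cup V(\ora{Q})\cup D$ plus the single quantitative input $|e_L\cap B|,|f_1\cap C|\le k-2$), and the same case split on $L=0$ versus $L\ge 1$ for the endpoint conditions. The few extra details you record (that $e$ is distinct from the edges of $\ora{P}$ and $\ora{Q}$, and that $|A\cap D|\le k-2$) are sound and only make the argument more careful than the paper's.
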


\begin{proof}
	Write $\ora{P}=(e_1,\ldots,e_L)$ and $\ora{Q}=(e'_1,\ldots,e'_M)$, and define $\ora{R} \defeq \ora{P} \oplus (e) \oplus \ora{Q}$. Let us first check that $\ora{R}$ is a $(k-2)$-linear path. Any intersection between two edges of $\ora{R}$ is of one of four forms:
	\begin{enumerate}[noitemsep,nolistsep,label=(\arabic*)]
		\item $e_i \cap e_j$ or $e'_i \cap e'_j$.
			\\ Those are covered by the $(k-2)$-linearity of $\ora{P}$ and $\ora{Q}$ respectively.
		\item $e_i \cap e'_j$.
			\\ Those are empty because $V(\ora{P})$ and $V(\ora{Q})$ are disjoint by assumption.
		\item $e_i \cap e$ where $1 \leq i \leq L-1$ or $e \cap e'_i$ where $2 \leq i \leq M$.
			\\ By symmetry, we only address $e_i \cap e$. Since $\ora{P}$ is from $A$ to $B$, we know $e_i \cap B = \varnothing$. Moreover $e \cap V(\ora{P}) \subseteq B$ by assumption, so $e_i \cap e = \varnothing$.
		\item $e_L \cap e$ or $e \cap e'_1$.
			\\ By symmetry, we only address $e_L \cap e$. Since $\ora{P}$ is $(A,B)$-extendable, we know $|e_L \cap B| \leq k-2$, moreover the assumption on $e$ implies $e_L \cap e = e_L \cap B$ hence $|e_L \cap e| \leq k-2$.
	\end{enumerate}
	We now verify that $\ora{R}$ is from $A$ to $D$ and is $(A,D)$-extendable. By symmetry, we only show the conditions on $A$, for which we distinguish two cases:
	\begin{itemize}[noitemsep,nolistsep]
		\item If $L=0$, then the first edge of $\ora{R}$ is $e$. We have $A \cap e = A \cap B$ by the assumption on $e$, where $A \cap B \neq \varnothing$ (because $L=0$) and $|A \cap B| \leq k-2$ (by assumption), therefore $1 \leq |A \cap e| \leq k-2$. It remains to show that $A \cap e'_i = \varnothing$ for all $1 \leq i \leq M$, which is obvious since $A$ is disjoint from $V(\ora{Q})$.
		\item If $L \geq 1$, then the first edge of $\ora{R}$ is $e_1$. Since $\ora{P}$ is from $A$ to $B$, we have $A \cap e_1 \neq \varnothing$ and $A \cap e_i = \varnothing$ for all $2 \leq i \leq L$. Moreover $|A \cap e_1| \leq k-2$ because $\ora{P}$ is $(A,B)$-extendable. It remains to show that $A \cap e = \varnothing$, which is clear since $A \cap e \subseteq B$ and $A \cap B = \varnothing$ ($L \geq 1$), and that $A \cap e'_i = \varnothing$ for all $1 \leq i \leq M$, which is obvious since $A$ is disjoint from $V(\ora{Q})$. \qedhere
	\end{itemize}
\end{proof}

\begin{figure}[htbp]
	\centering
	\includegraphics[scale=.49]{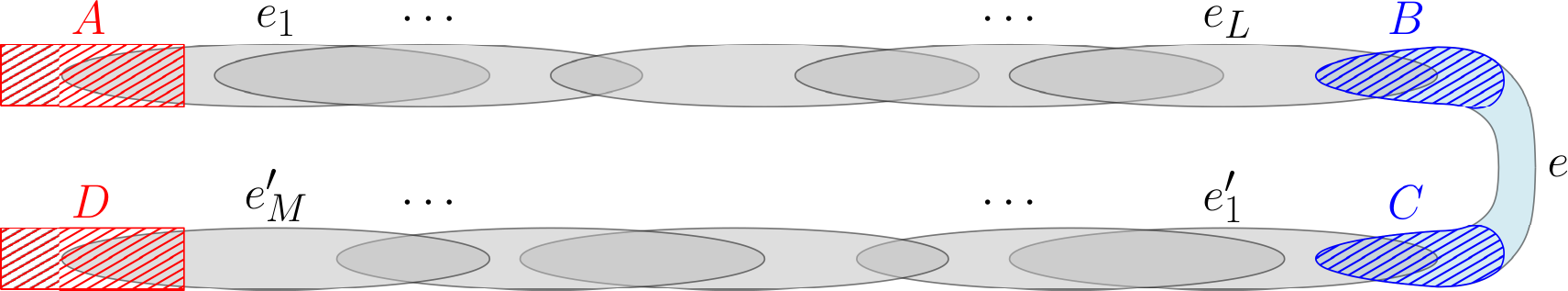}
	\caption{Illustration of \Cref{Lemma3}.}\label{ExtensionLemma3}
\end{figure}

\subsection{Archipelagos}

\hphantom{\indent}In this subsection, we fix some $x^* \in V(\H)$.

\subsubsection{Definition}

\begin{mydefinition}
	Let $\I$ and $\I'$ be disjoint islands in $\H$, where $\I'$ has an entry $\epsilon$ of size $k-1$. An edge $e \in E(\H)$ of the form $e=\{x\} \cup \epsilon$ for some $x \in V(\I)$ is called a \textit{crossing edge} from $\I$ to $\I'$. We denote by $C(\I,\I') \subseteq E(\H)$ the set of all crossing edges from $\I$ to $\I'$ in $\H$. If $\A$ is a subhypergraph of $\H$ containing $\I$ and $\I'$, we use the notation $C_{\A}(\I,\I') \defeq C(\I,\I') \cap E(\A)$.
\end{mydefinition}

\begin{myremark}
	The above definition depends on the choice of $\epsilon$ (an island might have several possible entries suiting the definition). However, we will always specify the entries when defining islands and therefore consider crossing edges for those specific entries.
\end{myremark}

\begin{mydefinition}\label[mydefinition]{def-archipelago}
	An \textit{$x^*$-archipelago} is a subhypergraph $\A$ of $\H$ such that there exist subhypergraphs $\I_1,\ldots,\I_N$ of $\A$ that are pairwise-disjoint islands with respective entries $\epsilon_1,\ldots,\epsilon_N$ satisfying the following properties:
	\begin{itemize}[noitemsep,nolistsep]
		\item $\epsilon_1=\{x^*\}$.
		\item $|\epsilon_i|=k-1$ for all $2 \leq i \leq N$.
		\item $V(\A)=V(\I_1)\cup\ldots\cup V(\I_N)$.
		\item All edges in $E(\A)\setminus(E(\I_1)\cup\ldots\cup E(\I_N))$ are crossing edges between some of the $\I_i$, such that the digraph $G$ defined by $V(G)=\{\I_1,\ldots,\I_N\}$ and $E(G)=\{(\I_i,\I_j),C_{\A}(\I_i,\I_j) \neq \varnothing\}$ contains a spanning arborescence rooted at $\I_1$. If $G$ is exactly a spanning arborescence rooted at $\I_1$, we say $\A$ is an \textit{arborescent $x^*$-archipelago}.
	\end{itemize}
	Since $x^*$ is fixed, we usually call $\A$ an \textit{archipelago} for short.
\end{mydefinition}

\begin{myremark}
	By definition of a crossing edge, there cannot exist a crossing edge from some $\I_i$ to $\I_1$ in an archipelago since $|\epsilon_1|=1 \neq k-1$. In other words, $\I_1$ has in-degree zero in $G$.
\end{myremark}

\indent Therefore, an archipelago is a union of pairwise-disjoint islands and crossing edges between some of them, satisfying specific properties. See \Cref{Archipelago} for an example (for clarity, we will use $k=3$ for all figures from now on). We will later see that an archipelago has a unique decomposition in islands, but for now we have to give ourselves islands and entries suiting the definition whenever we consider an archipelago.

\begin{figure}[htbp]
	\centering
	\includegraphics[width=1\textwidth]{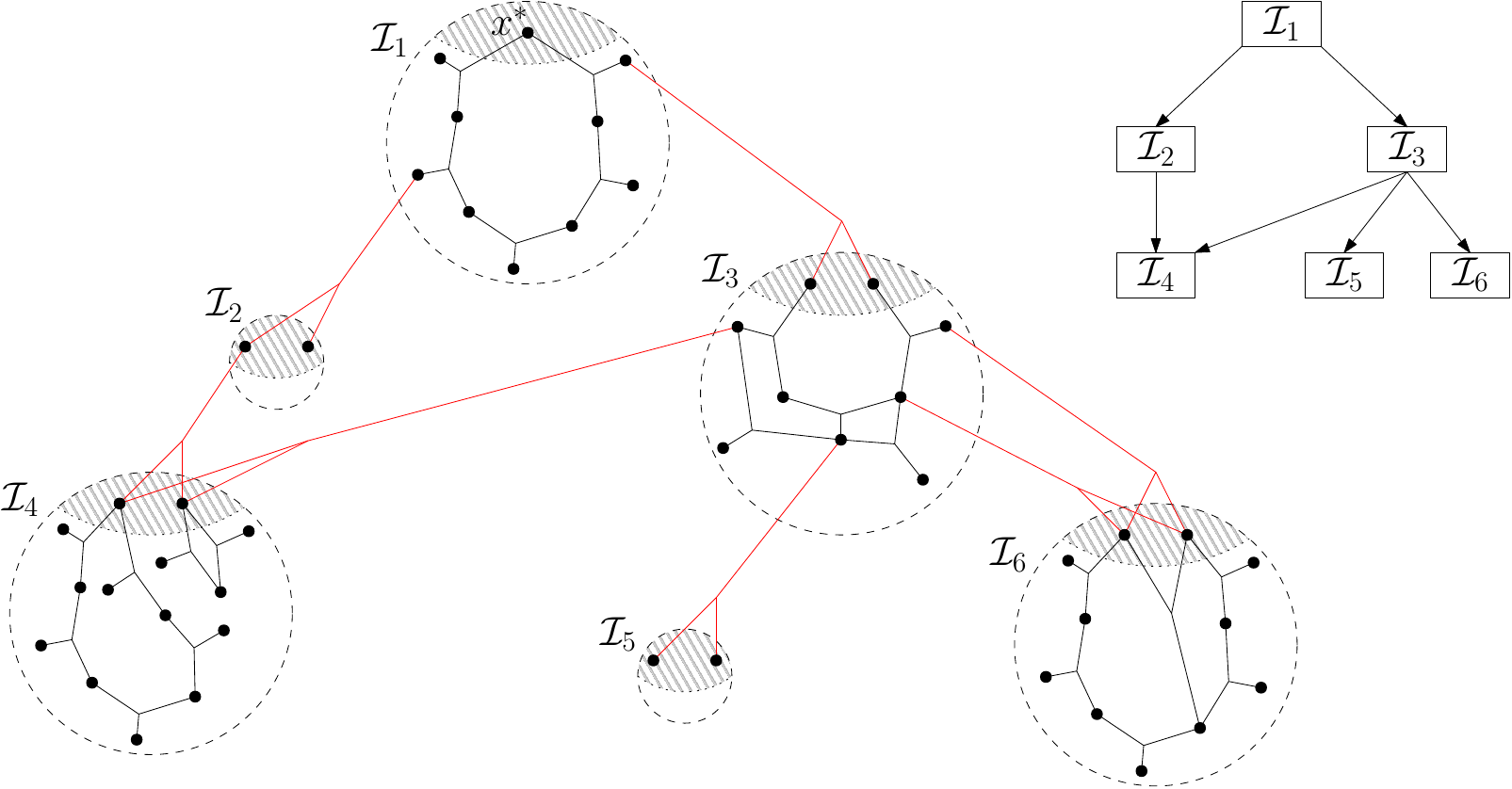}
	\caption{An archipelago which is not arborescent (with the digraph $G$ on the right). Crossing edges will always be represented in red.}\label{Archipelago}
\end{figure}

\subsubsection{Properties}

\hphantom{\indent}The next two results show how $(k-2)$-linear paths in $\A$ are related to paths in the digraph $G$. Obviously, by definition of an archipelago, a $(k-2)$-linear path in $\A$ starting from $x^*$ necessarily visits successive islands, using crossing edges to jump from one island to another. The following proposition states that, additionally, a crossing edge can only be used in one direction which is given by the digraph $G$, therefore each island is entered through its entry (hence the terminology) and it is impossible to reenter an island after leaving it.

\begin{mydefinition}
	Let $G$ be a digraph and let $v,v' \in V(G)$. A \textit{path} from $v$ to $v'$ in $G$ is a sequence denoted by $v=v_1 \to v_2 \to \ldots \to v_l=v'$ ($l \geq 1$) where $v_1,\ldots,v_l \in V(G)$ are pairwise distinct and $(v_i,v_{i+1}) \in E(G)$ for all $1 \leq i \leq l-1$.
\end{mydefinition}

\begin{myproposition}\label[myproposition]{prop-entry1}
	Let $\A$ be an archipelago, with $\I_1,\ldots,\I_N,\epsilon_1,\ldots,\epsilon_N,G$ suiting the definition. Let $\ora{P}=(e_1,\ldots,e_L)$ be a $(k-2)$-linear path from $x^*$ to some $x \in V(\I_i)$ ($1 \leq i \leq N$) in $\A$. Then the islands visited by $\ora{P}$ form a path $\I_1=\I_{i_1} \to \ldots \to \I_{i_M}=\I_i$ in $G$, and $\ora{P}$ is of the form $\ora{P}=\ora{P_1} \oplus (e_{1,2}) \oplus \ora{P_2} \oplus (e_{2,3}) \oplus \ldots \oplus \ora{P_{M-1}} \oplus (e_{M-1,M}) \oplus \ora{P_M}$ where:
	\begin{itemize}[noitemsep,nolistsep]
		\item For all $1 \leq p \leq M$: $E(\ora{P_p}) \subseteq E(\I_{i_p})$.
		\item For all $2 \leq p \leq M$: $e_{p-1,p} \in C_{\A}(\I_{i_{p-1}},\I_{i_p})$.
	\end{itemize}
	In particular, if $L \geq 1$, then for all $1 \leq p \leq M$ there is an edge of $\ora{P}$ that contains $\epsilon_{i_p}$.
\end{myproposition}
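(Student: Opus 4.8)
The plan is to dissect the edge sequence $\ora{P}=(e_1,\ldots,e_L)$ into its island-internal parts and its crossing edges. First I would dispose of $L=0$: then $x=x^*$, so $\I_i=\I_1$ and the conclusion holds trivially with $M=1$ and $\ora{P_1}=()$. Assume $L\ge 1$. Observe that every edge of $\A$ is either \emph{internal} --- contained in $V(\I_l)$ for a unique $l$, since $E(\I_l)$ consists of subsets of $V(\I_l)$ and the islands are pairwise disjoint --- or a \emph{crossing edge}, these being mutually exclusive; in the crossing case its source island and target island are uniquely determined (two disjoint entries of size $k-1$ cannot both sit inside a $k$-set, as $k\ge 3$). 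Write $\ora{P}=\ora{R_0}\oplus(c_1)\oplus\ora{R_1}\oplus\cdots\oplus(c_r)\oplus\ora{R_r}$ where $c_1,\ldots,c_r$ are the crossing edges of $\ora{P}$ in order of appearance and each $\ora{R_p}$ is the (possibly empty) maximal run of internal edges between them. Two consecutive internal edges meet, hence lie in the same island; iterating, each nonempty $\ora{R_p}$ is internal to a single island.

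The first crucial observation is that the \emph{targets} of $c_1,\ldots,c_r$ are pairwise distinct: two crossing edges with target $\I_l$ both contain the $(k-1)$-set $\epsilon_l$, so they would intersect in at least $k-1$ vertices, which is impossible for two edges of a $(k-2)$-linear path. I would also record two elementary adjacency facts, obtained by intersecting edges and using the island partition: (a) if an internal edge of $\I_a$ is consecutive in $\ora{P}$ with a crossing edge of source $\I_{a'}$ and target $\I_b$, then $a\in\{a',b\}$; (b) two consecutive crossing edges of $\ora{P}$ have distinct targets, and a short enumeration of how their (at most $k-2$ common) vertices can be placed shows their source/target islands must coincide in one of three explicit patterns.

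The core of the argument is an induction on $p$ showing that, writing $i_1\defeq 1$ and letting $i_{p+1}$ be the target of $c_p$: the source of $c_p$ is $\I_{i_p}$ (so $c_p\in C_{\A}(\I_{i_p},\I_{i_{p+1}})$), the indices $i_1,\ldots,i_{p+1}$ are pairwise distinct, and $\ora{R_p}$ (when nonempty) is internal to $\I_{i_{p+1}}$; symmetrically $\ora{R_0}$ (when nonempty) is internal to $\I_1$. For the base case, $x^*\in e_1$ together with the fact that $\I_1$ has an entry of size $1\neq k-1$ --- so it can never be the target of a crossing edge --- force the source of $c_1$ to be $\I_1$. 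For the inductive step, fact (a) or fact (b) restricts the source of $c_{p+1}$ and the island of $\ora{R_p}$ to $\{i_p,i_{p+1}\}$ (or to a new target), and the pieces in "wrong" positions are eliminated as follows: a configuration returning to an already-visited island would place some vertex --- either $x^*$ itself, or the unique vertex of a crossing edge lying in that island --- in two \emph{non-consecutive} edges of $\ora{P}$, contradicting $(k-2)$-linearity; and a configuration with a repeated target is excluded by the first crucial observation. I expect this step to be the main obstacle: the bookkeeping is delicate when a run $\ora{R_p}$ is empty (two crossing edges in a row) and at the ends of the path ($L$ small, or $\ora{R_0}$ or $\ora{R_r}$ empty), and each of those situations needs to be checked separately.

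It remains to assemble the statement. The vertices of $\ora{P}$ all lie in $V(\I_{i_1})\cup\cdots\cup V(\I_{i_{r+1}})$, each of these islands is met, and $\I_{i_1}\to\cdots\to\I_{i_{r+1}}$ is a path in $G$ because every $C_{\A}(\I_{i_p},\I_{i_{p+1}})$ is nonempty; so the islands visited by $\ora{P}$ are exactly $\I_{i_1},\ldots,\I_{i_{r+1}}$ and they form such a path. Using that $\ora{P}$ is a path \emph{from $x^*$ to $x$} --- hence $x\in e_L$ but $x$ belongs to no earlier edge of $\ora{P}$ --- one checks that $x$ cannot be the source vertex of $c_r$, so $\I_i=\I_{i_{r+1}}$. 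Taking $M\defeq r+1$, $\ora{P_p}\defeq\ora{R_{p-1}}$ and $e_{p-1,p}\defeq c_{p-1}$ yields the displayed decomposition with $E(\ora{P_p})\subseteq E(\I_{i_p})$ and $e_{p-1,p}\in C_{\A}(\I_{i_{p-1}},\I_{i_p})$. Finally, when $L\ge 1$ the last assertion is immediate: for $p\ge 2$ the crossing edge $e_{p-1,p}$ contains $\epsilon_{i_p}$ by definition, and for $p=1$ we have $\epsilon_{i_1}=\{x^*\}\subseteq e_1$.
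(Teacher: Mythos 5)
Your proposal is sound and would go through, but it is organized quite differently from the paper's proof. You first classify the edges of $\A$ into internal and crossing edges, cut $\ora{P}$ into maximal internal runs separated by crossing edges, and then induct over the crossing edges, excluding the bad configurations (a repeated target, or a run bouncing back into the previous island) by exhibiting either an entry of size $k-1$ contained in two edges of $\ora{P}$ or a vertex contained in two non-consecutive edges. The paper instead runs a single induction on $L$: it applies the induction hypothesis to the prefix $(e_1,\ldots,e_{L-1})$ viewed as a path from $x^*$ to a vertex $y\in e_{L-1}\cap e_L$, and then only has to decide whether $e_L$ extends the current run or is a crossing edge; both the wrong orientation of $e_L$ and the revisiting of an island are killed by the single observation that an entry of size $k-1$ cannot lie in two distinct edges of a $(k-2)$-linear path, and the bounce-back configurations never even arise because the induction hypothesis already places $y$ in the correct island. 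The two arguments rest on the same pillars (vertex-disjointness of the islands, and the fact that any two edges of the path meet in at most $k-2$ vertices), but the paper's edge-by-edge induction absorbs precisely the case analysis you flag as delicate. Those cases do all check out --- the vertex forced into two non-consecutive edges is the unique source vertex of the relevant crossing edge, or $x^*$ itself when the offending edge is $e_1$, in which case the contradiction is with the definition of a path from $x^*$ rather than with $(k-2)$-linearity --- and your identification of the final island (the endpoint $x$ cannot be the source vertex of $c_r$ since that vertex already lies in $e_{L-1}$, or equals $x^*$ when $L=1$) is a detail the paper gets for free from its formulation; so your route is correct but noticeably heavier once written in full.
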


\begin{proof}
	That last assertion is clear: for $p=1$ we have $\epsilon_{i_p}=\epsilon_1=\{x^*\}\subset e_1$, and for $p \geq 2$ we have $\epsilon_{i_p} \subset e_{p-1,p}$ by definition of $C_{\A}(\I_{i_{p-1}},\I_{i_p})$. Let us now prove the main assertion.
	\\ We proceed by induction on $L$. The case $L=0$ is trivial: we have $x=x^*$ so we can set $M=1$ and $\ora{P_1}=\ora{P}=()$. Let $L \geq 1$ and assume the result to be true for all $(k-2)$-linear paths that are shorter than $\ora{P}$. The idea is to separate two simple cases: either we are currently visiting the island $\I_i$ (case $e_L \in E(\I_i)$) or we have just jumped onto $\I_i$ from another island (case $e_L \not\in E(\I_i)$).
	\\ Let $y \in e_{L-1} \cap e_L$ if $L \geq 2$, or define $y=x^*$ if $L=1$ , so that in both cases $\ora{Q}\defeq(e_1,\ldots,e_{L-1})$ is a $(k-2)$-linear path from $x^*$ to $y$ in $\A$. We have $y \in V(\I_j)$ for some $1 \leq j \leq N$. By the induction hypothesis, there exists a path $\I_1=\I_{i_1} \to \ldots \to \I_{i_M}=\I_j$ in $G$ such that we can write $\ora{Q}=\ora{Q_1} \oplus (e_{1,2}) \oplus \ora{Q_2} \oplus (e_{2,3}) \oplus \ldots \oplus \ora{Q_{M-1}} \oplus (e_{M-1,M}) \oplus \ora{Q_M}$ where $E(\ora{Q_p}) \subseteq E(\I_{i_p})$ for all $1 \leq p \leq M$ and $e_{p-1,p} \in C_{\A}(\I_{i_{p-1}},\I_{i_p})$ for all $2 \leq p \leq M$.
	\begin{figure}[htbp]
		\centering
		\includegraphics[width=1\textwidth]{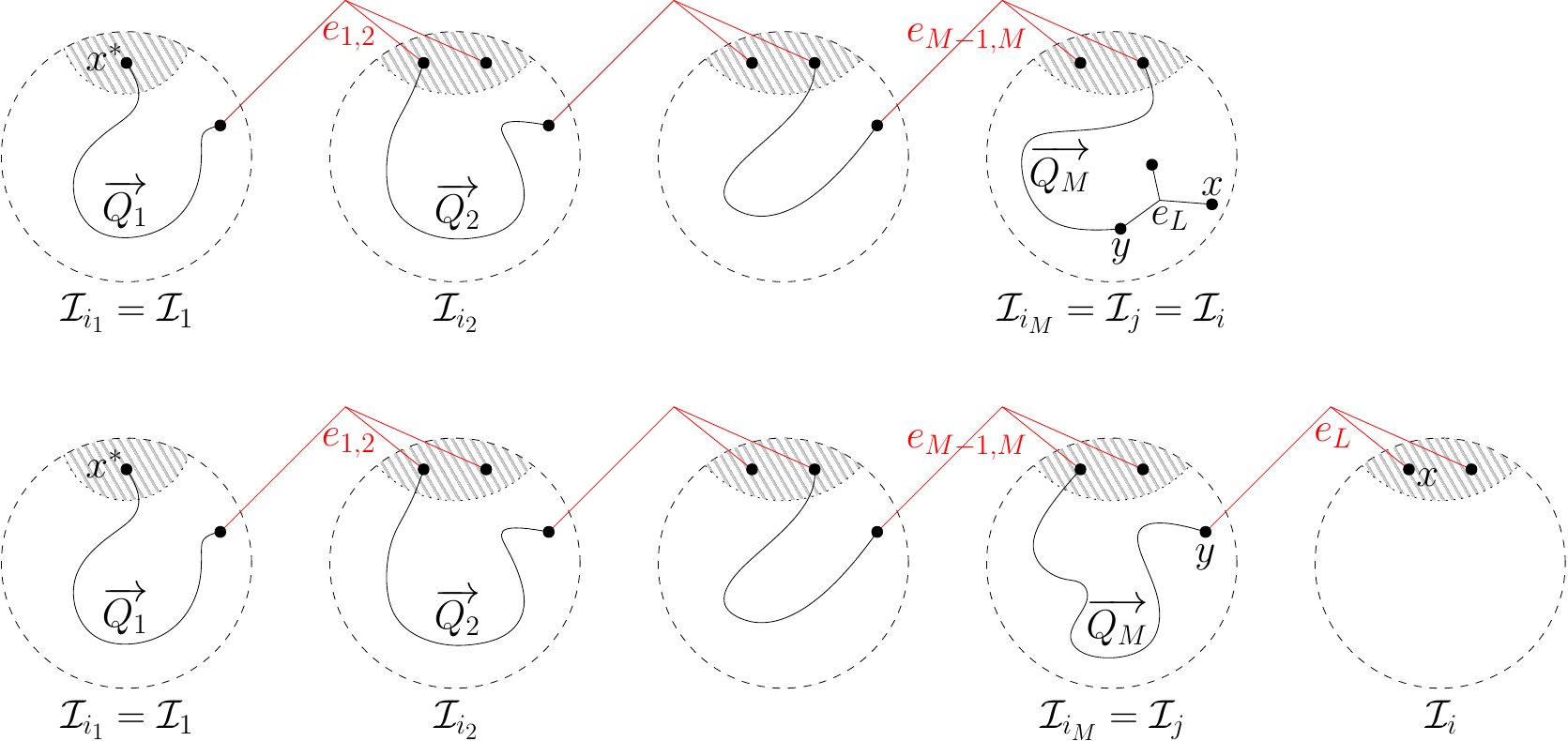}
		\caption{Top: $e_L \in E(\I_i)$. Bottom: $e_L \not\in E(\I_i)$.}\label{Path}
    \end{figure}

	\begin{itemize}
		\item First suppose that $e_L \in E(\I_i)$ (see \Cref{Path}, top). Since $y \in e_L$, this implies $i=j$, so $\ora{P_M} \defeq \ora{Q_M} \oplus (e_L)$ satisfies $E(\ora{P_M}) \subseteq E(\I_i)$. Therefore, the following writing of $\ora{P}$ completes the proof: $\ora{P}=\ora{Q} \oplus (e_L) = \ora{Q_1} \oplus (e_{1,2}) \oplus \ora{Q_2} \oplus (e_{2,3}) \oplus \ldots \oplus \ora{Q_{M-1}} \oplus (e_{M-1,M}) \oplus \ora{P_M}$.
		\item Now suppose $e_L \not\in E(\I_i)$ (see \Cref{Path}, bottom), then by definition of an archipelago we have either $e_L \in C_{\A}(\I_i,\I_j)$ or $e_L \in C_{\A}(\I_j,\I_i)$.  
			\\ Suppose for a contradiction that $e_L \in C_{\A}(\I_i,\I_j)$ i.e. $e_L=\{x\} \cup \epsilon_j$: in particular $j \neq 1$ (and $|\epsilon_j|=k-1$), so the fact that $\epsilon_j \subset e_{M-1,M}$ contradicts the $(k-2)$-linearity of $\ora{P}$ since $\epsilon_j \subset e_L$.
			\\ Therefore $e_L \in C_{\A}(\I_j,\I_i)$. In particular $i \neq 1$ (and $|\epsilon_i|=k-1$), so it is impossible that $\I_i$ has been visited before: if we had $i \in \{i_1,\ldots,i_M\}$ then some edge of $\ora{Q}$ would contain $\epsilon_i$ which would contradict the $(k-2)$-linearity of $\ora{P}$ once again. Setting $i_{M+1} \defeq i$, this ensures that the islands visited by $\ora{P}$ form a path $\I_1=\I_{i_1} \to \ldots \to \I_{i_M}=\I_j \to \I_{i_{M+1}}= \I_i$ in $G$, and we can write $\ora{P}=\ora{Q}\oplus(e_{M,M+1})\oplus\ora{P_{M+1}}$ where $e_{M,M+1} \defeq e_L \in C_{\A}(\I_{i_M},\I_{i_{M+1}})$ and $\ora{P_{M+1}} \defeq ()$, which concludes. \qedhere
	\end{itemize}
\end{proof}	

\indent Conversely, paths in $G$ yield $(k-2)$-linear paths in $\A$. The following proposition is a generalization to archipelagos of the property that defines an island.

\begin{figure}[htbp]
	\centering
	\includegraphics[scale=.58]{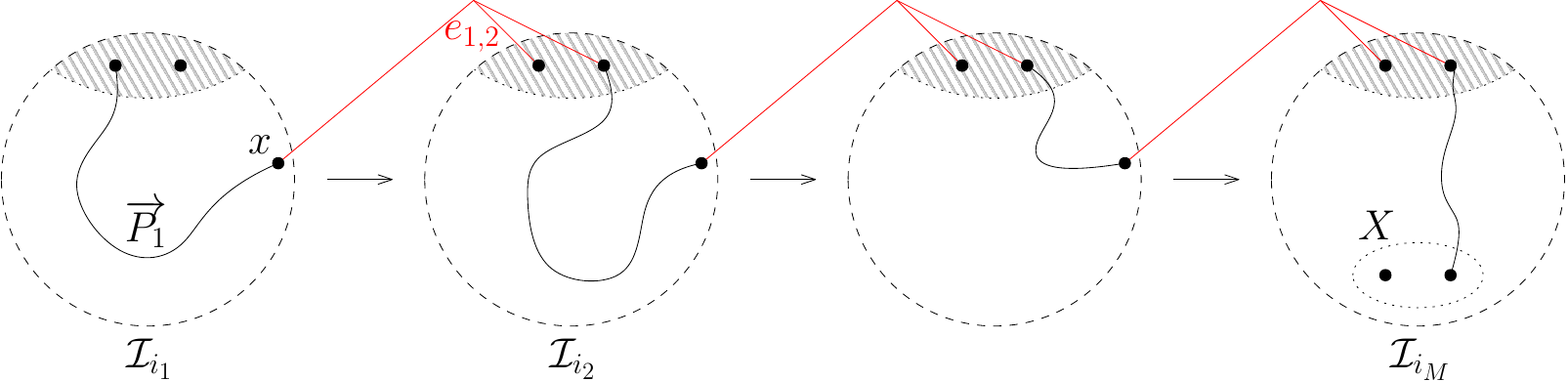}
	\caption{An $(\epsilon_{i_1},X)$-extendable path in an archipelago.}\label{RightWay}
\end{figure}

\begin{myproposition}\label[myproposition]{prop-path1}
	Let $\A$ be an archipelago, with $\I_1,\ldots,\I_N,\epsilon_1,\ldots,\epsilon_N,G$ suiting the definition. Let $X \subset V(\A)$ such that $1 \leq |X| \leq k-1$ and $X \not\in \{\epsilon_2,\ldots,\epsilon_N\}$. For all $1 \leq j \leq N$ and for every path $\I_j=\I_{i_1} \to \ldots \to \I_{i_M}$ in $G$ satisfying $X \cap V(\I_{i_M}) \neq \varnothing$ and $X \cap V(\I_{i_p}) = \varnothing$ for all $1 \leq p \leq M-1$, there exists an $(\epsilon_j,X)$-extendable path $\ora{P}$ in $\A$ of the form $\ora{P}=\ora{P_1} \oplus (e_{1,2}) \oplus \ora{P_2} \oplus (e_{2,3}) \oplus \ldots \oplus \ora{P_{M-1}} \oplus (e_{M-1,M}) \oplus \ora{P_M}$ where:
	\begin{itemize}[noitemsep,nolistsep]
		\item For all $1 \leq p \leq M$: $E(\ora{P_p}) \subseteq E(\I_{i_p})$.
		\item For all $2 \leq p \leq M$: $e_{p-1,p} \in C_{\A}(\I_{i_{p-1}},\I_{i_p})$.
	\end{itemize}
\end{myproposition}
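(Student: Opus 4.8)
The plan is to proceed by induction on the length $M$ of the path in $G$, using the extension lemmas (\Cref{Lemma1,Lemma3}) to glue together the pieces coming from the individual islands. The base case $M=1$ is exactly the defining property of an island: since $\I_j$ is an island with entry $\epsilon_j$ and $X \subset V(\I_j)$ satisfies $1 \leq |X| \leq k-1$ with $X \neq \epsilon_j$ (because $X \notin \{\epsilon_2,\ldots,\epsilon_N\}$ and $\epsilon_1 = \{x^*\}$ has size $1 < k-1$, so if $j=1$ there is nothing to exclude, and if $j \geq 2$ then $X \neq \epsilon_j$ is given), there exists an $(\epsilon_j, X)$-extendable path $\ora{P_1}$ in $\I_j$, and we set $\ora{P} = \ora{P_1}$.

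For the inductive step, suppose the claim holds for paths of length $M-1$ in $G$, and consider a path $\I_j = \I_{i_1} \to \ldots \to \I_{i_M}$ with $X \cap V(\I_{i_M}) \neq \varnothing$ and $X \cap V(\I_{i_p}) = \varnothing$ for $p < M$. Pick a crossing edge $e = e_{M-1,M} \in C_{\A}(\I_{i_{M-1}}, \I_{i_M})$; by definition it has the form $e = \{x\} \cup \epsilon_{i_M}$ with $x \in V(\I_{i_{M-1}})$. First I would apply the induction hypothesis to the truncated path $\I_j = \I_{i_1} \to \ldots \to \I_{i_{M-1}}$ with the target set $\{x\}$ in place of $X$: note $\{x\} \cap V(\I_{i_{M-1}}) \neq \varnothing$ and $\{x\} \cap V(\I_{i_p}) = \varnothing$ for $p < M-1$ (since the islands are pairwise disjoint and $x \in V(\I_{i_{M-1}})$), and $|\{x\}| = 1 \leq k-1$ with $\{x\}$ not equal to any $\epsilon_\ell$ of size $k-1$. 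This yields an $(\epsilon_j, \{x\})$-extendable path $\ora{P'} = \ora{P_1} \oplus (e_{1,2}) \oplus \ldots \oplus (e_{M-2,M-1}) \oplus \ora{P_{M-1}}$ with the stated island structure. Separately, since $\I_{i_M}$ is an island with entry $\epsilon_{i_M}$ of size $k-1$ and $X \subset V(\I_{i_M})$ with $1 \leq |X| \leq k-1$ and $X \neq \epsilon_{i_M}$ (again because $X \notin \{\epsilon_2,\ldots,\epsilon_N\}$), there is an $(\epsilon_{i_M}, X)$-extendable path $\ora{P_M}$ in $\I_{i_M}$.

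Now I would invoke \Cref{Lemma3} with $A = \epsilon_j$, $B = \{x\}$, $\ora{P} = \ora{P'}$, $C = \epsilon_{i_M}$, $D = X$, $\ora{Q} = \ora{P_M}$, and the gluing edge $e$. The hypotheses to check are: (i) the vertex sets $\epsilon_j \cup V(\ora{P'}) \cup \{x\}$ and $\epsilon_{i_M} \cup V(\ora{P_M}) \cup X$ are disjoint — this follows because the former is contained in $V(\I_{i_1}) \cup \ldots \cup V(\I_{i_{M-1}})$ while the latter is contained in $V(\I_{i_M})$, and the islands are pairwise disjoint (here one uses that $\I_{i_M}$ has not been visited before, which is automatic since a path in $G$ has distinct vertices); (ii) $e \cap (\epsilon_j \cup V(\ora{P'}) \cup \{x\}) = \{x\}$ and $e \cap (\epsilon_{i_M} \cup V(\ora{P_M}) \cup X) = \epsilon_{i_M}$ — since $e = \{x\} \cup \epsilon_{i_M}$ with $x \in V(\I_{i_{M-1}})$ and $\epsilon_{i_M} \subset V(\I_{i_M})$, both equalities again reduce to the disjointness of the islands, provided $e$ meets no other island — which holds by the archipelago structure (a crossing edge is $\{x\} \cup \epsilon$ and touches exactly the two islands it connects). \Cref{Lemma3} then gives that $\ora{P} \defeq \ora{P'} \oplus (e) \oplus \ora{P_M} = \ora{P_1} \oplus (e_{1,2}) \oplus \ldots \oplus (e_{M-1,M}) \oplus \ora{P_M}$ is an $(\epsilon_j, X)$-extendable path, and by construction it has the required decomposition.

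The main obstacle I anticipate is purely bookkeeping: verifying hypothesis (ii) of \Cref{Lemma3} cleanly, i.e. that the crossing edge $e$ does not accidentally intersect the interior of $\ora{P'}$ or of $\ora{P_M}$ in extra vertices beyond $\{x\}$ and $\epsilon_{i_M}$ respectively. This needs a careful appeal to the fact that in an archipelago the only edges outside the islands are crossing edges of the form $\{x\}\cup\epsilon$, so $e$ is "localized" to exactly two islands; combined with pairwise disjointness of the islands, every such intersection collapses to the claimed set. A secondary subtlety is making sure the exclusion clause "$X \neq \epsilon_{i_M}$" is genuinely available at each application of the island property — this is where the hypothesis $X \notin \{\epsilon_2,\ldots,\epsilon_N\}$ in the statement is used, and it must be tracked through the induction (note the auxiliary target $\{x\}$ used in the recursive call always has size $1$, so for it the exclusion is vacuous whenever the relevant entry has size $k-1 \geq 2$).
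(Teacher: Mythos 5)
Your overall strategy — induction on $M$, with \Cref{Lemma3} gluing a crossing edge between the path built so far and a path inside the next island — is essentially the paper's proof, just run in the opposite direction (the paper peels off the \emph{first} island, applying the induction hypothesis to $\I_{i_2}\to\ldots\to\I_{i_M}$ with target $X$ and then prepending an $(\epsilon_{i_1},x)$-extendable path; you peel off the \emph{last} island, recursing on $\I_{i_1}\to\ldots\to\I_{i_{M-1}}$ with target $\{x\}$ and appending a path in $\I_{i_M}$). Both directions work, and your verification of the hypotheses of \Cref{Lemma3} is correct.

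There is, however, one genuine gap: both in your base case and in your inductive step you assert $X \subset V(\I_{i_M})$ and invoke the island property of $\I_{i_M}$ directly to get an $(\epsilon_{i_M},X)$-extendable path. But the proposition only assumes $X \cap V(\I_{i_M}) \neq \varnothing$; the set $X$ may well straddle $\I_{i_M}$ and other islands not on the path (this is exactly the situation depicted in \Cref{RightWay}), and the definition of an island only quantifies over subsets of $V(\I)$. The paper flags this explicitly and repairs it in its base case: one first obtains an $(\epsilon_{i_M}, X \cap V(\I_{i_M}))$-extendable path in $\I_{i_M}$ from the island property (checking that $X \cap V(\I_{i_M}) \neq \epsilon_{i_M}$, which follows since $\epsilon_{i_M} \subseteq X$ together with $|X| \leq k-1$ would force $X=\epsilon_{i_M}$), and then upgrades it to an $(\epsilon_{i_M},X)$-extendable path via \Cref{Lemma1}, using that everything in sight lies in $V(\I_{i_M})$ so that $X$ meets it only in $X \cap V(\I_{i_M})$. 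The same false containment also appears in your disjointness check (i), where you claim $\epsilon_{i_M}\cup V(\ora{P_M})\cup X \subseteq V(\I_{i_M})$; the disjointness conclusion survives because $X$ is disjoint from $V(\I_{i_1}),\ldots,V(\I_{i_{M-1}})$ by hypothesis, but the justification should be restated accordingly. With this one-line repair via \Cref{Lemma1}, your argument is complete.
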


\begin{proof}
	We proceed by induction on $M$.
	\begin{itemize}
		\item First suppose $M=1$: we need to show that if $X \cap V(\I_j) \neq \varnothing$ then there exists an $(\epsilon_j,X)$-extendable path in $\I_j$. This is basically the definition of an island, except that $X$ is not necessarily entirely included in $V(\I_j)$. This is not a problem: since $X \not\in \{\epsilon_2,\ldots,\epsilon_N\}$ by assumption, there exists an $(\epsilon_j,X \cap V(\I_j))$-extendable path $\ora{P}$ in $\I_j$ by definition of an island, and $\ora{P}$ is also $(\epsilon_j,X)$-extendable by \Cref{Lemma1}.
		\item Now suppose $M \geq 2$ and assume the result to be true for all shorter paths in $G$. We build the desired $(\epsilon_{i_1},X)$-extendable path by assembling three parts:
			\begin{enumerate}[label=(\arabic*)]
				\item By the induction hypothesis, there exists an $(\epsilon_{i_2},X)$-extendable path $\ora{P'}$ in $\A$ of the form $\ora{P'}=\ora{P_2} \oplus (e_{2,3}) \oplus \ora{P_3} \oplus (e_{3,4}) \oplus \ldots \oplus \ora{P_{M-1}} \oplus (e_{M-1,M}) \oplus \ora{P_M}$ where $E(\ora{P_p}) \subseteq E(\I_{i_p})$ for all $2 \leq p \leq M$ and $e_{p-1,p} \in C_{\A}(\I_{i_{p-1}},\I_{i_p})$ for all $3 \leq p \leq M$.
				\item Let $e_{1,2} \in C_{\A}(\I_{i_1},\I_{i_2})$, which exists since $(\I_{i_1},\I_{i_2}) \in E(G)$: we have $e_{1,2}=\{x\} \cup \epsilon_{i_2}$ for some $x \in V(\I_{i_1})$.
				\item Finally, by definition of an island, there exists an $(\epsilon_{i_1},x)$-extendable path $\ora{P_1}$ in $\I_{i_1}$.
			\end{enumerate}
			The path $\ora{P} \defeq \ora{P_1} \oplus (e_{1,2}) \oplus \ora{P'}$ is represented in \Cref{RightWay}. \Cref{Lemma3} applied to $A=\epsilon_{i_1}$, $B=\{x\}$, $C=\epsilon_{i_2}$ and $D=X$ ensures that $\ora{P}$ is an $(\epsilon_{i_1},X)$-extendable path. \qedhere
	\end{itemize}
\end{proof}

\indent We get the following characterization for the entries of an archipelago:

\begin{myproposition}\label[myproposition]{prop-path2}
	Let $\A$ be an archipelago, with $\I_1,\ldots,\I_N,\epsilon_1,\ldots,\epsilon_N$ suiting the definition. Let $X \subseteq V(\A)$ such that $1 \leq |X| \leq k-1$. There exists an $(x^*,X)$-extendable path in $\A$ if and only if $X \not\in \{\epsilon_2,\ldots,\epsilon_N\}$.
\end{myproposition}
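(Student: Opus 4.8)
The plan is to obtain this proposition as a direct consequence of \Cref{prop-entry1,prop-path1}, treating the two implications separately.

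\emph{From $X \notin \{\epsilon_2,\ldots,\epsilon_N\}$ to an $(x^*,X)$-extendable path.} I would start from the fact that $X \subseteq V(\A) = V(\I_1) \cup \cdots \cup V(\I_N)$, so some island meets $X$: fix $j$ with $X \cap V(\I_j) \neq \varnothing$. Since $G$ contains a spanning arborescence rooted at $\I_1$ (\Cref{def-archipelago}), there is a path $\I_1 = \I_{j_1} \to \cdots \to \I_{j_r} = \I_j$ in $G$. Let $M$ be the smallest index with $X \cap V(\I_{j_M}) \neq \varnothing$; then the prefix $\I_1 = \I_{j_1} \to \cdots \to \I_{j_M}$ is still a path in $G$, and by minimality of $M$ it satisfies the hypotheses of \Cref{prop-path1} with root index $1$ (recall $X \notin \{\epsilon_2,\ldots,\epsilon_N\}$ is the standing assumption there). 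That proposition then produces an $(\epsilon_1,X)$-extendable path in $\A$, which is an $(x^*,X)$-extendable path because $\epsilon_1 = \{x^*\}$.

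\emph{The converse, by contraposition.} Here I would assume $X = \epsilon_i$ for some $2 \leq i \leq N$ (so $|X| = k-1$) and suppose, for a contradiction, that $\ora{P} = (e_1,\ldots,e_L)$ is an $(x^*,\epsilon_i)$-extendable path in $\A$. Since the islands are pairwise disjoint, $x^* \in V(\I_1)$ and $\epsilon_i \subseteq V(\I_i)$ force $x^* \notin \epsilon_i$; hence $L \geq 1$, and picking any $x \in \epsilon_i \cap e_L$ the sequence $\ora{P}$ is a genuine $(k-2)$-linear path from $x^*$ to $x \in V(\I_i)$ (its endpoint behaviour being read off from \Cref{def-q-linear} applied to $\ora{P}$ as a path from $\{x^*\}$ to $\epsilon_i$). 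Then \Cref{prop-entry1} applies and tells us that the last island visited by $\ora{P}$ is $\I_i$ and that some edge $e_t$ of $\ora{P}$ contains $\epsilon_i$. If $t \leq L-1$, then $\epsilon_i \cap e_t = \epsilon_i \neq \varnothing$ contradicts the property (condition (ii) of \Cref{def-q-linear} for a path from $x^*$ to $X = \epsilon_i$) that $X \cap e_j = \varnothing$ for all $j \leq L-1$; if $t = L$, then $\epsilon_i \subseteq e_L$ gives $|e_L \cap X| = |\epsilon_i| = k-1 > k-2$, contradicting the extendability of $\ora{P}$ at its $X$-end. Either way we reach a contradiction, which establishes the contrapositive.

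\emph{Where the work is.} I do not expect a genuine obstacle, since the statement essentially packages \Cref{prop-entry1,prop-path1} together. The two spots that need a little attention are: verifying that truncating the $G$-path at the first island meeting $X$ yields a path still satisfying the hypotheses of \Cref{prop-path1}; and, in the converse, checking that the given extendable path really is a $(k-2)$-linear path from $x^*$ into $V(\I_i)$ so that \Cref{prop-entry1} is applicable, followed by the (short) case split on whether the edge carrying $\epsilon_i$ is the last edge of $\ora{P}$ (where extendability fails) or an earlier one (where the ``path from $x^*$ to $X$'' condition fails).
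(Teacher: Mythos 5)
Your proposal is correct and follows essentially the same route as the paper: the forward direction truncates a path in $G$ at the first island meeting $X$ (the paper equivalently takes a shortest such path) and invokes \Cref{prop-path1}, while the converse uses \Cref{prop-entry1} to locate an edge containing $\epsilon_i$ and observes that, whether it is the last edge or an earlier one, either extendability or the ``from $x^*$ to $\epsilon_i$'' condition is violated. The paper compresses your case split by noting the offending edge is necessarily the last one, but the content is identical.
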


\begin{proof}
	We distinguish both cases:
	\begin{itemize}[noitemsep,nolistsep]
		\item Suppose $X=\epsilon_i$ for some $2 \leq i \leq N$. Let $\ora{P}$ be a $(k-2)$-linear path from $x^*$ to $\epsilon_i$ in $\A$, then $\ora{P}$ is a $(k-2)$-linear path from $x^*$ to $x$ in $\A$ for some $x \in \epsilon_i$. By \Cref{prop-entry1}, some edge of $\ora{P}$ (necessarily the last one, since $\ora{P}$ is from $x^*$ to $\epsilon_i$) contains $\epsilon_i$, which proves that $\ora{P}$ is not $(x^*,\epsilon_i)$-extendable.
		\item Suppose $X \not\in \{\epsilon_2,\ldots,\epsilon_N\}$. Out of all the paths in $G$ from $\I_1$ to one of the islands intersecting $X$ (recall that $G$ contains a spanning arborescence rooted at $\I_1$, so there exists at least one), consider a shortest one, so that $X$ only intersects the last island of that path. We can now apply \Cref{prop-path1}: there exists an $(\epsilon_1,X)$-extendable path in $\A$, which concludes since $\epsilon_1=\{x^*\}$. \qedhere
	\end{itemize}
\end{proof}

\begin{mycorollary}\label[mycorollary]{coro-inclusion}
	Let $\A$ be an archipelago in $\H$. For all $x \in V(\A)$, there exists a $(k-2)$-linear path from $x^*$ to $x$ in $\A$. In particular, $V(\A) \subseteq LCC^{\,k-2}_{\H}(x^*)$.
\end{mycorollary}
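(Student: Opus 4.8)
The statement to prove is Corollary \ref{coro-inclusion}: for an archipelago $\A$ in $\H$ and every $x \in V(\A)$, there is a $(k-2)$-linear path from $x^*$ to $x$ in $\A$, hence $V(\A) \subseteq LCC^{\,k-2}_{\H}(x^*)$. The natural approach is to deduce this as a special case of \Cref{prop-path2}, applied with the singleton $X = \{x\}$. First I would fix $\I_1,\ldots,\I_N,\epsilon_1,\ldots,\epsilon_N$ suiting the definition of the archipelago, and pick an arbitrary $x \in V(\A)$. Since $V(\A) = V(\I_1) \cup \ldots \cup V(\I_N)$, we have $x \in V(\I_j)$ for some $j$, so $\{x\} \subseteq V(\A)$ and $1 \leq |\{x\}| \leq k-1$ (using $k \geq 3$, so in particular $k - 1 \geq 1$, and more to the point $|\{x\}| = 1 \leq k-1$).

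**Key step: ruling out $X = \epsilon_i$.** To invoke \Cref{prop-path2} I need $\{x\} \notin \{\epsilon_2,\ldots,\epsilon_N\}$. This is immediate: each $\epsilon_i$ with $2 \leq i \leq N$ has $|\epsilon_i| = k-1 \geq 2$ (again because $k \geq 3$), whereas $|\{x\}| = 1$, so a singleton can never equal one of these entries. Hence the hypothesis of \Cref{prop-path2} is satisfied, and it yields an $(x^*,\{x\})$-extendable path $\ora{P}$ in $\A$. By definition, an $(x^*,\{x\})$-extendable path is in particular a $(k-2)$-linear path from $x^*$ to $\{x\} = x$ in $\A$ (the ``extendable'' conditions are extra constraints on top of being a $(k-2)$-linear path from $x^*$ to $x$). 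This gives the first assertion. The second assertion, $V(\A) \subseteq LCC^{\,k-2}_{\H}(x^*)$, then follows directly: for each $x \in V(\A)$ the path $\ora{P}$ just produced witnesses $x \in LCC^{\,k-2}_{\H}(x^*)$, noting that a $(k-2)$-linear path in the subhypergraph $\A$ is a fortiori one in $\H$.

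**Where the work lies.** There is essentially no obstacle here — the corollary is a clean specialization, and all the real content has already been carried out in \Cref{prop-path1} and \Cref{prop-path2} (and, underneath those, in the extension lemmas \Cref{Lemma1} and \Cref{Lemma3} and in the spanning-arborescence hypothesis built into \Cref{def-archipelago}). The only thing to be slightly careful about is the abuse of notation $X = \{x\}$ versus $X = x$ from \Cref{def-q-linear}, and the fact that the case $x = x^*$ is covered too: then $x \in \epsilon_1$ with $\epsilon_1 = \{x^*\}$, and \Cref{prop-path2} still applies since $\{x^*\} = \epsilon_1 \notin \{\epsilon_2,\ldots,\epsilon_N\}$ (the $\epsilon_i$ for $i \geq 2$ have size $k-1 \neq 1$), producing the trivial length-$0$ path. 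So the proof is just: specialize \Cref{prop-path2} to singletons, observe singletons are never among $\epsilon_2,\ldots,\epsilon_N$, and unwind the definitions.
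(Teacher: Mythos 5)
Your proposal is correct and follows exactly the paper's own argument: the paper also proves this corollary by applying \Cref{prop-path2} to $X=\{x\}$. Your extra remarks (that a singleton cannot equal any $\epsilon_i$ with $i\geq 2$ since those have size $k-1\geq 2$, and that an $(x^*,\{x\})$-extendable path is in particular a $(k-2)$-linear path from $x^*$ to $x$) simply make explicit details the paper leaves implicit.
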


\begin{proof}
	Let $x \in V(\A)$: applying \Cref{prop-path2} to $X=\{x\}$ shows that there exists a $(k-2)$-linear path from $x^*$ to $x$ in $\A$.
\end{proof}

\indent Finally, we show that an archipelago has a unique decomposition.

\begin{myproposition}\label[myproposition]{prop-unicity}
	Any archipelago $\A$ has unique islands and entries suiting the definition.
\end{myproposition}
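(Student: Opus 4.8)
\emph{Overall plan.} I would prove that the triple $(N,\{\epsilon_1,\dots,\epsilon_N\},\{\I_1,\dots,\I_N\})$ can be recovered from the pair $(\A,x^*)$ alone, so that no genuine choice was ever involved; then two systems of islands/entries suiting \Cref{def-archipelago} for the same $\A$ must agree after reindexing. This breaks into: (1) recovering the entries, (2) recovering the vertex set of each island, (3) recovering the edge set of each island.

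\emph{Step 1: the entries.} Since the $\I_i$ are pairwise disjoint and each $\epsilon_i$ is a nonempty subset of $V(\I_i)$, the $\epsilon_i$ are pairwise disjoint, hence pairwise distinct; and $\epsilon_1=\{x^*\}$ is distinguished among them by its size, as $|\epsilon_i|=k-1\geq 2$ for $i\geq 2$. By \Cref{prop-path2}, a set $X\subseteq V(\A)$ with $1\leq|X|\leq k-1$ lies in $\{\epsilon_2,\dots,\epsilon_N\}$ if and only if there is no $(x^*,X)$-extendable path in $\A$ — a criterion depending only on $\A$ and $x^*$. Together with $\epsilon_1=\{x^*\}$ this pins down $N$ and the collection $\{\epsilon_1,\dots,\epsilon_N\}$. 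Write $\mathcal E\defeq\{\epsilon_2,\dots,\epsilon_N\}$ for this intrinsically defined set.

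\emph{Step 2: the vertex sets.} The key is an intrinsic description of $V(\I_i)$ via $(k-2)$-linear paths from $x^*$ in $\A$. Let $\ora P=(e_1,\dots,e_L)$ be such a path ending at $x\in V(\I_i)$. By \Cref{prop-entry1}, $\ora P$ splits along a path $\I_1=\I_{i_1}\to\dots\to\I_{i_M}=\I_i$ in $G$ whose arcs are realized by crossing edges $e_{p-1,p}\supseteq\epsilon_{i_p}$. If $i=1$ then $M=1$ (a path in $G$ from $\I_1$ to $\I_1$ is trivial, the $\I_j$ being distinct), so every edge of $\ora P$ lies in $V(\I_1)$ and thus contains no member of $\mathcal E$ (as $V(\I_1)$ is disjoint from each $V(\I_\ell)$, $\ell\geq 2$). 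If $i\geq 2$ then $M\geq 2$, the last crossing edge $e_{M-1,M}$ contains $\epsilon_i\in\mathcal E$, and no edge of $\ora P$ after $e_{M-1,M}$ can contain $\epsilon_i$: such an edge and $e_{M-1,M}$ would be two edges of the $(k-2)$-linear path $\ora P$ both containing the $(k-1)$-set $\epsilon_i$, impossible whether they are consecutive (intersection $\geq k-1$) or not (intersection nonempty). Since an edge of size $k$ contains at most one member of $\mathcal E$ (two would be disjoint $(k-1)$-sets, forcing $\geq 2k-2>k$ vertices), it follows that the last edge of $\ora P$ containing a member of $\mathcal E$ is exactly $e_{M-1,M}$, and that member is $\epsilon_i$. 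Conversely every $x\in V(\A)$ is the endpoint of such a path by \Cref{coro-inclusion}, taken inside $\I_1$ via the island property when $x\in V(\I_1)$. Hence: $V(\I_1)=\{x\in V(\A):\ \exists$ a $(k-2)$-linear path from $x^*$ to $x$ in $\A$ with no edge containing a member of $\mathcal E\}$, and for $i\geq 2$, $V(\I_i)=\{x\in V(\A):\ \exists$ a $(k-2)$-linear path from $x^*$ to $x$ in $\A$ whose last edge containing a member of $\mathcal E$ contains $\epsilon_i\}$. As the $V(\I_i)$ partition $V(\A)$ and these descriptions involve only $\A$, $x^*$ and $\mathcal E$, the vertex sets are determined.

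\emph{Step 3: the edge sets, conclusion, and the expected obstacle.} I would then check $E(\I_i)=\{e\in E(\A):e\subseteq V(\I_i)\}$: the inclusion $\subseteq$ is clear, and for $\supseteq$ note that every edge of $\A$ is either an edge of some $\I_j$ or a crossing edge, and a crossing edge has exactly one vertex in some $V(\I_a)$ and its other $k-1\geq 2$ vertices in some $V(\I_b)$ with $b\neq a$, hence lies in no single $V(\I_i)$; so $e\subseteq V(\I_i)$ forces $e\in E(\I_j)$ for some $j$, and then $j=i$ since $e$ is nonempty and distinct islands have disjoint vertex sets. Combining the three steps: two systems suiting \Cref{def-archipelago} for $\A$ have the same $N$ and the same entry set by Step 1, and after reindexing so their entries agree, Steps 2 and 3 force their islands to coincide. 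I expect the main difficulty to be the identification in Step 2 of which edges of a path "are" crossing edges: a priori an island $\I_\ell$ might contain an edge $f$ with $\epsilon_\ell\subsetneq f$, which contains a member of $\mathcal E$ without being a crossing edge. This is exactly resolved by the $(k-2)$-linearity argument above, which shows such an $f$ cannot appear in any $(k-2)$-linear path issued from $x^*$ in $\A$ (it would be forced by \Cref{prop-entry1} to coexist in that path with the crossing edge into $\I_\ell$, the two sharing $\epsilon_\ell$); once this is secured, the remainder is routine bookkeeping with \Cref{prop-entry1} and \Cref{prop-path2}.
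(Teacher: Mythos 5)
Your proof is correct, and your Step 1 (recovering $N$ and the entries via \Cref{prop-path2}) is exactly the paper's argument. Where you genuinely diverge is in the uniqueness of the islands. The paper proceeds by contradiction: it takes a vertex $x\in V(\I_i)\cap V(\I'_j)$ with $i\neq j$, an $(\epsilon_i,x)$-extendable path inside $\I_i$ (none of whose edges contains any size-$(k-1)$ entry), and observes that, read against the second decomposition, this path must enter $V(\I'_j)$ through a crossing edge, which does contain such an entry --- a contradiction. You instead reconstruct each $V(\I_i)$ positively from $(\A,x^*,\mathcal{E})$ via \Cref{prop-entry1} and \Cref{coro-inclusion}, as the set of vertices reachable by a $(k-2)$-linear path from $x^*$ whose last edge meeting $\mathcal{E}$ contains $\epsilon_i$ (no such edge when $i=1$). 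This is longer, but it buys an explicit intrinsic description of the decomposition, and your Step 3 proves that the islands of an archipelago are induced subhypergraphs of $\A$ --- a fact the paper's proof asserts without justification when it passes from ``two distinct sets of islands'' to ``two distinct partitions of $V(\A)$''. Both arguments ultimately rest on the same mechanism (two distinct edges of a $(k-2)$-linear path cannot both contain a fixed $(k-1)$-set), and your treatment of the potential obstacle --- an island edge $f$ with $\epsilon_\ell\subsetneq f$, which can never occur on a $(k-2)$-linear path issued from $x^*$ --- is the correct resolution.
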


\begin{proof}
	Let $\epsilon_1,\ldots,\epsilon_N$ be entries suiting the definition: we have $\epsilon_1=\{x^*\}$, moreover $\{\epsilon_2,\ldots,\epsilon_N\}$ is exactly the set of all subsets $X \subset V(\A)$ such that $1 \leq |X| \leq k-1$ and there exists no $(x^*,X)$-extendable path in $\A$ by \Cref{prop-path2}, so these entries are unique. Suppose for a contradiction that $\{\I_1,\ldots,\I_N\}$ and $\{\I'_1,\ldots,\I'_N\}$ are two distinct sets of islands suiting the definition, where $\I_i$ and $\I'_i$ have the same entry $\epsilon_i$ for all $1 \leq i \leq N$. Since islands are induced subhypergraphs of $\A$, $\{V(\I_1),\ldots,V(\I_N)\}$ and $\{V(\I'_1),\ldots,V(\I'_N)\}$ are two distinct partitions of $V(\A)$, so there exists $1 \leq i \neq j \leq N$ such that $V(\I_i) \cap V(\I'_j) \neq \varnothing$. Let $x \in V(\I_i) \cap V(\I'_j)$.
	\begin{itemize}[noitemsep,nolistsep]
		\item Using the first decomposition, there exists an $(\epsilon_i,x)$-extendable path $\ora{P}=(e_1,\ldots,e_L)$ in $\I_i$ by definition of an island. For all $2 \leq l \leq N$, no edge of $\ora{P}$ contains $\epsilon_l$: if $l=i$ then this is the definition of an $(\epsilon_i,x)$-extendable path, and if $l \neq i$ then this is obvious since $V(\I_i)$ is disjoint from $\epsilon_l$.
		\item Using the second decomposition, since $x \in V(\I'_j)$ and $\epsilon_i$ is disjoint from $V(\I'_j)$, we can define $r \defeq \inf\{1 \leq p \leq L \,\,\text{such that $e_p \not\subset V(\I'_j)$} \}$. We have $e_r \not\subset V(\I'_j)$, however $e_r$ intersects $V(\I'_j)$ by minimality of $r$, therefore $e_r$ is necessarily a crossing edge for the second decomposition. This means that $\epsilon_l \subset e_r$ for some $2 \leq l \leq N$, which contradicts what we have just established. \qedhere
	\end{itemize}
\end{proof}

\begin{mynotation}
	Let $\A$ be an archipelago. \Cref{prop-unicity} allows us to define without ambiguity:
	\begin{itemize}[noitemsep,nolistsep]
		\item $\I(\A)$: the set of islands of $\A$.
		\item $\epsilon(\A)$: the set of entries of the islands of $\A$.
		\item $G(\A)$: the digraph from the definition of an archipelago.
	\end{itemize}
\end{mynotation}

\section{$(k-2)$-linear connected components: structure and computation}\label{Section3}

\hphantom{\indent}In this section, we suppose again that $\H$ is $k$-uniform and we fix some $x^* \in V(\H)$.

\subsection{Main results}

\hphantom{\indent}Our two main results about $(k-2)$-linear connected components, one structural and the other algorithmic, can be assembled into the following main theorem which will be proven in this section.

\begin{mydefinition}
	An $x^*$-archipelago $\A$ in $\H$ is said to be maximal if there is no $x^*$-archipelago in $\H$ that has $\A$ as a strict subhypergraph.
\end{mydefinition}

\begin{mytheorem}\label[mytheorem]{theo-main}
	$\H[LCC^{\,k-2}_{\H}(x^*)]$ is the unique maximal $x^*$-archipelago in $\H$, and it can be computed in $O(m^2 k)$ time where $m=|E(\H)|$.
\end{mytheorem}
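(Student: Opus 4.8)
The plan is to prove the structural and algorithmic halves together by exhibiting an explicit algorithm that computes a maximal $x^*$-archipelago and arguing it outputs exactly $\H[LCC^{\,k-2}_{\H}(x^*)]$. The ``easy'' inclusion is already in hand: \Cref{coro-inclusion} gives $V(\A) \subseteq LCC^{\,k-2}_{\H}(x^*)$ for any $x^*$-archipelago $\A$, so it remains to show that a maximal archipelago actually captures \emph{all} of $LCC^{\,k-2}_{\H}(x^*)$, and that one can be built in $O(m^2k)$ time. First I would describe the algorithm: maintain an archipelago $\A$ (initialized as the empty island with entry $\{x^*\}$, which is a valid $x^*$-archipelago with $N=1$), together with its decomposition $\I(\A),\epsilon(\A),G(\A)$, and repeatedly scan the edges of $\H$ looking for an edge $e$ that can be ``absorbed.'' The motivating discussion in \Cref{Section2} tells us exactly which absorptions are safe: (a) if $e$ meets $V(\A)$ in a set $X$ with $1 \le |X| \le k-2$, or with $|X|=k-1$ but $X \notin \epsilon(\A)$ — equivalently, by \Cref{prop-path2}, whenever there is an $(x^*,X)$-extendable path in $\A$ — then adding $e$ (and its new vertices) to the island containing the relevant part of $X$ keeps $\A$ an archipelago; (b) if $|e \cap V(\A)|=1$, say $e = \{x\}\cup\epsilon$ with $|\epsilon|=k-1$ and $\epsilon \cap V(\A)=\varnothing$, we create a brand-new empty island with entry $\epsilon$ and record $e$ as a crossing edge, adding the arc to $G$; (c) crossing edges between already-present islands are simply added to $E(\A)$, updating $G$. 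One continues until no edge of $E(\H)\setminus E(\A)$ meets both $V(\A)$ and $V(\H)\setminus V(\A)$ in an absorbable way and no new crossing arc can be added. The key invariant to check inductively is that $\A$ remains an $x^*$-archipelago after each step — this is where the extension lemmas (\Cref{Lemma1}, \Cref{Lemma3}) and \Cref{prop-path1}, \Cref{prop-path2} do the work, since ``island'' and ``archipelago'' are phrased in terms of existence of extendable paths, and those are precisely what the lemmas let us splice and prolong.

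For correctness of the output I would argue by contradiction: suppose the algorithm halts with archipelago $\A$ but there is $y \in LCC^{\,k-2}_{\H}(x^*) \setminus V(\A)$. Take a $(k-2)$-linear path $\ora{P}=(e_1,\dots,e_L)$ from $x^*$ to $y$ in $\H$ and let $e_j$ be the first edge not entirely contained in $V(\A)$ (it exists since $e_1 \ni x^*$ but $y \notin V(\A)$; also $j \geq 1$ and $e_1,\dots,e_{j-1}$ may or may not lie in $E(\A)$, but $V(e_1)\cup\dots\cup V(e_{j-1}) \subseteq V(\A)$ by choice of $j$, after possibly re-indexing so that $e_j$ is the first edge meeting $V(\H)\setminus V(\A)$). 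Then $X := e_j \cap V(\A) \supseteq e_j \cap e_{j-1} \neq \varnothing$ (or $X \ni x^*$ if $j=1$), and one checks $X$ is absorbable: if $1 \le |X| \le k-2$ we are in case (a); if $|X|=k-1$ then $e_{j-1} \subsetneq e_j$ would force $|e_{j-1}\cap e_j|=k-1$, contradicting $(k-2)$-linearity of $\ora{P}$, so $X \neq e_{j-1}$ — but we need $X \notin \epsilon(\A)$, and here I would use that the prefix $(e_1,\dots,e_{j-1})$ together with \Cref{prop-entry1}/\Cref{prop-path2} shows $X$ admits an $(x^*,X)$-extendable path in $\A$ (the sub-path of $\ora{P}$ of edges inside $V(\A)$, pruned via \Cref{Lemma-subpath}, is such a path, using $|e_{j-1}\cap X| = |e_{j-1}\cap e_j| \le k-2$ for extendability at the far end), hence $X \notin \epsilon(\A)$ by \Cref{prop-path2}. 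Either way $e_j$ is absorbable, contradicting that the algorithm halted. The case $|X|=1$ landing in (b) is not a contradiction to termination per se, so I would fold it in: if $|e_j \cap V(\A)|=1$ and its $\epsilon$-part is disjoint from $V(\A)$ then (b) applies and again the algorithm would not have halted; if the $\epsilon$-part meets $V(\A)$ then $|X| \ge 2$ and we are back in the previous analysis. This shows $V(\A) \supseteq LCC^{\,k-2}_{\H}(x^*)$, and combined with \Cref{coro-inclusion} and the fact that islands are induced subhypergraphs (so $E(\A)$ contains every edge of $\H$ inside $V(\A) = LCC^{\,k-2}_{\H}(x^*)$ — this needs a short argument that every such edge does get absorbed, again by the halting condition), we get $\A = \H[LCC^{\,k-2}_{\H}(x^*)]$. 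Uniqueness of the maximal archipelago then follows since any $x^*$-archipelago lies inside $LCC^{\,k-2}_{\H}(x^*)$ by \Cref{coro-inclusion}, so $\H[LCC^{\,k-2}_{\H}(x^*)]$ contains all of them and is itself one.

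For the running time, the main point is that the ``absorbability'' test for a given edge $e$ reduces to: compute $X = e \cap V(\A)$ (in $O(k)$ time with a membership array for $V(\A)$), and decide whether $X \in \epsilon(\A)$ only in the critical case $|X|=k-1$ — and $\epsilon(\A)$ has at most $N-1 \le m$ elements, each of size $k-1$, so this check is $O(mk)$ in the worst case, or $O(k)$ with a suitable hash of the entries. Crucially, \emph{no} path search is needed at test time, because the archipelago invariant guarantees the requisite extendable paths exist abstractly; this is the whole payoff of the archipelago machinery. Each successful absorption adds an edge or creates an island, so there are at most $O(m)$ successful steps; between two successful steps we scan at most $m$ edges; each scan costs $O(k)$ plus the occasional $O(mk)$ entry-check, but the entry-check is only triggered $O(m)$ times total across the run if we are careful, giving $O(m^2 k)$ overall. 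I would also maintain $G(\A)$ incrementally (adding an arc is $O(1)$, and we never need to recompute the spanning arborescence since arcs are only ever added and the root $\I_1$ stays fixed). I expect the main obstacle to be the bookkeeping in the halting/correctness argument — specifically, pinning down that the \emph{first} edge of $\ora{P}$ leaving $V(\A)$ is genuinely absorbable in every sub-case (the interplay between ``$|X|=k-1$ but $X=\epsilon_i$ for some already-created island'' versus ``$X$ genuinely new'' is the delicate point, and is resolved precisely by \Cref{prop-path2} applied to the pruned prefix of $\ora{P}$), together with the verification that the archipelago invariant is preserved by a case-(b) island creation, which needs $\I_{\text{new}}$ to be an honest island with entry $\epsilon$ (true: it is the empty island) and needs the new arc to keep a spanning arborescence rooted at $\I_1$ (true: the crossing edge $e$ connects an existing island to $\I_{\text{new}}$, extending the arborescence by one leaf).
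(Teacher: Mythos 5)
Your overall strategy coincides with the paper's: grow an archipelago edge by edge according to how each candidate edge meets $V(\A)$, argue that at termination the first edge of any $(k-2)$-linear path escaping $V(\A)$ would still have been absorbable, and read the $O(m^2k)$ bound off the scan-and-absorb structure. Your termination/correctness argument is a legitimate minor variant of the paper's: you show that the trace $X=e_j\cap V(\A)$ of the escaping edge admits an $(x^*,X)$-extendable path (the pruned prefix of $\ora{P}$, extendable at the far end because $|e_{j-1}\cap e_j|\le k-2$) and conclude $X\notin\epsilon(\A)$ via \Cref{prop-path2}, whereas the paper first establishes the partition of \Cref{theo-algo}, deduces that the escaping edge lies in $E_{cut}$, and derives a contradiction with \Cref{prop-entry1}. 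Both routes work, and your complexity accounting matches the paper's.

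The genuine gap is in the invariant-preservation step, which is where the technical core of the paper lies (\Cref{prop-type1b}). You describe the absorption of an edge $e$ with $|e\cap V(\A)|\ge 2$ as ``adding $e$ to the island containing the relevant part of $X$,'' but when $e$ meets several islands there is no such single island, and appending $e$ to one of them does not yield a valid decomposition: once $e$ is present, the former entries $\epsilon_j$ of the other touched islands, and of every island lying between them and their lowest common ancestor $\I_{i_0}$ in $G(\A)$, acquire $(x^*,\epsilon_j)$-extendable paths, so by \Cref{prop-path2} they can no longer be entries of any decomposition of $\A\cup e$. The paper's resolution is to merge all islands on the arborescence paths from $\I_{i_0}$ down to each touched island into one new island with entry $\epsilon_{i_0}$, and proving that this union is an island requires a specific construction for the targets $X=\epsilon_j$: an extendable path that descends one branch of the arborescence to reach $e$, crosses $e$, and climbs back up a vertex-disjoint branch to $\epsilon_j$, spliced together with \Cref{Lemma3} and justified by the fact that $\I_{i_0}$ is the lowest common ancestor (Case~4 of \Cref{prop-type1b}). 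Neither the merging rule nor this construction appears in your proposal, and the assertion that ``the extension lemmas and \Cref{prop-path1}, \Cref{prop-path2} do the work'' does not supply the missing idea; without it, the claim that $\A$ remains an archipelago after each absorption --- on which both your correctness argument and your complexity bound rest --- is unsupported.
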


\begin{mycorollary}\label[mycorollary]{coro-main}
	For all $k \geq 3$, $\textsc{HypConnectivity}_{k,k-2}$ is solvable in polynomial time.
\end{mycorollary}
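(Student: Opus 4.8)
The plan is to prove \Cref{theo-main} in two halves, corresponding to its two assertions, and then \Cref{coro-main} follows immediately. For the structural part, I would first show that there is a \emph{unique} maximal $x^*$-archipelago. Existence of \emph{some} $x^*$-archipelago is free (the empty island with entry $\{x^*\}$ qualifies), and since $\H$ is finite, a maximal one exists. Uniqueness should come from a "gluing" argument: if $\A$ and $\A'$ are both $x^*$-archipelagos, then one builds an $x^*$-archipelago containing both. The natural candidate is $\A \cup \A'$, but its island decomposition is not obvious; the cleaner route is probably to argue that $V(\A) \cup V(\A')$ and the combined edge set still satisfy the archipelago axioms, using \Cref{prop-path2} and the extension lemmas (\Cref{Lemma1}, \Cref{Lemma3}) to re-derive the $(\epsilon,X)$-extendable paths in the merged structure. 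Once a unique maximal archipelago $\A^*$ exists, one shows $V(\A^*) = LCC^{\,k-2}_{\H}(x^*)$: the inclusion $\subseteq$ is exactly \Cref{coro-inclusion}, and for $\supseteq$ one takes $x \in LCC^{\,k-2}_{\H}(x^*)$ with a witnessing $(k-2)$-linear path $\ora{P}$ and argues by maximality that if $x \notin V(\A^*)$, then one could enlarge $\A^*$ — i.e. absorb the first edge of $\ora{P}$ leaving $V(\A^*)$ — contradicting maximality. This "absorb one boundary edge" step is the crux: one must check that adding such an edge $e$ either extends an existing island (when $|e \cap V(\A^*)| \leq k-2$ or more delicately) or creates a new island reached by a crossing edge (when the intersection is a set of size $k-1$), and that the resulting object is still an archipelago, which is precisely the intuition developed in the "Principle" subsection.

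For the algorithmic part, I would describe an explicit incremental algorithm that maintains an archipelago $\A$, initialized as the empty island with entry $\{x^*\}$, and repeatedly scans $E(\H)$ for an edge $e$ that can be absorbed. The key subroutine, given the current $\A$ with its decomposition $\I(\A), \epsilon(\A), G(\A)$ (which by \Cref{prop-unicity} is well-defined and, crucially, computable — the entries are characterized via \Cref{prop-path2} as the size-$\le k-1$ subsets with no $(x^*,X)$-extendable path), is to test each edge $e$ with $e \cap V(\A) \neq \varnothing$ and $e \not\subseteq V(\A)$ and decide its type: if $|e \cap V(\I_i)| \le k-2$ for some island $\I_i$ it can be added to that island (or merged); if $e \cap V(\A) = \epsilon_j$ for an entry $\epsilon_j$ of size $k-1$, then $e$ starts a new island; and one must handle the case where $e$ meets several islands or meets an island in $k-1$ vertices that do not form an entry. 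Correctness of the update rests on showing that the maintained object stays an archipelago after each absorption (using \Cref{prop-path1}/\Cref{prop-path2} to re-certify the extendable-path conditions) and that termination yields the maximal archipelago, hence $LCC^{\,k-2}_{\H}(x^*)$ by the structural half.

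For the running time $O(m^2 k)$: there are at most $m$ successful absorptions (each adds at least one edge), and between two absorptions we scan all $m$ edges, each scan costing $O(k)$ to compute an intersection size and classify the edge against the current island decomposition (the decomposition itself being maintained incrementally rather than recomputed from scratch). The delicate point is to argue that the incremental bookkeeping — updating $\I(\A)$, $\epsilon(\A)$, $G(\A)$ after absorbing one edge, including possible island merges — also fits in $O(mk)$ per step, so that the total is $O(m \cdot m k) = O(m^2 k)$; this is where I expect the main obstacle to lie, since a naive recomputation of the decomposition or of the extendable paths per edge would blow up the bound. Finally, \Cref{coro-main} is immediate: by the reduction already noted in \Cref{Section1}, $\textsc{HypConnectivity}_{k,k-2}$ on input $(\H,x,y)$ reduces to computing $LCC^{\,k-2}_{\H}(x^*)$ with $x^* = x$ and checking whether $y \in V(\A^*)$, which \Cref{theo-main} does in $O(m^2 k)$ time, polynomial even when $k$ is part of the input.
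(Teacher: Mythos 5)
Your derivation of the corollary from \Cref{theo-main} is exactly the paper's (compute $LCC^{\,k-2}_{\H}(x)$ in $O(m^2k)$ time and test whether $y$ belongs to it), and your overall plan for \Cref{theo-main} --- grow an archipelago edge by edge, identify the result with $\H[LCC^{\,k-2}_{\H}(x^*)]$, and deduce uniqueness of the maximal archipelago --- is the paper's route via \Cref{theo-algo}. Two remarks on the easy parts: your first suggestion for uniqueness (gluing $\A$ and $\A'$ into one archipelago) is unnecessary and would be delicate; the paper simply observes that any archipelago $\A'$ satisfies $V(\A')\subseteq LCC^{\,k-2}_{\H}(x^*)=V(\A)$ by \Cref{coro-inclusion}, and $\A$ is induced, so $\A'$ is a subhypergraph of $\A$. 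Your complexity accounting matches the paper's.

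The genuine gap is in your argument for the inclusion $LCC^{\,k-2}_{\H}(x^*)\subseteq V(\A^*)$. You claim that the first edge $e=e_M$ of a witnessing path $\ora{P}$ that leaves $V(\A^*)$ can always be absorbed, contradicting maximality, and you enumerate the cases as ``extends an island'' or ``creates a new island when the intersection has size $k-1$''. This misses the \emph{cut} case, which is the entire reason the structure theorem is nontrivial: if $e\cap V(\A^*)$ equals an entry $\epsilon$ of size $k-1$ of some island $\I_i$ with $i\geq 2$, then $\A^*\cup e$ is \emph{not} an archipelago (by \Cref{prop-path2} there is no $(x^*,\epsilon)$-extendable path in $\A^*$, so the extra vertex of $e$ cannot be legitimately reached), and maximality gives you nothing. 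The correct argument, which is the crux of the paper's proof of \Cref{theo-main}, shows instead that such an edge cannot be traversed at all: the preceding edge $e_{M-1}$ lies in $\A^*$ and terminates a $(k-2)$-linear path from $x^*$ reaching $\epsilon$, so by \Cref{prop-entry1} it contains $\epsilon$ entirely, whence $|e_{M-1}\cap e_M|\geq k-1$, contradicting $(k-2)$-linearity. Without this step the structural half, and hence the corollary, does not follow. A secondary inaccuracy in the same passage: a new island is created when $|e\cap V(\A^*)|=1$ (its entry being $e\setminus V(\A^*)$, of size $k-1$), not when $e$ meets $V(\A^*)$ in $k-1$ vertices; in the latter situation $e$ is either of type ``other'' (islands get merged, \Cref{prop-type1b}) or of type ``cut'' (rejected).
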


\subsection{The key intermediate result}

\hphantom{\indent}Theorem \ref{theo-main} will come as a straightforward consequence of the following theorem, which is illustrated in \Cref{Partition}:

\begin{mytheorem}\label[mytheorem]{theo-algo}
	 There exists an $x^*$-archipelago $\A$ in $\H$ and a partition $E(\H)=E(\A) \cup E_{cut} \cup E_{ext}$ (where $E_{cut}$ and/or $E_{ext}$ may be empty) such that:
	\begin{enumerate}[noitemsep,nolistsep,label=(\arabic*)]
		\item Every $e \in E_{cut}$ is of the form $e=\epsilon \cup \{x\}$ for some entry $\epsilon$ of $\A$ of size $k-1$ and some $x \not\in V(\A)$;
		\item Every $e \in E_{ext}$ is disjoint from $V(\A)$.
	\end{enumerate}
	Moreover, this partition can be computed in $O(m^2 k)$ time where $m=|E(\H)|$.
\end{mytheorem}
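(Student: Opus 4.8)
The plan is to construct $\A$ greedily, by an incremental algorithm that starts from the empty island with entry $\{x^*\}$ and repeatedly \emph{absorbs} one new edge of $\H$ into the current object, maintaining as an invariant that it is an $x^*$-archipelago with an explicitly stored decomposition: islands $\I_1,\ldots,\I_N$, entries $\epsilon_1,\ldots,\epsilon_N$, and the digraph $G$ together with one fixed spanning arborescence $\mathcal{T}$ rooted at $\I_1$. At each iteration the algorithm seeks an edge $e\in E(\H)\setminus E(\A)$ that meets $V(\A)$ and is \emph{not} of the ``cut'' shape $\epsilon\cup\{x\}$ with $\epsilon$ an entry of size $k-1$ and $x\notin V(\A)$; if one exists it absorbs it, otherwise it halts and returns $E_{ext}\defeq\{e\in E(\H):e\cap V(\A)=\varnothing\}$ and $E_{cut}\defeq E(\H)\setminus(E(\A)\cup E_{ext})$.

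Absorption of $e$ is dictated by how $e$ meets the islands; writing $S_i\defeq e\cap V(\I_i)$ and $T\defeq\{i:S_i\neq\varnothing\}$, the cases are: (i) $|T|=1$, say $T=\{i\}$, and $2\le|S_i|\le k$ (with $S_i\neq\epsilon_i$ whenever $|S_i|=k-1$, as otherwise $e$ is a cut edge) --- \emph{grow} $\I_i$ by appending $e$ and the vertices $e\setminus V(\I_i)$; (ii) $|T|=1$, $T=\{i\}$, $|S_i|=1$ --- the $k-1$ vertices $\epsilon\defeq e\setminus V(\I_i)$ lie outside $V(\A)$, so \emph{spawn} a new island equal to the empty island with entry $\epsilon$ and record $e$ as a crossing edge into it; (iii) $|T|\ge2$ and $e$ is a crossing edge $\{y\}\cup\epsilon_j$ with $y\in V(\I_i)$, $i\neq j$ --- \emph{link}, i.e.\ record $e$ as such a crossing edge (its arc may already be in $G$); (iv) $|T|\ge2$ and $e$ is not a crossing edge --- \emph{merge} a suitable connected cluster of islands into one. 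Using that islands are induced subhypergraphs of $\A$ (so an edge contained in some $V(\I_i)$ already lies in $E(\A)$), one checks that an edge meeting $V(\A)$ is always either already in $E(\A)$, a cut edge, or covered by one of (i)--(iv); this is exactly what makes the final output a partition obeying properties~(1) and~(2).

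The substance of the correctness proof is checking that each absorption preserves the invariant. For (i), using $S_i\neq\epsilon_i$ when $|S_i|=k-1$ together with \Cref{Lemma1}, one shows $\I_i\cup\{e\}$ is still an island with entry $\epsilon_i$ --- for a target $X$ meeting $V(\I_i)$ one stretches an $(\epsilon_i,X\cap V(\I_i))$-extendable path of $\I_i$ via \Cref{Lemma1} (the case $X=\epsilon_i$ being excluded), and for $X\subseteq e\setminus V(\I_i)$, which forces $|X|\le k-2$, one appends $e$ to an $(\epsilon_i,S_i)$-extendable path of $\I_i$ --- and no new crossing edge is created, so $G$ is untouched. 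For (ii) the empty island is an island, $e$ is a crossing edge by construction, and $\mathcal{T}$ gains a leaf; for (iii) adding an arc to $G$ keeps a spanning arborescence rooted at $\I_1$, with $\I_1$ still of in-degree zero since crossing edges carry a $(k-1)$-entry. The delicate case is the merge, case~(iv): one must choose the cluster $\mathcal{S}$ large enough that no crossing edge of $\A$ enters $\mathcal{S}$ except through the entry $\epsilon_c$ of its most upstream member $\I_c$ --- starting from the minimal subtree of $\mathcal{T}$ spanning the islands met by $e$ and saturating by pulling in the tail, with its connecting tree-path, of any $G$-arc that would otherwise enter $\mathcal{S}$ at a non-$\I_c$ island. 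The merged island $\I'$ is then the induced subhypergraph on $\bigcup_{\I\in\mathcal{S}}V(\I)\cup(e\setminus V(\A))$, equipped with the edges $\bigcup_{\I\in\mathcal{S}}E(\I)$, the crossing edges of $\A$ internal to $\mathcal{S}$, and $e$, and with entry $\epsilon_c$ (which is $\{x^*\}$ precisely when $\I_1\in\mathcal{S}$). That $\I'$ is an island with entry $\epsilon_c$ follows by viewing $\mathcal{S}$'s islands with their internal crossing edges as a sub-archipelago rooted at $\I_c$ and applying the path-composition of \Cref{prop-path1}: the only targets $X$ not already handled are the entries of the non-root members of $\mathcal{S}$, and these are exactly the sets that the newly absorbed edge $e$ neutralises, via \Cref{Lemma3}. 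Finally, the contracted digraph keeps a spanning arborescence rooted at $\I_1$ because reachability from the root survives contraction, and one records throughout that islands stay pairwise disjoint, entries pairwise distinct, and all entries but $\epsilon_1$ of size $k-1$.

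It remains to collect termination and complexity. Each iteration moves exactly one new edge into $E(\A)$, so there are at most $m$ of them; when the algorithm halts, $\A$ is an $x^*$-archipelago and, by the halting condition and the case analysis, every edge of $E(\H)\setminus E(\A)$ is either disjoint from $V(\A)$ or of cut shape, which is the announced partition. For the running time, maintain an array sending each vertex to the index of its island (plus, per island, its entry, and a spanning arborescence of $G$). An iteration scans the $m$ edges, spending $O(k)$ on each to list the island indices of its $\le k$ vertices and classify it, then performs the absorption in $O(k)$ time plus the relabelling caused by a possible merge; bounding the number of merges by $m$ and each relabelling by $|V(\A)|=O(mk)$ gives $O(m^2k)$ total relabelling, so the algorithm runs in $O(m^2k)$ overall. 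The step I expect to be the real obstacle is the merge (case~(iv)): pinning down the correct cluster $\mathcal{S}$ to contract and proving that the resulting induced subhypergraph is genuinely an island --- i.e.\ that the new edge $e$ supplies extendable paths to precisely those targets (the swallowed entries) for which the sub-archipelago structure by itself fails.
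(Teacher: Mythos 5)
Your architecture is the same as the paper's (grow an archipelago edge by edge, with a case analysis on how the new edge meets the current islands, the merge being the crux), but you deviate on one point that turns out to carry the whole difficulty: you absorb crossing edges as you encounter them, so when a merge occurs the digraph $G$ need not be an arborescence, only a digraph containing a spanning arborescence. The paper's algorithm deliberately runs in two phases --- first all edges of $\A$-type ``new crossing'' or ``other'', then all edges of $\A$-type ``crossing'' --- precisely so that $G(\A)$ \emph{is} an arborescence at every merge. This is what makes \Cref{prop-type1b} work: the cluster to contract is simply the union of the tree paths from $\FCA_{G(\A)}(\{\I_i : i \in J_0\})$ down to the islands met by $e$, no arc of $G(\A)$ can enter that cluster except at its root, and the proof that every swallowed entry $\epsilon_j$ is neutralised (Case~4 of \Cref{prop-type1b}) picks one island met by $e$ below $\I_j$ and one not below $\I_j$, with \emph{disjoint} tree paths, and glues the two resulting extendable paths through $e$ via \Cref{Lemma3}. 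Since the island decomposition never changes during the second phase, no merge ever has to be performed on a non-arborescent digraph.

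Your saturation rule is meant to compensate for this, but it is exactly where the gap sits, and you say so yourself. For an island $\I_a$ pulled into $\mathcal{S}$ only because of a non-tree arc $(\I_a,\I_b)$ with $\I_b\in\mathcal{S}$, the entry $\epsilon_a$ is not ``between'' $\epsilon_c$ and anything met by $e$, so the Case-4 construction does not apply to it; to show $\epsilon_a$ is neutralised one must reach $\epsilon_b$ from $\epsilon_c$ extendably and then traverse the crossing edge $\{y\}\cup\epsilon_b$ (with $y\in V(\I_a)$) \emph{backwards} into $\I_a$, possibly chaining several such backward traversals if $\I_b$ was itself pulled in by saturation, and one must verify that no two edges of the resulting path contain the same $(k-1)$-entry (any such repetition kills $(k-2)$-linearity). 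None of this appears in your argument, and it is strictly harder than what \Cref{prop-type1b} proves; your appeal to \Cref{prop-path1} only covers targets reachable along tree paths from $\epsilon_c$. Moreover this verification cannot be waved away: if some pulled-in entry were \emph{not} neutralised by $e$, then by \Cref{prop-path2} and \Cref{prop-unicity} your merged object would simply fail to be an island and the invariant would break. The clean repair is the paper's ordering: postpone all ``crossing'' edges to a second phase, keep $G(\A)$ an arborescence throughout the merges, and your case (iv) reduces to \Cref{prop-type1b} as stated. The rest of your proposal --- cases (i)--(iii), the halting condition yielding properties (1) and (2), and the $O(m^2k)$ accounting --- matches the paper and is fine.
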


\begin{figure}[htbp]
	\centering
	\includegraphics[width=1\textwidth]{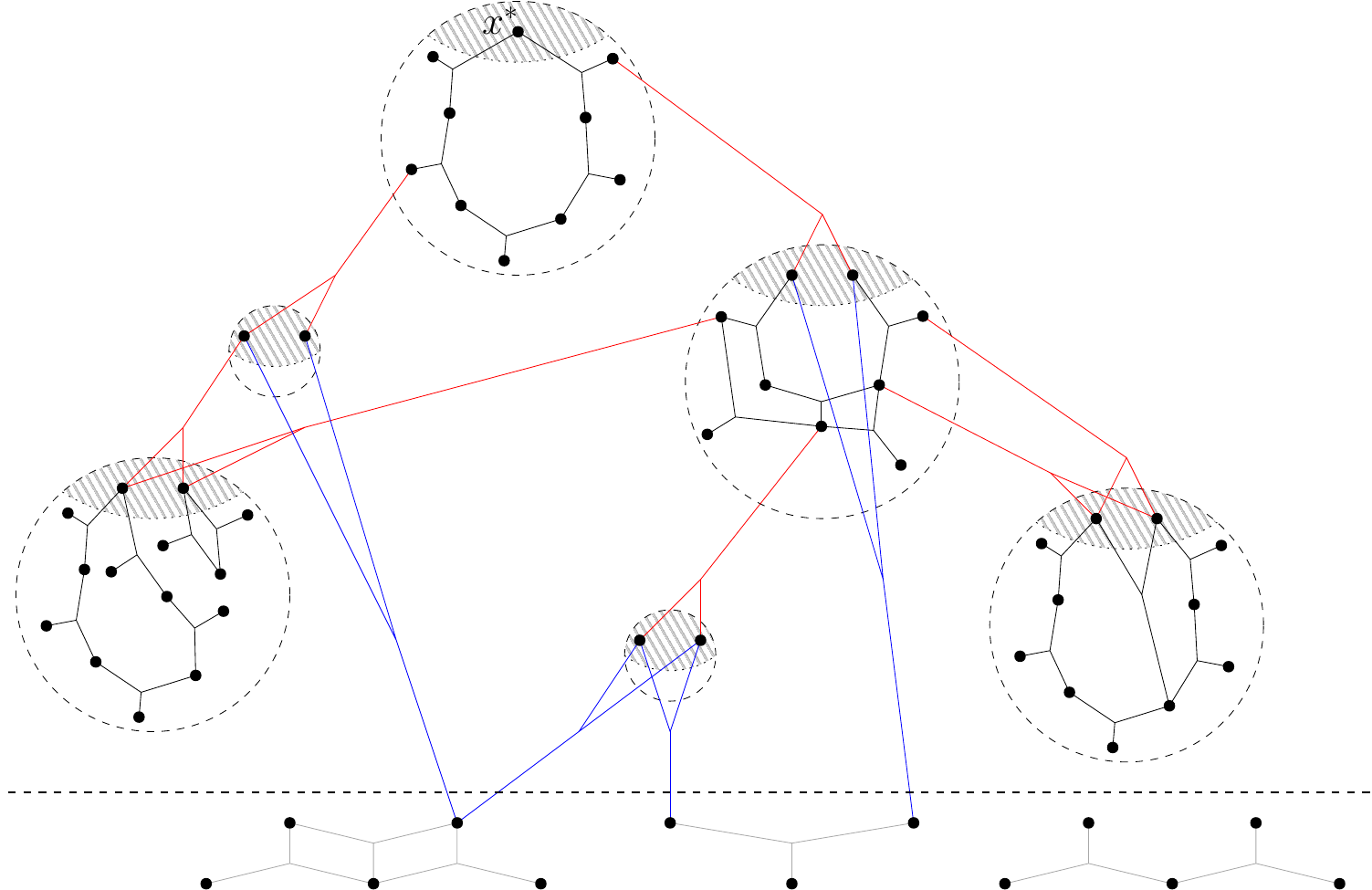}
	\caption{The hypergraph $\H$ is represented in full. In black and red: $E(\A)$ (archipelago). In blue: $E_{cut}$. In grey, below the dashed line: $E_{ext}$.}\label{Partition}
\end{figure}

\subsubsection{Augmenting archipelagos}

\hphantom{\indent}Our algorithm proving \Cref{theo-algo} will build the archipelago $\A=\H[LCC^{\,k-2}_{\H}(x^*)]$ edge by edge until reaching maximality, and then throw the remaining edges into $E_{cut}$ and $E_{ext}$. Therefore, we need to address the following question: given an archipelago $\A$ and an edge $e \in E(\H) \setminus E(\A)$, is $\A \cup e$ an archipelago (and if so, for what decomposition)? Here $\A \cup e$ denotes the subhypergraph of $\H$ defined by $V(\A \cup e)=V(A) \cup e$ and $E(\A \cup e)=E(A) \cup \{e\}$. The answer will depend on the way $e$ intersects $\A$:

\begin{mydefinition}\label{def-types}
Let $\A$ be an archipelago. An edge $e \in E(\H) \setminus E(\A)$ is of one of five \textit{$\A$-types}:
	\begin{enumerate}[noitemsep,nolistsep,label=\arabic*.]
		\item "exterior": $|e \cap V(\A)|=0$.
		\item "new crossing": $|e \cap V(\A)|=1$.
		\item "crossing": $e$ is a crossing edge between two islands of $\A$.
		\item "cut": $e$ is of the form $e=\epsilon \cup \{x\}$, where $\epsilon$ is an entry of $\A$ of size $k-1$ and $x \in V(\H)\setminus V(\A)$.
		\item "other": $e$ is none of the above.
	\end{enumerate}
	Those are well defined because the islands and entries of an archipelago are unique by \Cref{prop-unicity}. The five $\A$-types are illustrated in \Cref{Types}.
\end{mydefinition}

\begin{figure}[htbp]
	\centering
	\includegraphics[scale=.6]{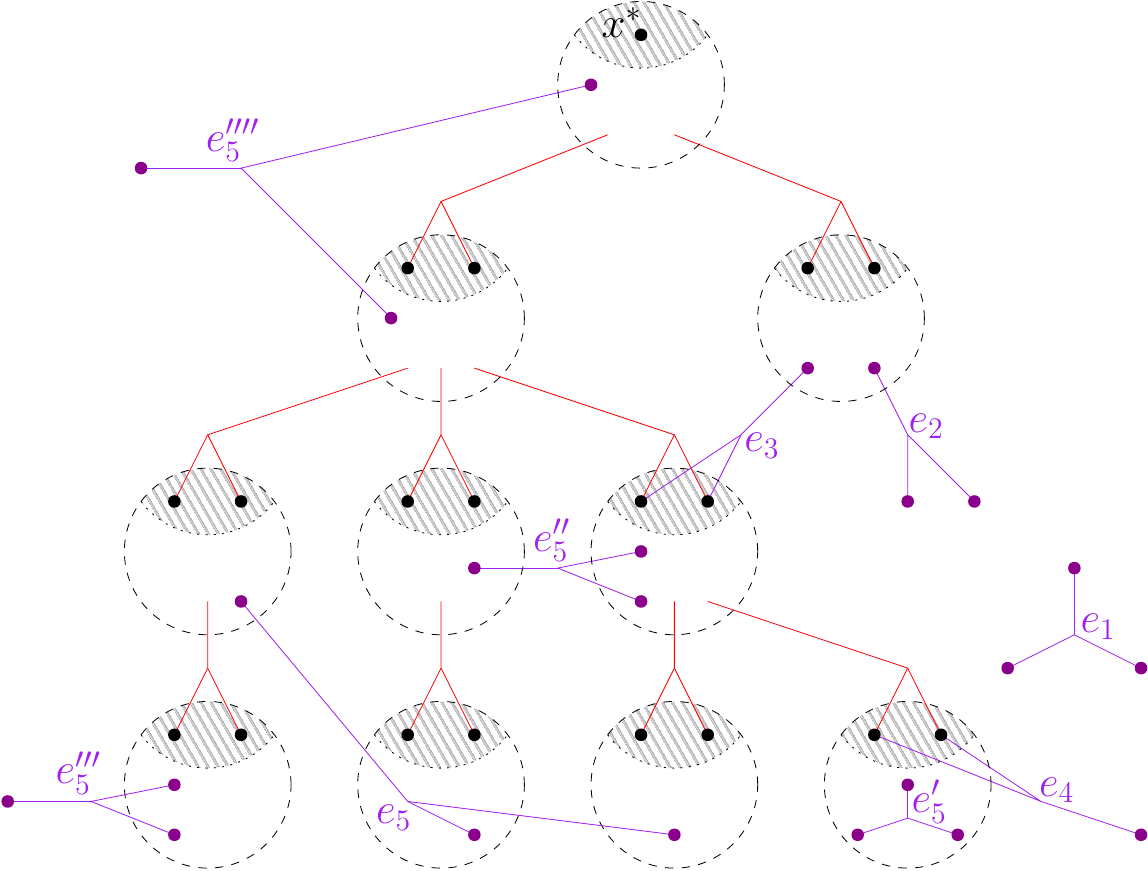}
	\caption{An arborescent archipelago $\A$ (the inside of the islands is not detailed), and some edges in $E(\H) \setminus E(\A)$ (in purple). The names of the edges follow the numbering from \Cref{def-types}: $e_1$ is of $\A$-type "exterior", $e_2$ is of $\A$-type "new crossing", etc.}\label{Types}
\end{figure}

\indent Fundamentally:
\begin{itemize}[noitemsep,nolistsep]
	\item The $\A$-types "crossing", "new crossing" and "other" correspond to edges that get added to the archipelago.
	\item The $\A$-type "cut" corresponds to $E_{cut}$.
	\item The $\A$-type "exterior" corresponds to $E_{ext}$.
\end{itemize}
\indent Let $\A$ be an archipelago, with islands $\I_1,\ldots,\I_N$ and entries $\epsilon_1,\ldots,\epsilon_N$, and let $e \in E(\H) \setminus E(\A)$. We now explain why $\A \cup e$ is an archipelago if $e$ is of $\A$-type "crossing", "new crossing" or "other". In the case of the $\A$-types "new crossing" and "other", the arborescent nature of the archipelago will be preserved, so those edges will be added first in our algorithm so that the archipelago remains arborescent for as long as possible. Even though the decomposition of $\A \cup e$ is given by $\I(A \cup e)$ and $\epsilon(A \cup e)$ alone, we also describe $G(\A \cup e)$ in the arborescent case.

\paragraph{$\quad$ I) $e$ is of $\A$-type "new crossing"}\mbox{}\\

\indent This case is easy: a new island is created, with $e$ being the crossing edge that connects it to the rest (see \Cref{Type1a}).

\begin{figure}[htbp]
	\centering
	\includegraphics[width=1\textwidth]{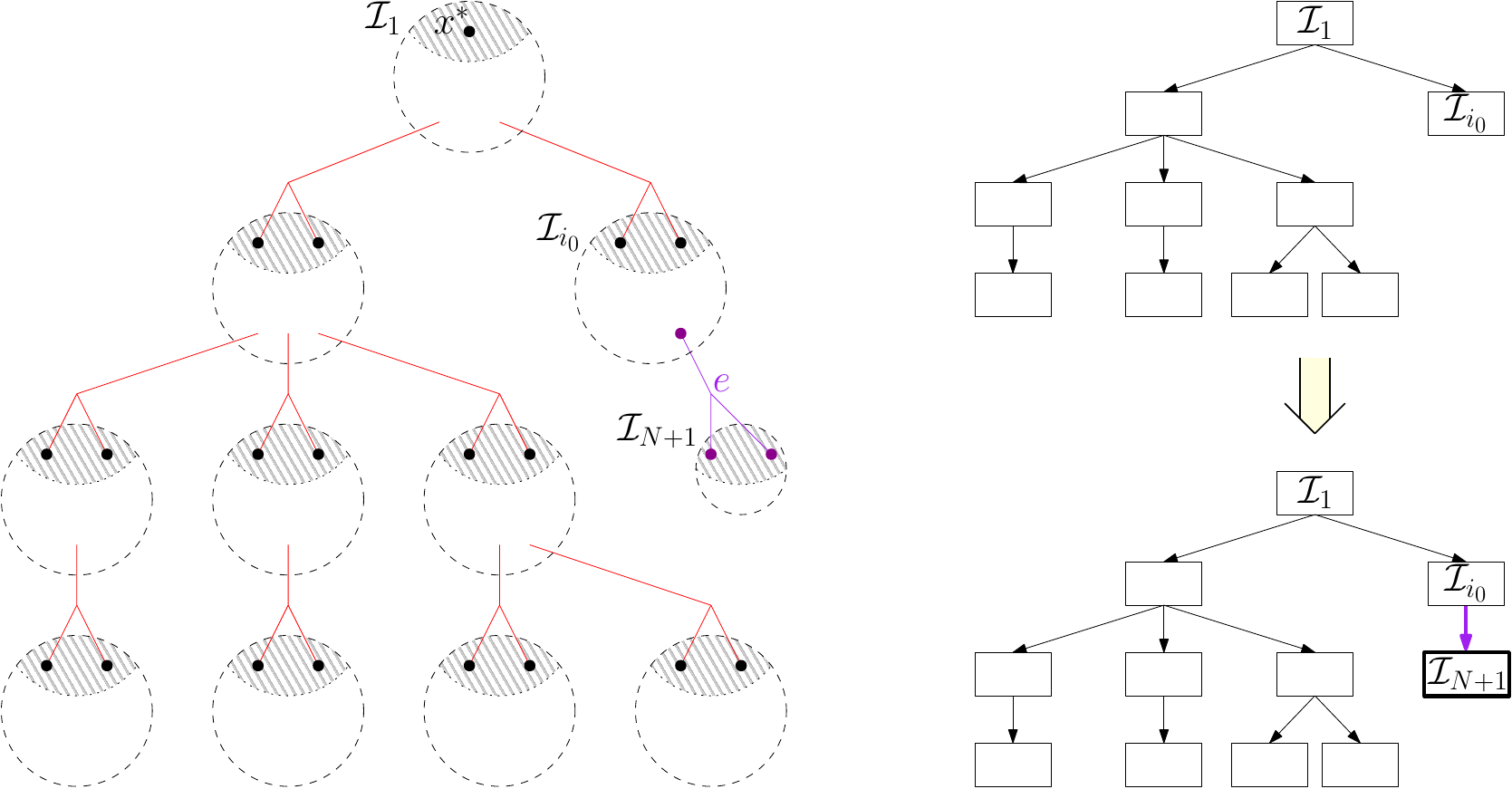}
	\caption{The archipelago $\A \cup e$ where $\A$ is as in \Cref{Types} and $e=e_2$. On the right: the arborescences $G(\A)$ (top) and $G(\A \cup e)$ (bottom).}\label{Type1a}
\end{figure}

\begin{myproposition}\label[myproposition]{prop-type1a}
	Suppose $\A$ is arborescent and $e$ is of $\A$-type "new crossing". Let $1 \leq i_0 \leq N$ be the index of the only island that intersects $e$, and let $\I_{N+1}$ be the empty island with entry $\epsilon_{N+1} \defeq e \setminus V(\I_{i_0})$. Then $\A \cup e$ is an arborescent archipelago with:
	\begin{itemize}[noitemsep,nolistsep]
		\item $\I(A \cup e)=\I(A) \cup \{\I_{N+1}\}$.
		\item $\epsilon(\A \cup e)= \epsilon(\A) \cup \{\epsilon_{N+1}\}$.
		\item $G(\A \cup e)$ defined as the digraph obtained from $G(\A)$ by adding a new vertex $\I_{N+1}$ and an arc $(\I_{i_0},\I_{N+1})$.
	\end{itemize}
\end{myproposition}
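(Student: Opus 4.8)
The plan is to verify directly that the data $\I_1,\ldots,\I_N,\I_{N+1}$, together with the entries $\epsilon_1,\ldots,\epsilon_N,\epsilon_{N+1}$, satisfies every clause of \Cref{def-archipelago} for the subhypergraph $\A\cup e$; once this is done, \Cref{prop-unicity} guarantees that this is \emph{the} decomposition of the archipelago $\A\cup e$, which immediately gives the three claimed identities for $\I(\A\cup e)$, $\epsilon(\A\cup e)$ and $G(\A\cup e)$.

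First one records the elementary set-theoretic facts. Writing $e\cap V(\A)=\{x\}$ with $x\in V(\I_{i_0})$, one has $e\cap V(\I_{i_0})=\{x\}$, so $\epsilon_{N+1}=e\setminus V(\I_{i_0})=e\setminus\{x\}$ has size $k-1$ and is disjoint from $V(\A)$. Hence the empty island $\I_{N+1}$ on $\epsilon_{N+1}$ is disjoint from each $\I_i$ ($1\le i\le N$), so the islands remain pairwise disjoint, and $V(\A\cup e)=V(\A)\cup e=V(\I_1)\cup\cdots\cup V(\I_N)\cup\epsilon_{N+1}=V(\I_1)\cup\cdots\cup V(\I_{N+1})$. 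The first two clauses of \Cref{def-archipelago} ($\epsilon_1=\{x^*\}$, and $|\epsilon_i|=k-1$ for $i\ge 2$) hold because the old entries are untouched and $|\epsilon_{N+1}|=k-1$. That $\I_{N+1}$ is an island with entry $\epsilon_{N+1}$ is the empty-island example, and each $\I_i$ with $i\le N$ is still an island with entry $\epsilon_i$, since being an island with a given entry is an intrinsic property of a subhypergraph.

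The substantive clause is the fourth one. Here $E(\I_1)\cup\cdots\cup E(\I_{N+1})=E(\I_1)\cup\cdots\cup E(\I_N)$ since $E(\I_{N+1})=\varnothing$, so the edges of $\A\cup e$ to account for are the old non-island edges of $\A$ together with $e$. Each old non-island edge remains a crossing edge between two of $\I_1,\ldots,\I_N$ (this is intrinsic), and $e=\{x\}\cup\epsilon_{N+1}$ with $x\in V(\I_{i_0})$ is by definition a crossing edge from $\I_{i_0}$ to $\I_{N+1}$. To identify $G(\A\cup e)$ exactly, I would argue two points: (i) the only edge of $\A\cup e$ meeting $\epsilon_{N+1}=V(\I_{N+1})$ is $e$ itself, since every edge of $\A$ lies inside $V(\A)$, which is disjoint from $\epsilon_{N+1}$; hence the only crossing edge of $\A\cup e$ incident to $\I_{N+1}$ is $e$, directed $\I_{i_0}\to\I_{N+1}$; (ii) $e$ is not a crossing edge between two of the old islands, because such an edge has the form $\{y\}\cup\epsilon_j$ with $\epsilon_j\subseteq V(\A)$ of size $k-1\ge 2$, contradicting $|e\cap V(\A)|=1$. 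Consequently $C_{\A\cup e}(\I_i,\I_j)=C_{\A}(\I_i,\I_j)$ for all $i,j\le N$, $C_{\A\cup e}(\I_{i_0},\I_{N+1})=\{e\}$, and every other set $C_{\A\cup e}(\I_i,\I_{N+1})$ or $C_{\A\cup e}(\I_{N+1},\I_j)$ is empty; in other words $G(\A\cup e)$ is obtained from $G(\A)$ by appending the leaf $\I_{N+1}$ and the arc $(\I_{i_0},\I_{N+1})$. Since $G(\A)$ is a spanning arborescence rooted at $\I_1$ (because $\A$ is arborescent), so is $G(\A\cup e)$, and in particular it contains one, so $\A\cup e$ is an arborescent archipelago with the announced decomposition; uniqueness via \Cref{prop-unicity} closes the argument.

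I do not expect a real obstacle: the whole proof is bookkeeping. The one place that requires care is the exact determination of $G(\A\cup e)$ — one must genuinely check that $e$ introduces no spurious crossing edges (in particular that it is not at the same time a crossing edge between two old islands) and that no preexisting edge of $\A$ retroactively becomes a crossing edge incident to $\I_{N+1}$ — and both checks come down to the single-vertex intersection $|e\cap V(\A)|=1$ together with $k\ge 3$.
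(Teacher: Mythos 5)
Your proof is correct and follows the same route as the paper, which simply declares the result clear after noting that $e$ is a crossing edge from $\I_{i_0}$ to $\I_{N+1}$ and that appending a leaf preserves the arborescence; your write-up is a careful expansion of that one-line argument, with the useful extra checks (no spurious crossing edges, uniqueness via \Cref{prop-unicity}) made explicit.
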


\begin{proof}
	This is clear: $e$ is a crossing edge from $\I_{i_0}$ to $\I_{N+1}$, hence the new arc in $G(\A \cup e)$ which is obviously an arborescence since $G(\A)$ is.
\end{proof}

\paragraph{$\quad$ II) $e$ is of $\A$-type "other"}\mbox{}\\

\indent By definition, this means that: $|e \cap V(\A)| \geq 2$, $e$ is not a crossing edge, and $e$ is not of the form $\epsilon \cup \{x\}$ where $\epsilon$ is an entry of $\A$ of size $k-1$ and $x \in V(\H)\setminus V(\A)$.
\\ \indent This case is more complicated. Consider \Cref{Types}. If $e$ only intersects one island ($e=e_5'$ or $e=e_5'''$ for instance), then it should be easy to show that this island plus $e$ is still an island. If $e$ links several islands however, then the way to redefine islands is not as straightforward, since $e$ is not a crossing edge. Suppose $e=e_5$ for instance, as in \Cref{Type1b}. The fact that $e$ acts as a bridge between several islands creates new paths: for example, we have an $(x^*,\epsilon_6)$-extendable path in $\A \cup e$ (represented schematically in \Cref{Type1b}), therefore $\epsilon_6$ would not be an entry of $\A \cup e$ (recall \Cref{prop-path2}). Actually, it can be shown that the subhypergraph $\I$, formed by the union of $\I_2,\I_4,\I_5,\I_6,\I_8,\I_9$ and the crossing edges between them as well as $e$, is an island with entry $\epsilon_2$. Therefore, $\A \cup e$ is an archipelago with five islands: $\I_1,\I_3,\I_7,\I_{10},\I$. On this example, we see how adding en edge can merge islands together. We are now going to generalize this argument.

\begin{figure}[htbp]
	\centering
	\includegraphics[width=1\textwidth]{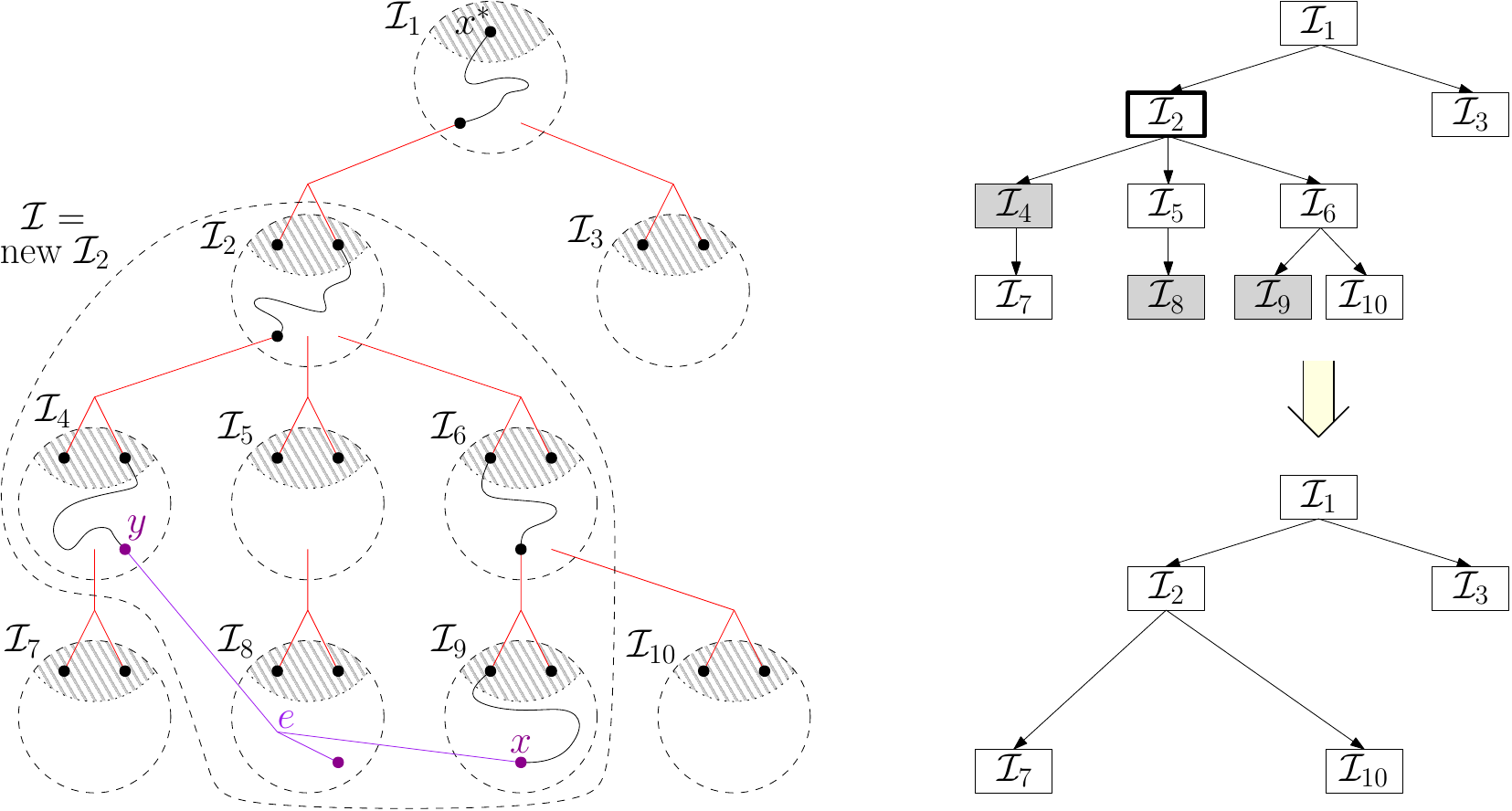}
	\caption{The archipelago $\A \cup e$ where $\A$ is as in \Cref{Types} and $e=e_5$. On the right: the arborescences $G(\A)$ (top) and $G(\A \cup e)$ (bottom).}\label{Type1b}
\end{figure}

\begin{mydefinition}
	Let $G$ be an arborescence rooted at some $v^* \in V(G)$, and let $U=\{v_1,\ldots,v_r\} \subseteq V(G)$. For all $1 \leq i \leq r$, let $v^*=v_{i,1} \to \ldots \to v_{i,l_i}=v_i$ be the unique path from $v^*$ to $v_i$ in $G$. Define $i_0 \defeq \sup\{1 \leq p \leq \min_{1 \leq i \leq r}l_i \,\,|\,\, v_{1,p}=\ldots=v_{r,p}\}$. The \textit{lowest common ancestor of $U$ in $G$} is defined as $\FCA_G(U)\defeq v_{i_0}$.
\end{mydefinition}

\begin{mydefinition}
	Let $G$ be an arborescence and let $v\in V(G)$. For all $1 \leq i \leq r$, let $v=v_{i,1} \to \ldots \to v_{i,l_i}=v_i$ be a path from $v$ to some $v_i \in V(G)$ in $G$. Let $U\defeq\bigcup_{1 \leq i \leq r}\{v_{i,1},\ldots,v_{i,l_i}\}$ be the set of all vertices on these paths. \textit{Merging $U$ into $v$} means:
	\begin{itemize}[noitemsep,nolistsep]
		\item deleting all vertices in $U \setminus \{v\}$;
		\item deleting all arcs between vertices in $U$;
		\item replacing every arc $(u,w) \in (U \setminus \{v\}) \times (V(G) \setminus U)$ by an arc $(v,w)$.
	\end{itemize}
\end{mydefinition}

\begin{myexample}
	\Cref{Type1b} features a merging process on the right. The three considered paths are: $\I_2 \leftarrow \I_4$, $\I_2 \leftarrow \I_5 \leftarrow \I_8$, $\I_2 \leftarrow \I_6 \leftarrow \I_9$. The set $U=\{\I_2,\I_4,\I_5,\I_6,\I_8,\I_9\}$ has been merged into $v=\I_2$.
\end{myexample}

\begin{myproposition}\label[myproposition]{prop-type1b}
	Suppose $\A$ is arborescent and $e$ is of $\A$-type "other". Define:
	\begin{itemize}[noitemsep,nolistsep]
		\item $J_0 \defeq \{1 \leq i \leq N \,\,\text{such that $V(\I_i) \cap e \neq \varnothing$} \}$, the set of indices of the islands that $e$ intersects. 
		\item $i_0$ the index such that $\I_{i_0}\defeq \FCA_{G(\A)}(\{\I_i,i \in J_0\})$.
		\item $J \defeq \bigcup_{i \in J_0} \{1 \leq j \leq N \,\,|\,\, \I_j \text{ is on the path from $\I_{i_0}$ to $\I_i$ in $G(\A)$} \} \supseteq J_0$.
		\item $\I \defeq \A[\bigcup_{j \in J}V(\I_j)]$, the island that will replace $\I_{i_0}$ (with the same entry $\epsilon_{i_0}$).
	\end{itemize}
	Then $\A \cup e$ is an arborescent archipelago with:
	\begin{itemize}[noitemsep,nolistsep]
		\item $\I(\A \cup e)= (\I(\A) \setminus \{\I_j, j\in J\}) \cup \{\I\}$.
		\item $\epsilon(\A \cup e)= \epsilon(\A) \setminus \{\epsilon_j, j\in J \setminus \{i_0\}\}$.
		\item $G(\A \cup e)$ defined as the digraph obtained from $G(\A)$ by merging $\{\I_j, j\in J\}$ into $\I_{i_0}$.
	\end{itemize}
\end{myproposition}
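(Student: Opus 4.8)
The plan is to verify \Cref{def-archipelago} directly for $\A\cup e$ equipped with the islands $(\I(\A)\setminus\{\I_j : j\in J\})\cup\{\I\}$, the entries $\epsilon(\A)\setminus\{\epsilon_j : j\in J\setminus\{i_0\}\}$ and the digraph obtained by merging $\{\I_j : j\in J\}$ into $\I_{i_0}$ in $G(\A)$; the ``arborescent'' conclusion and the uniqueness of the decomposition (\Cref{prop-unicity}) then follow automatically. I would first clear away the combinatorial bookkeeping. As $G(\A)$ is an arborescence and $\I_{i_0}=\FCA_{G(\A)}(\{\I_i : i\in J_0\})$, the set $\{\I_j : j\in J\}$ is exactly the vertex set of the subtree $T$ of $G(\A)$ spanned by the paths from $\I_{i_0}$ to the $\I_i$, $i\in J_0$; $T$ is rooted at $\I_{i_0}$, and the unique arc of $G(\A)$ entering $\{\I_j : j\in J\}$ from outside is the one into $\I_{i_0}$ (there is none if $i_0=1$). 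Hence $1\in J$ iff $i_0=1$, so $\{x^*\}$ stays the entry of the root island; contracting $T$ to $\I_{i_0}$ leaves a spanning arborescence; each crossing edge of $\A$ between two islands not both in $T$ becomes a crossing edge between the corresponding islands of $\A\cup e$; and the crossing edges internal to $T$, along with $e$ itself (not a crossing edge of $\A$, and not a cut edge, hence not a crossing edge of $\A\cup e$ either), end up inside $\I$. Reading $\I$ as a subhypergraph of $\A\cup e$, so that $e\in E(\I)$, the proposed islands are pairwise disjoint and cover $V(\A\cup e)=V(\A)\cup e$, and every edge outside them realises an arc of the merged digraph as a crossing edge. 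The statement therefore reduces to the single claim that $\I$ is an island with entry $\epsilon_{i_0}$.

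For that I would use two observations. First, the proofs of \Cref{prop-entry1,prop-path1} never appeal to $|\epsilon_1|=1$, so they apply verbatim to the ``archipelago-like'' structure consisting of $T$ and its internal crossing edges, with $\I_{i_0}$ and $\epsilon_{i_0}$ playing the roles of $\I_1$ and $\{x^*\}$. Second, reversing a $(k-2)$-linear path turns an $(X,Y)$-extendable path into a $(Y,X)$-extendable path, which yields an ``upward'' analogue of \Cref{prop-path1} for climbing $T$ towards $\I_{i_0}$ from a descendant island, and also lets an $(\epsilon_j,\cdot)$-extendable path inside an island $\I_j$ be used ``backwards'' to land on $\epsilon_j$ via a final edge meeting it in at most $k-2$ vertices. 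Now fix a target $X\subset V(\I)$ with $1\le|X|\le k-1$, and $X\ne\epsilon_{i_0}$ when $|\epsilon_{i_0}|=k-1$. If $X\notin\{\epsilon_j : j\in J\}$, then \Cref{prop-path1} applied inside $T$ (enlarging the target by \Cref{Lemma1}, as in its own proof) already produces an $(\epsilon_{i_0},X)$-extendable path in $\I$; the only remaining possibility compatible with our hypothesis is $X=\epsilon_j$ for some $j\in J\setminus\{i_0\}$, which forces $|X|=k-1$ (the case $X=\epsilon_{i_0}$ with $|\epsilon_{i_0}|=1$ is settled by the empty path).

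This case is the main obstacle: inside $T$ every route to $\I_j$ runs through the crossing edge feeding $\I_j$, which contains all of $\epsilon_j$, so no such route is $(\epsilon_{i_0},\epsilon_j)$-extendable, and $e$ is indispensable. Now $J\supsetneq\{i_0\}$ forces $|J_0|\ge 2$, and since $e$ is neither a crossing nor a cut edge it contains no entry $\epsilon_l$ ($2\le l\le N$), so in particular $\epsilon_j\not\subseteq e$. As $\I_{i_0}$ — not $\I_j$ — is the lowest common ancestor of $\{\I_i : i\in J_0\}$, the edge $e$ meets some island outside the subtree of $G(\A)$ rooted at $\I_j$ and, by definition of $J$, also some island inside it. One then approaches $\I_j$ laterally: from $\epsilon_{i_0}$, descend in $T$ while avoiding the node $\I_j$ — hence avoiding $X$ — to a vertex of $e$ in an island outside the subtree rooted at $\I_j$; cross $e$; and then climb, through strict descendants of $\I_j$ only (whose vertex sets miss $V(\I_j)$, in particular $\epsilon_j$), to a last edge meeting $\epsilon_j$ in at most $k-2$ vertices — namely the crossing edge out of $\I_j$ towards that subtree when it touches $\epsilon_j$, or, when $\I_j$ is non-empty, the first edge of an $(\epsilon_j,\cdot)$-extendable path in $\I_j$ used backwards, or simply $e$ itself when $e$ already meets $\epsilon_j$. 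A few configurations (according to whether $e$ touches $\epsilon_j$ and whether $\I_j$ is empty) must be distinguished, and in each the two or three segments are spliced via \Cref{Lemma3} and \Cref{Lemma1}; the delicate point is to check that the result is a genuine $(k-2)$-linear path meeting $X=\epsilon_j$ only in its final edge and extendable at the $\epsilon_{i_0}$ end, and this is exactly where the hypotheses that $e$ is neither a crossing nor a cut edge are used. Once this case is settled, $\I$ is an island with entry $\epsilon_{i_0}$, \Cref{def-archipelago} holds, and $\A\cup e$ is the claimed arborescent archipelago.
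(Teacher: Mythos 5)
Your overall strategy is the same as the paper's: reduce everything to the claim that $\I$ is an island with entry $\epsilon_{i_0}$, handle targets that are not entries of merged islands via \Cref{prop-path1} and \Cref{Lemma1}, and treat $X=\epsilon_j$ ($j\in J\setminus\{i_0\}$) as the hard case by descending from $\epsilon_{i_0}$ through a branch of the tree that avoids the subtree rooted at $\I_j$, crossing $e$, and climbing back to $\epsilon_j$ through strict descendants of $\I_j$, splicing with \Cref{Lemma3} and the observation that reversing an $(X,Y)$-extendable path gives a $(Y,X)$-extendable path. This is exactly the paper's construction (its paths $\ora{Q}$ and $\ora{P'}$), including the key use of the lowest-common-ancestor hypothesis to find an island of $J_0$ outside the subtree of $\I_j$, and the use of the ``not crossing''/``not cut'' hypotheses to guarantee $e\cap V(\I_{j_0})\neq\epsilon_{j_0}$ and $\epsilon_j\not\subset e$.

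There is, however, one concrete misstep in the easy half. You assert that whenever $X\notin\{\epsilon_j : j\in J\}$, \Cref{prop-path1} together with the target-enlargement of \Cref{Lemma1} ``already produces'' an $(\epsilon_{i_0},X)$-extendable path. This fails when $X\subseteq e\setminus\bigcup_{j\in J}V(\I_j)$, i.e.\ when $X$ consists only of vertices newly brought in by $e$ (a nonvacuous case, since an edge of $\A$-type ``other'' may have up to $k-2$ vertices outside $V(\A)$, and these must be shown to lie in the island). Here $X$ meets no island of $\A$, so \Cref{prop-path1} cannot be invoked with target $X$, and \Cref{Lemma1} only enlarges a target that already contains a nonempty subset reached by the path. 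The fix is the paper's Case 3: set $X'\defeq e\cap\bigcup_{j\in J}V(\I_j)$, note that $2\le|X'|\le k-1$ and that $X'$ is not an entry (otherwise $e$ would be of $\A$-type ``cut'' --- this is where that hypothesis is actually consumed, rather than in the hard case as you suggest), obtain an $(\epsilon_{i_0},X')$-extendable path, and append $e$ via \Cref{Lemma3} applied with $\ora{Q}=()$. With that case restored, your argument matches the paper's proof.
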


\begin{proof}
	For visual help, refer to \Cref{Type1b}: in this example we have $J_0=\{4,8,9\}$, $i_0=2$, $J=\{2,4,5,6,8,9\}$. The merging process that defines $G(\A \cup e)$ clearly preserves the fact that the digraph is an arborescence. To complete the proof, we only need to show that $\I$ is an island with entry $\epsilon_{i_0}$: let $X \subset V(\I)$ such that $1 \leq |X| \leq k-1$ (and $X \neq \epsilon_{i_0}$ if $i_0 \neq 1$), we need to find an $(\epsilon_{i_0},X)$-extendable path in $\I$. As visible in \Cref{Types}, $e$ might or might not be included in $V(\A)$, so in general we have $V(\I)=\bigcup_{j \in J}V(\I_j) \cup e$. We distinguish four possibilities:
	
	\begin{enumerate}[label=\arabic*)]
	
		\item \underline{Case 1}: $X \subset \bigcup_{j \in J}V(\I_j)$ and $X \not\in\{\epsilon_j, j\in J \setminus \{i_0\}\}$.
			\\ Of all paths in $G(\A)$ from $\I_{i_0}$ to an island intersecting $X$, let $\I_{i_0}=\I_{j_1} \to \ldots \to \I_{j_M}$ be a shortest one, so that $X \cap V(\I_{j_M})\neq \varnothing$ and $X \cap V(\I_{j_p})= \varnothing$ for all $1 \leq p \leq M-1$. Note that, by definition of $J$, we have $\{j_1,\ldots,j_M\} \subseteq J$, so the islands $\I_{j_1},\ldots,\I_{j_M}$ are all subhypergraphs of $\I$ and all crossing edges between them in $\A$ are edges of $\I$. By \Cref{prop-path1}, there exists an $(\epsilon_{i_0},X)$-extendable path $\ora{P}$ in $\A$ such that $E(\ora{P}) \subseteq \bigcup_{p=1}^M E(\I_{j_p}) \cup \bigcup_{p=2}^M C_{\A}(\I_{j_{p-1}},\I_{j_p}) \subseteq E(\I)$, which concludes.
			
		\item \underline{Case 2}: $X$ intersects both $\bigcup_{j \in J}V(\I_j)$ and $e \setminus \bigcup_{j \in J}V(\I_j)$.
			\\ Define $X' \defeq X \cap \bigcup_{j \in J}V(\I_j)$, we have $1 \leq |X'| \leq k-1$. Case 1 applied to $X'$ gives us an $(\epsilon_{i_0},X')$-extendable path $\ora{P}$ in $\I$, which is also $(\epsilon_{i_0},X)$-extendable by \Cref{Lemma1} applied to $A=\epsilon_{i_0}$, $B=X'$ and $B'=X$.
			
		\item \underline{Case 3}: $X \subset e \setminus \bigcup_{j \in J}V(\I_j)$.
			\\ Define $X' \defeq e \cap \bigcup_{j \in J}V(\I_j)$, we have $2 \leq |X'| \leq k-1$ hence $1 \leq |X| \leq k-2$: indeed $|X'| \geq 2$ by definition of the $\A$-type "other", and $|X'| \leq k-1$ because $e \setminus \bigcup_{j \in J}V(\I_j) \supseteq X \neq \varnothing$. Moreover $X' \not\in\{\epsilon_j, j\in J \setminus \{i_0\}\}$, otherwise $e$ would be of $\A$-type "cut". We can thus apply Case 1 to $X'$, which gives us an $(\epsilon_{i_0},X')$-extendable path $\ora{P}$ in $\I$. \Cref{Lemma3} applied to $A=\epsilon_{i_0}$, $B=X'$, $C=D=X$ and $\ora{Q}=()$ ensures that $\ora{P} \oplus (e)$ is an $(\epsilon_{i_0},X)$-extendable path in $\I$.
			
		\item \underline{Case 4}: $X=\epsilon_j$ for some $j \in J \setminus \{i_0\}$.
			\\ In particular $|J| \geq 2$, so $e$ intersects several islands. Note that, since $\I_{i_0}$ is a strict ancestor of $\I_j$ in $G(\A)$, we have $j \neq 1$. Remember our example from \Cref{Type1b}: we considered $X=\epsilon_6$, and the $(\epsilon_2,X)$-extendable path was obtained by going from $\epsilon_2$ to $e \cap V(\I_4)=\{y\}$, then using $e$ to jump from $\I_4$ to $\I_9$, then going from $e \cap V(\I_9)=\{x\}$ to $X$. Let us now build this path in general.
			
			\begin{figure}[htbp]
				\centering
				\includegraphics[scale=.6]{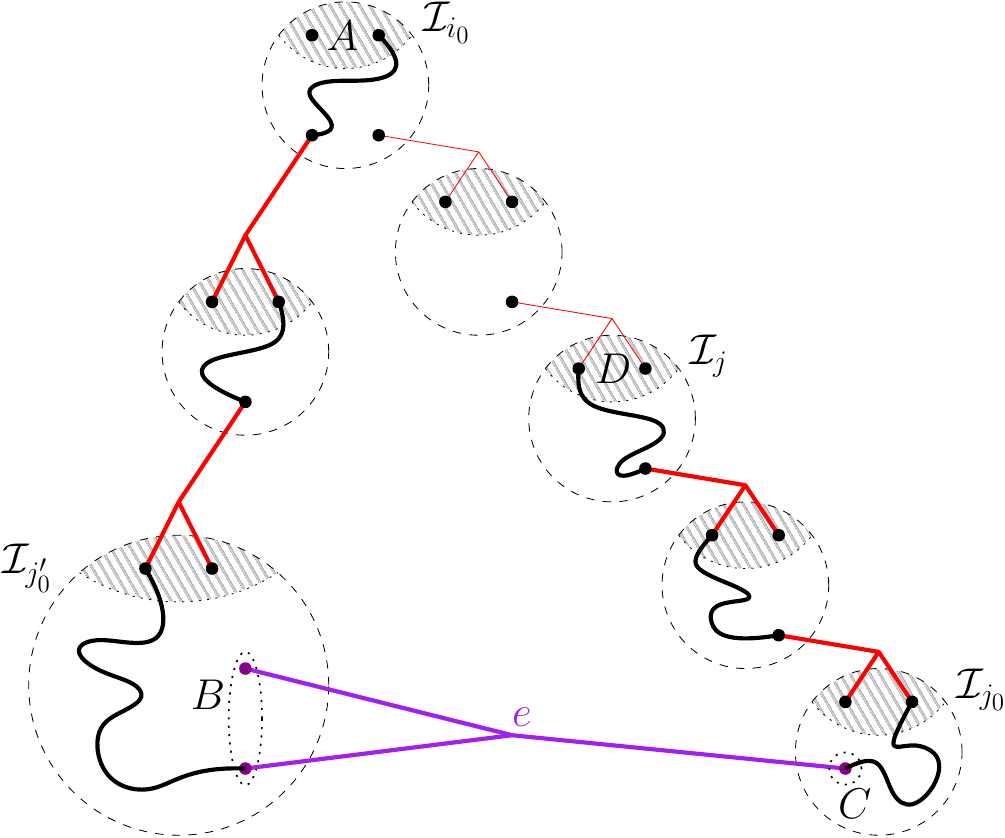}
				\caption{Illustration of Case 4 from \Cref{prop-type1b}. The bold paths (in red and black) are $\protect\ora{P}$ on the right and $\protect\ora{Q}$ on the left.}\label{Type1b_Case4}
			\end{figure}
			
			\begin{itemize}
				\item Let $j_0 \in J_0$ such that the path $\I_j=\I_{i_1} \to \ldots \to \I_{i_M}=\I_{j_0}$ in $G(\A)$ is shortest, so that $i_p \not\in J_0$ for all $1 \leq p \leq M-1$. This means $e \cap V(\I_{i_M})\neq \varnothing$ and $e \cap V(\I_{i_p})= \varnothing$ for all $1 \leq p \leq M-1$. Since $e$ intersects several islands, we know $1 \leq |e \cap V(\I_{j_0})| \leq k-1$. Moreover the fact that $j \neq 1$ implies that $j_0 \neq 1$, so $e \cap V(\I_{j_0}) \neq \epsilon_{j_0}$, otherwise $e$ would be of $\A$-type "crossing". We can thus apply \Cref{prop-path1} and get an $(\epsilon_j,e \cap V(\I_{j_0}))$-extendable path $\ora{P}$ in $\A$ such that $E(\ora{P}) \subseteq \bigcup_{p=1}^M E(\I_{i_p}) \cup \bigcup_{p=2}^{M} C_{\A}(\I_{i_{p-1}},\I_{i_p})$, hence $E(\ora{P}) \subseteq E(\I)$ since $\{i_1,\ldots,i_M\}\subseteq J$ by definition of $J$. See \Cref{Type1b_Case4} (path on the right).
				
				\item Since the lowest common ancestor of $\{\I_i,i \in J_0\}$ is $\I_{i_0}$ and not $\I_j$, there exists $j'_0 \in J_0 \setminus \{j_0\}$ such that $\I_j$ is not an ancestor of $\I_{j'_0}$, so the path $\I_{i_0}=\I_{i'_1} \to \ldots \to \I_{i'_{M'}}=\I_{j'_0}$ from $\I_{i_0}$ to $\I_{j'_0}$ in $G(\A)$ satisfies $\{i_1,\ldots,i_M\} \cap \{i'_1,\ldots,i'_{M'}\}=\varnothing$ (see \Cref{Type1b_Case4} for the relative positions of the four islands in play: $\I_{i_0}$, $\I_j$, $\I_{j_0}$, $\I_{j'_0}$). As usual, we choose $j'_0$ so that this path is shortest, this way we have $e \cap V(\I_{i'_{M'}})\neq \varnothing$ and $e \cap V(\I_{i'_p})= \varnothing$ for all $1 \leq p \leq M'-1$. Since $e$ intersects several islands, we know $1 \leq |e \cap V(\I_{j'_0})| \leq k-1$. Moreover, if $j'_0 \neq 1$ then $e \cap V(\I_{j'_0}) \neq \epsilon_{j'_0}$ otherwise $e$ would be of $\A$-type "crossing". We can thus apply \Cref{prop-path1} and get an $(\epsilon_{i_0},e \cap V(\I_{j'_0}))$-extendable path $\ora{Q}$ in $\A$ such that $E(\ora{Q}) \subseteq \bigcup_{p=1}^{M'} E(\I_{i'_p}) \cup \bigcup_{p=2}^{M'} C_{\A}(\I_{i'_{p-1}},\I_{i'_p})$, hence $E(\ora{Q}) \subseteq E(\I)$ since $\{i'_1,\ldots,i'_{M'}\}\subseteq J$ by definition of $J$. See \Cref{Type1b_Case4} (path on the left).
				
				\item Let $\ora{P'}$ be the sequence obtained by reversing $\ora{P}$. Since $\ora{P}$ is an $(\epsilon_j,e \cap V(\I_{j_0}))$-extendable path, $\ora{P'}$ is an $(e \cap V(\I_{j_0}),\epsilon_j)$-extendable path. \Cref{Lemma3} applied to $A=X=\epsilon_{i_0}$, $B=e \cap V(\I_{j'_0})$, $C=e \cap V(\I_{j_0})$ and $D=\epsilon_j$, whose conditions are fulfilled since $\{i_1,\ldots,i_M\} \cap \{i'_1,\ldots,i'_{M'}\}=\varnothing$, ensures that $\ora{Q}\oplus (e) \oplus \ora{P'}$ is an $(\epsilon_{i_0},\epsilon_j)$-extendable path in $\I$ which concludes. \qedhere
			\end{itemize}
	
	\end{enumerate}
	
\end{proof}

\paragraph{$\quad$ III) $e$ is of $\A$-type "crossing"}\mbox{}\\

\indent This is the easiest case: $e$ is added as a crossing edge and the decomposition remains the same. Note that $\A \cup e$ might not be arborescent anymore (see $e=e_3$ from \Cref{Types} for example).
\begin{myproposition}\label[myproposition]{prop-type1c}
	If $e$ is of $\A$-type "crossing", then $\A \cup e$ is an archipelago with:
	\begin{itemize}[noitemsep,nolistsep]
		\item $\I(A \cup e)= \I(A)$.
		\item $\epsilon(\A \cup e)= \epsilon(\A)$.
	\end{itemize}
\end{myproposition}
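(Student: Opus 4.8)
The plan is to verify directly that $\A\cup e$ satisfies \Cref{def-archipelago} with the \emph{same} islands and entries as $\A$, and then to invoke \Cref{prop-unicity} to conclude. So fix islands $\I_1,\ldots,\I_N$ of $\A$ with respective entries $\epsilon_1,\ldots,\epsilon_N$ and digraph $G=G(\A)$ suiting the definition, and write $e=\{x\}\cup\epsilon_j$ with $x\in V(\I_i)$, where $|\epsilon_j|=k-1$ (so $j\neq 1$) and $i\neq j$ since the two islands are disjoint.

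First I would observe that everything intrinsic to the subhypergraphs $\I_l$ is untouched: each $\I_l$ is still an island with entry $\epsilon_l$, the $\I_l$ are still pairwise disjoint, $\epsilon_1=\{x^*\}$, and $|\epsilon_l|=k-1$ for $l\geq 2$ --- none of this mentions $\A$, and $e$ is not added to any $\I_l$. Next, since $e$ is a crossing edge between $\I_i$ and $\I_j$ we have $e\subseteq V(\I_i)\cup V(\I_j)\subseteq V(\A)$, so $V(\A\cup e)=V(\A)=V(\I_1)\cup\cdots\cup V(\I_N)$. And since $e\notin E(\A)$ we get $e\notin E(\I_l)$ for all $l$, hence $E(\A\cup e)\setminus\bigl(E(\I_1)\cup\cdots\cup E(\I_N)\bigr)=\bigl(E(\A)\setminus\bigcup_l E(\I_l)\bigr)\cup\{e\}$; each such edge is a crossing edge between two of the $\I_l$, the old ones by the archipelago property of $\A$ and $e$ by hypothesis.

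It remains to check the arborescence condition. Because the islands are pairwise disjoint, the vertex $x$ lies in $\I_i$ and in no other island, so $e$ is a crossing edge from $\I_i$ to $\I_j$ and from no other pair; therefore the digraph attached to $\A\cup e$ is obtained from $G$ by adding at most the single arc $(\I_i,\I_j)$, and it still contains the spanning arborescence rooted at $\I_1$ that $G$ does. This shows $\A\cup e$ is an archipelago admitting $\I_1,\ldots,\I_N$ as islands with entries $\epsilon_1,\ldots,\epsilon_N$, and \Cref{prop-unicity} upgrades this to $\I(\A\cup e)=\I(\A)$ and $\epsilon(\A\cup e)=\epsilon(\A)$.

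This case presents essentially no obstacle --- it is the degenerate one among the augmenting steps. The only two points deserving a word are that being an island is a property of a subhypergraph in isolation (so the $\I_l$ genuinely do not change), and that one must pass through \Cref{prop-unicity} to go from ``admits this decomposition'' to ``has this decomposition''.
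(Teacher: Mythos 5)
Your proof is correct, and it simply writes out in full the direct verification that the paper dismisses with ``This is straightforward.'' The two points you flag at the end --- that islandhood is intrinsic to each $\I_l$, and that \Cref{prop-unicity} is what turns ``admits the old decomposition'' into the stated equalities $\I(\A\cup e)=\I(\A)$ and $\epsilon(\A\cup e)=\epsilon(\A)$ --- are exactly the content the paper leaves implicit.
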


\begin{proof}
	This is straightforward.
\end{proof}

\subsubsection{Formal algorithm}

\hphantom{\indent}The algorithm \textsc{Partition\_Archipelago} (\Cref{algo_main}) returns a partition of the edges that satisfies \Cref{theo-algo}. The procedures \textsc{Add\_NewCrossing}, \textsc{Add\_Other} and \textsc{Add\_Crossing} (\Cref{algo_1a,algo_1b,algo_1c}) are nothing but algorithmic translations of \Cref{prop-type1a,prop-type1b,prop-type1c} respectively. Note that islands are simply implemented as vertex sets, because their edge sets are never used.

\begin{algorithm}[H]
	\caption{\textsc{Partition\_Archipelago}$(\H,x^*)$}\label{algo_main}
	\begin{algorithmic}[1]
		\State \Initialize $V(\I_1) \gets \{x^*\}$
		\State \Define $\epsilon_1 \gets \{x^*\}$
		\State \Initialize the archipelago $\A$ with:
			\\ $\quad\,$ $E(\A) \gets \varnothing$
			\\ $\quad\,$ $\I(\A) \gets \{V(\I_1)\}$
			\\ $\quad\,$ $\epsilon(\A) \gets \{\epsilon_1\}$
			\\ $\quad\,$ $G(\A) \gets\,$ a digraph with only one vertex, labelled $\I_1$
		\State \Initialize $N \gets 1$ (index of the last created island)
		\While{there exists $e \in E(\H) \setminus E(\A)$ of $\A$-type "new crossing" or "other"}
			\If{$e$ is of $\A$-type "new crossing"}
				\State \Update $\A$ as $\A \cup e$ by performing \textsc{Add\_NewCrossing}
			\Else
				\State \Update $\A$ as $\A \cup e$ by performing \textsc{Add\_Other}
			\EndIf
		\EndWhile
		\While{there exists $e \in E(\H) \setminus E(\A)$ of $\A$-type "crossing"}
			\State \Update $\A$ as $\A \cup e$ by performing \textsc{Add\_Crossing}
		\EndWhile
		\State \Define $E_{cut} \gets \{e \in E(\H) \setminus E(\A), \text{$e$ is of $\A$-type "cut"}\}$
		\State \Define $E_{ext} \gets \{e \in E(\H) \setminus E(\A), \text{$e$ is of $\A$-type "exterior"}\}$
		\State \Return $E(\A)$, $E_{cut}$, $E_{ext}$
	\end{algorithmic}
\end{algorithm}

\begin{algorithm}[H]
	\caption{\textsc{Add\_NewCrossing}}\label{algo_1a}
	\begin{algorithmic}[1]
		\State \Define $1 \leq i_0 \leq N$ as the only index such that $e \cap V(\I_{i_0}) \neq \varnothing$
		\State \Initialize $V(\I_{N+1}) \gets e \setminus V(\I_{i_0})$
		\State \Define $\epsilon_{N+1} \gets e \setminus V(\I_{i_0})$
		\State \Update the archipelago $\A$ as follows:
			\\ $\quad\,$ $E(\A) \gets E(\A) \cup \{e\}$
			\\ $\quad\,$ $\I(\A) \gets \I(\A) \cup \{V(\I_{N+1})\}$
			\\ $\quad\,$ $\epsilon(\A \cup e) \gets \epsilon(\A) \cup \{\epsilon_{N+1}\}$
			\\ $\quad\,$ $G(\A \cup e) \gets\,$ the digraph obtained from $G(\A)$ by adding a new vertex labelled $\I_{N+1}$ and an arc $(\I_{i_0},\I_{N+1})$
		\State $N \gets N+1$
	\end{algorithmic}
\end{algorithm}

\begin{algorithm}[H]
	\caption{\textsc{Add\_Other}}\label{algo_1b}
	\begin{algorithmic}[1]
		\State \Define $J_0 \defeq \{1 \leq i \leq N \,\,\text{such that $V(\I_i) \cap e \neq \varnothing$} \}$
		\State \Define $1 \leq i_0 \leq N$ such that $\I_{i_0}=\FCA_{G(\A)}(\{\I_i,i \in J_0\})$
		\State \Define $J \defeq \bigcup_{i \in J_0} \{1 \leq j \leq N \,\,\text{such that $\I_j$ is on the path from $\I_{i_0}$ to $\I_i$ in $G(\A)$} \}$
		\State $V(\I_{i_0}) \gets \bigcup_{i \in J} V(\I_j)$
		\State \Update the archipelago $\A$ as follows:
			\\ $\quad\,$ $E(\A) \gets E(\A) \cup \{e\}$
			\\ $\quad\,$ $\I(\A) \gets \I(\A) \setminus \{V(\I_j), j\in J \setminus \{i_0\}\}$. 
			\\ $\quad\,$ $\epsilon(\A) \gets \epsilon(\A) \setminus \{\epsilon_j, j\in J \setminus \{i_0\}\}$.
			\\ $\quad\,$ $G(\A) \gets\,$ the digraph obtained from $G(\A)$ by merging the vertices $\{\I_j, j\in J\}$ into the vertex $\I_{i_0}$.
	\end{algorithmic}
\end{algorithm}

\begin{algorithm}[H]
	\caption{\textsc{Add\_Crossing}}\label{algo_1c}
	\begin{algorithmic}[1]
		\State \Update the archipelago $\A$ as follows:
			\\ $\quad\,$ $E(\A) \gets E(\A) \cup \{e\}$
	\end{algorithmic}
\end{algorithm}

\indent Let us explain the algorithm. At the start, the archipelago $\A$ consists of the empty island with entry $\{x^*\}$. We then augment $\A$ one edge at a time, by adding firstly the edges of $\A$-type "new crossing" or "other" and then the edges of $\A$-type "crossing":
\begin{itemize}[noitemsep,nolistsep]
	\item Throughout the first While loop, $\A$ is an arborescent archipelago, as guaranteed by \Cref{prop-type1a,prop-type1b}. It is very important to understand that, every time $\A$ is augmented in that loop, the vertices and entries of $\A$ may change, so the $\A$-types of the remaining edges may change as well: the $\A$-types of the edges in $E(\H)\setminus E(\A)$ must be redetermined at each iteration of that loop.
	\item Throughout the second While loop, $\A$ is an archipelago, as guaranteed by \Cref{prop-type1c}. This time, the decomposition in islands does not change during that loop (we are adding crossing edges between already existing islands) so the $\A$-types of the remaining edges do not change.
\end{itemize}
\indent That last remark proves that, after the two While loops, all remaining edges are of $\A$-type either "cut" or "exterior" (the $\A$-types "new crossing" and "other" have not reappeared during the second While loop). In conclusion, \textsc{Partition\_Archipelago} does output a partition of $E(\H)$ and is therefore correct.

\subsubsection{Time complexity}

\hphantom{\indent}Let $n=|V(\H)|$ and $m=|E(\H)|$. We now show that \textsc{Partition\_Archipelago} runs in $O(m^2 k)$ time.
\medskip
\\ \indent Let us first consider the three procedures \textsc{Add\_NewCrossing}, \textsc{Add\_Other} and \textsc{Add\_Crossing}, to figure out how much time each update of $\A$ takes. Since basic operations on data structures can be language-dependent, let us clarify: when we use a list, what matters is the ability to remove the current element in $O(1)$ time; when we use an array, what matters is the ability to access and modify any element in $O(1)$ time.
\begin{itemize}[noitemsep,nolistsep]
	\item $E(\H)\setminus E(\A)$ can be implemented as a list. Indeed, it is sensible to store $E(\H) \setminus E(\A)$ rather than $E(\A)$ since this is the set in which edges are searched for throughout. Each update consists in removing the current edge which is done in $O(1)$ time.
	\item $\I(\A)$ can be implemented as an array of size $n$ which contains, for each vertex $x \in V(\H)$, the index of the island containing $x$ (or 0 if $x \not\in V(\A)$). Each update requires going through the array once and is therefore done in $O(n)$ time.
	\item $\epsilon(\A)$ can be implemented as an array of size $n$ which contains, for each vertex $x \in V(\H)$, a 1 if $x$ is in an entry of $\A$ or a 0 otherwise. Each update requires going through the array once and is therefore done in $O(n)$ time.
	\item $G(\A)$ is an arborescence for the entire time that it is kept updated. Since $O(\frac{n}{k})$ islands are created in total (a new island can only be created during \textsc{Add\_NewCrossing}, and this requires $k-1$ previously undiscovered vertices), $G(\A)$ can be implemented as an array of size $O(\frac{n}{k})$ containing the parent of each island, i.e. for all index $i \neq 1$ it contains the only index $j$ such that $(\I_j,\I_i) \in E(G(\A))$. In \textsc{Add\_NewCrossing}, updating $G(\A)$ is clearly done in $O(1)$ time. In \textsc{Add\_Other}, updating $G(\A)$ is done in $O(n)$ time: indeed, computing $|J_0| \leq k$ paths to the root takes $O(k \times \frac{n}{k})=O(n)$ time, going through them a second time to compute $i_0$ and $J$ takes $O(k \times \frac{n}{k})=O(n)$ time again, and finally the merging process is performed in $O(\frac{n}{k})$ time since it only requires going through the array once.
\end{itemize}
\indent All in all, performing \textsc{Add\_NewCrossing}, \textsc{Add\_Other} or \textsc{Add\_Crossing} once is done in $O(n)$ time.
\medskip
\\ \indent Determining the $\A$-type of a given edge $e$ is easily done in $O(k)$ time since it boils down to determining, for all $x \in e$, which island/entry (if any) contains $x$.
\medskip
\\ \indent We can now conclude on the time complexity of \textsc{Partition\_Archipelago}:
\begin{itemize}[noitemsep,nolistsep]
	\item The initializations before the first While loop are done in $O(m+n)$ time.
	\item During the first While loop, finding an edge of $\A$-type "new crossing" or "other" and then adding it takes $O(mk+n)$ time: indeed, at most $m$ edges are gone through (with the $\A$-type being determined for each one in $O(k)$ time as we have just seen) before finally finding one of $\A$-type "new crossing" or "other" which is added in $O(n)$ time as shown above. Since at most $m$ edges of $\A$-type "new crossing" or "other" are added in total, the first While loop ends in $O(m(mk+n))=O(m^2 k + mn)$ time.
	\item During the second While loop, no $\A$-types need to be redetermined, and each update of $\A$ is done in $O(1)$ time so that this loop ends in $O(m)$ time.
	\item Finally, computing $E_{cut}$ and $E_{ext}$ at the very end of the algorithm takes $O(m)$ time.
\end{itemize}
\indent In conclusion, \textsc{Partition\_Archipelago} runs in $O(m^2 k + mn)$ time. Since the $(k-2)$-linear connected component is a subset of the connected component, it is reasonable to assume that $\H$ is connected, which implies that $m \geq \frac{n-1}{k-1}$. Therefore, we can simplify $O(m^2 k + mn)$ as $O(m^2 k)$. This ends the proof of \Cref{theo-algo}.
\medskip
\\ \indent Notice that the algorithm can easily be tweaked so as to also return a $(k-2)$-linear path from $x^*$ to $x$ for each $x \in LCC^{\,k-2}_{\H}(x^*)$. Indeed, it suffices, throughout the algorithm, to keep in memory an $(x^*,X)$-extendable path in $\A$ for each $X \subset V(\A)$ such that $1 \leq |X| \leq k-1$ and $X \not\in \epsilon(\A)$, which is possible by following the construction given in the proof of \Cref{prop-type1b}. If $k=O(1)$ then the algorithm remains in polynomial time.

\subsection{Proof of the main results}

\begin{proof}[Proof of \Cref{theo-main}]
	Let $\A,E_{cut},E_{ext}$ be as in \Cref{theo-algo}.
	\begin{itemize}
		\item Let us first show that $\A=\H[LCC^{\,k-2}_{\H}(x^*)]$. Since no edge in $E_{cut} \cup E_{ext}$ is included in $V(\A)$, we know $\A$ is an induced subhypergraph of $\H$. Moreover $V(\A) \subseteq LCC^{\,k-2}_{\H}(x^*)$ by \Cref{coro-inclusion}, so it remains to verify that $LCC^{\,k-2}_{\H}(x^*) \subseteq V(\A)$. The idea is simple: the only way to leave the archipelago is through an edge in $E_{cut}$, however a $(k-2)$-linear path in $\A$ from $x^*$ to an entry of size $k-1$ necessarily contains that entry entirely, making it impossible to then use an edge in $E_{cut}$ without violating the $(k-2)$-linearity. We now give the rigorous proof.
		\\ Suppose for a contradiction that there exists $x \in LCC^{\,k-2}_{\H}(x^*) \setminus V(\A)$. Let $\ora{P}=(e_1,\ldots,e_L)$ be a $(k-2)$-linear path from $x^*$ to $x$ in $\H$. Since $x \not\in V(\A)$, we can define $M \defeq \inf\{1 \leq p \leq L \,\,\text{such that $e_p \not\subset V(\A)$} \}$. Since all edges adjacent to $x^*$ are necessarily in $E(\A)$, we have $e_1 \subset V(\A)$ hence $M \geq 2$. Moreover $e_M$ intersects $e_{M-1} \subset V(\A)$, so $e_M \in E_{cut}$ from which $e_M \cap V(\A)=\epsilon$ for some entry $\epsilon$ of $\A$ of size $k-1$. Let $y \in e_M \cap e_{M-1} \subset \epsilon$: since $\ora{Q}\defeq (e_1,\ldots,e_{M-1})$ is a $(k-2)$-linear path from $x^*$ to $y$ in $\A$, \Cref{prop-entry1} ensures that $\epsilon \subset e_{M-1}$. Since $\epsilon \subset e_M$, this contradicts the $(k-2)$-linearity of $\ora{P}$.
		\item Any archipelago $\A'$ in $\H$ is a subhypergraph of $\H[LCC^{\,k-2}_{\H}(x^*)]=\A$, because $V(\A') \subseteq LCC^{\,k-2}_{\H}(x^*)$ by \Cref{coro-inclusion}. This shows both that $\A$ is a maximal archipelago and that it is the only one.
		\item Finally, the complexity result is obvious since computing $\H[LCC^{\,k-2}_{\H}(x^*)]=\A$ is equivalent to computing $E(\A)$. \qedhere
	\end{itemize}
\end{proof}

\section{Consequences of the algorithmic result}\label{Section4}

\subsection{Link with the Maker-Breaker positional game}

\hphantom{\indent}A \textit{positional game} is a type of combinatorial game played on a hypergraph $\H$, where two players take turns picking previously unpicked vertices of $\H$, and the winner is decided by one of several conventions. In the \textit{Maker-Breaker} convention, one player ("Maker") wins if he owns all vertices of some edge of $\H$, while the other player ("Breaker") wins if he can prevent this from happening. Note that, since both players have complementary goals, no draw is possible. The algorithmic problem consisting in deciding which player wins the Maker-Breaker game with optimal play is studied in the literature: \\

\begin{tabularx}{0.95\textwidth}{|l @{} l @{} X|}
	\hline
	\multicolumn{3}{|l|}{$\,\,\textsc{MakerBreaker}$} \\ \hline
	Input $\,$ & : & $\,$ a hypergraph $\H$. \\
	Output $\,$ & : & $\,$ YES if and only if Maker wins the Maker-Breaker game on $\H$. \\ \hline
\end{tabularx} \\

\indent The $\textsc{MakerBreaker}$ problem is trivially tractable on hypergraphs of rank 2 (Maker wins on a graph if and only if it is a matching), and is known to be PSPACE-complete on 6-uniform hypergraphs \cite{RW21}. In a separate paper \cite{GGS22}, we study the Maker-Breaker problem on hypergraphs of rank 3, in which linear paths play a crucial role. If $\H$ contains a linear path from $x$ to $y$, where Maker owns $x$ and $y$ while the other vertices of the path are free (\textit{$xy$-nunchaku}), then Maker easily wins when playing first, by forcing all of Breaker's moves along the path until Breaker is trapped. It is shown in \cite{GGS22} that Maker wins on a hypergraph of rank 3, when playing first, if and only if he has a strategy ensuring that the hypergraph contains a nunchaku at the end of one of the first four rounds of play. Therefore:

\begin{mytheorem*}\textup{\cite{GGS22}}
	$\textsc{MakerBreaker}$ on hypergraphs of rank 3 reduces polynomially to $\textsc{HypConnectivity}_{3,1}$.
\end{mytheorem*}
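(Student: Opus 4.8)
The plan is to take as a black box the game-theoretic characterization quoted just above (from \cite{GGS22}): on a hypergraph $\H$ of rank $3$, the first player Maker has a winning strategy if and only if he has a strategy that forces a nunchaku to appear at the end of one of the first four rounds of play. Since ``within four rounds'' is a bounded amount of play, the relevant part of the game tree has only polynomially many nodes; the only non‑trivial operation to perform at a node is deciding whether the current position contains a nunchaku, and that is precisely a $1$‑linear path existence question, i.e. an instance of $\textsc{HypConnectivity}_{3,1}$ after the hypergraph is made uniform. A bounded minimax over this tree then gives the reduction. So I would proceed in two steps: first turn ``contains a nunchaku'' into $O(1)$ oracle queries, then wrap this in a bounded game‑tree evaluation.

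First I would make the leaf test precise. Encode a position by the set $M$ of vertices claimed by Maker and the disjoint set $B$ of vertices claimed by Breaker. I claim that $(M,B)$ contains a nunchaku if and only if there are two distinct vertices $x,y \in M$ admitting a $1$‑linear path from $x$ to $y$ in the induced subhypergraph $\H_B \defeq \H[V(\H)\setminus B]$. A nunchaku is by definition such a path whose interior vertices are free, which gives the forward direction; conversely, given any $1$‑linear path from $x$ to $y$ in $\H_B$ (all of whose vertices already avoid $B$), if some interior vertex lies in $M$ one replaces the path by the $1$‑linear path between $x$ and that vertex found inside it (cf. \Cref{Lemma-subpath}), which is strictly shorter, and iterates until no interior vertex lies in $M$, producing a genuine nunchaku. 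As $\H_B$ need not be $3$‑uniform, pad each of its edges with fresh vertices to obtain a $3$‑uniform $\H_B^0$ as in \Cref{Section1} (this preserves $1$‑linear connectivity), and then test $\textsc{HypConnectivity}_{3,1}(\H_B^0,x,y)$ for each of the $O(|M|^2)$ pairs of distinct $x,y \in M$. Because $|M| \le 4$ in every position we will meet, this amounts to $O(1)$ oracle calls after $O(n+m)$ preprocessing, where $n=|V(\H)|$ and $m=|E(\H)|$; one also declares the position Maker‑winning if $M$ already contains an edge, a trivial extra check covering degenerate small edges.

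Next I would assemble the bounded game tree. Four rounds are at most eight plies, and a node at depth $d$ has at most $n-d$ children (one per unclaimed vertex), so the tree of positions reachable within four rounds has $n^{O(1)}$ nodes and is enumerable in polynomial time. Evaluate it bottom‑up via the recursion $W(M,B,r)=$ ``Maker, to move, can force a nunchaku at the end of one of the next $r$ rounds'', with $W(M,B,0)$ false and, for $r\ge 1$, $W(M,B,r)$ true iff there is a legal Maker move yielding $M'$ such that for every legal Breaker reply yielding $B'$, either $(M',B')$ contains a nunchaku (leaf test above) or $W(M',B',r-1)$ holds. By the characterization, Maker wins $\H$ iff $W(\varnothing,\varnothing,4)$ holds. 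The whole computation runs in $n^{O(1)}$ deterministic time plus $n^{O(1)}$ calls to $\textsc{HypConnectivity}_{3,1}$, which is the asserted polynomial‑time (Turing) reduction; combined with \Cref{coro-main}, which places $\textsc{HypConnectivity}_{3,1}$ in $\mathrm{P}$, it shows $\textsc{MakerBreaker}$ restricted to rank $3$ is in $\mathrm{P}$, confirming the conjecture of Rahman and Watson.

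The hard part is not any of the above — that is a routine bounded‑lookahead argument — but the characterization it rests on: proving that four rounds of lookahead suffice, and more precisely that a first‑player win on a rank‑$3$ hypergraph is equivalent to quickly forcing a nunchaku, requires a genuine structural analysis of Maker--Breaker play on rank‑$3$ hypergraphs, together with the ``trapping'' argument showing a nunchaku is itself a Maker win. That is the substantial content of \cite{GGS22} and is taken here as given. The only other points I would be careful to pin down are the conventions — who moves first, and whether an already‑completed edge or a nunchaku with a Maker‑owned interior vertex should count as a nunchaku — but, as the shortcutting argument and the extra ``$M$ contains an edge'' check indicate, each is harmless and is normalized without difficulty.
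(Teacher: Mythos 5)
Your proposal is correct and follows essentially the same route as the paper, which states this theorem as a citation of \cite{GGS22} and rests entirely on the same four-round nunchaku characterization you take as a black box; the bounded game-tree evaluation and the reduction of nunchaku detection to $O(1)$ queries of $\textsc{HypConnectivity}_{3,1}$ are exactly the (routine) details the paper leaves implicit. Your normalizations (padding to $3$-uniformity, shortcutting interior Maker-owned vertices via \Cref{Lemma-subpath}, the ``$M$ already contains an edge'' check) are all sound.
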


\indent \Cref{coro-main} thus concludes that $\textsc{MakerBreaker}$ is solvable in polynomial time on hypergraphs of rank 3. This validates a conjecture by Rahman and Watson \cite{RW20}.

\subsection{Link with {\normalfont\textsc{PAFP}}}

\subsubsection{Reducing {\normalfont\textsc{HypConnectivity}$_{k,q}$} to {\normalfont\textsc{PAFP}}}

\hphantom{\indent}A first attempt at tackling the algorithmic complexity of $\textsc{HypConnectivity}_{k,q}$, for general $1 \leq q \leq k-2$, could be the following reduction to the "Paths Avoiding Forbidden Pairs" problem known as $\textsc{PAFP}$ (sometimes $\textsc{PPFP}$ or $\textsc{PFP}$): \\

\begin{tabularx}{0.95\textwidth}{|l @{} l @{} X|}
	\hline
	\multicolumn{3}{|l|}{$\,\,\textsc{PAFP}$} \\ \hline
	Input $\,$ & : & $\,$ a bicolored graph $G$ (all edges are blue or red), and $x,y \in V(G)$. \\
	Output $\,$ & : & $\,$ YES if and only if there exists a blue induced path from $x$ to $y$ in $G$. \\ \hline
\end{tabularx} \\

\begin{mynotation}
	Let $\varphi_{k,q}$ be the function that associates to a $k$-uniform hypergraph $\H$ the bicolored graph $G$ defined by:
	\begin{itemize}[noitemsep,nolistsep]
		\item $V(G)=E(\H)$;
		\item For all distinct $e_1,e_2 \in V(G)$, there is a blue (resp. red) edge between $e_1$ and $e_2$ in $G$ if and only if $1 \leq |e_1 \cap e_2| \leq q$ (resp. if and only if $|e_1 \cap e_2| > q$).
	\end{itemize}
	Therefore $G$ is simply the line graph of $\H$ with added colors that carry information on the size of the intersections. See \Cref{Example_PAFP} for an example.
\end{mynotation}

\begin{myproposition}
	For all $k \geq 3$ and $1 \leq q \leq k-2$, $\textsc{HypConnectivity}_{k,q}$ polynomially reduces to $\textsc{PAFP}$.
\end{myproposition}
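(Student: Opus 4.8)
The plan is to route everything through the bicolored line graph $\varphi_{k,q}(\H)$, which already does almost all the work: a blue induced path in $\varphi_{k,q}(\H)$ between two of its vertices is, unwound, exactly a $q$-linear path in $\H$ between the corresponding hyperedges. The only genuine gap is that $\textsc{HypConnectivity}_{k,q}$ asks about two \emph{vertices} $x,y$ of $\H$, not two edges. I would close this gap with a fresh-edge gadget: given $(\H,x,y)$, build the $k$-uniform hypergraph $\H'$ by adding to $\H$ two new edges $e_x \defeq \{x\}\cup A$ and $e_y \defeq \{y\}\cup B$, where $A$ and $B$ are disjoint $(k-1)$-sets of brand-new vertices, each disjoint from $V(\H)$. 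The reduction outputs the $\textsc{PAFP}$ instance $\bigl(\varphi_{k,q}(\H'),\,e_x,\,e_y\bigr)$.

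The core of the proof is the equivalence: there is a $q$-linear path from $x$ to $y$ in $\H$ if and only if there is a blue induced path from $e_x$ to $e_y$ in $G\defeq\varphi_{k,q}(\H')$. For the forward direction, take a $q$-linear path $\ora{P}=(f_1,\ldots,f_L)$ from $x$ to $y$ in $\H$ (note $L\geq 1$ since $x\neq y$), and consider the walk $e_x,f_1,\ldots,f_L,e_y$ in $G$. Consecutive vertices are joined by blue edges, because $|e_x\cap f_1|=|\{x\}|=1\in[\![1,q]\!]$, $|f_i\cap f_{i+1}|\in[\![1,q]\!]$, and $|f_L\cap e_y|=|\{y\}|=1$ (the fresh vertices of $A,B$ meet no $f_i$). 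It is chordless, because $f_i\cap f_j=\varnothing$ for $|i-j|>1$; because $x\notin f_j$ for $j\geq 2$ forces $e_x\cap f_j=\varnothing$; because $y\notin f_i$ for $i\leq L-1$ forces $e_y\cap f_i=\varnothing$; and because $e_x\cap e_y=\varnothing$. The $f_i$ are automatically pairwise distinct (disjointness for non-consecutive pairs, and $|f_i\cap f_{i+1}|\leq q<k$ for consecutive ones) and none equals $e_x$ or $e_y$, so this is a bona fide blue induced path. Conversely, a blue induced path from $e_x$ to $e_y$ in $G$ has the shape $e_x,v_1,\ldots,v_L,e_y$ with $L\geq 1$ (as $e_x$ and $e_y$ are non-adjacent in $G$) and each $v_i\in E(\H)$, since $E(\H')\setminus\{e_x,e_y\}=E(\H)$ and path vertices are distinct. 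Reading off what "blue" gives ($x\in v_1$, $|v_i\cap v_{i+1}|\in[\![1,q]\!]$, $y\in v_L$) and what the absence of chords gives ($v_i\cap v_j=\varnothing$ for $|i-j|>1$, $x\notin v_j$ for $j\geq 2$, $y\notin v_i$ for $i\leq L-1$), one recognizes precisely a $q$-linear path from $x$ to $y$ in the sense of \Cref{def-q-linear}. (If one instead starts from an arbitrary $q$-linear path whose first edge contains $x$ and whose last edge contains $y$, \Cref{Lemma-subpath} extracts from it a $q$-linear path from $x$ to $y$, so existence is not affected.)

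The reduction is clearly polynomial: $\H'$ has $m+2$ edges and $n+2(k-1)$ vertices, and $\varphi_{k,q}(\H')$ is obtained by computing all pairwise intersection sizes among the edges of $\H'$, in $O\bigl((m+2)^2k\bigr)$ time. The one point that needs care — and which I regard as the main (if minor) obstacle — is verifying that the chordlessness of an induced path encodes \emph{exactly} the non-reentry clauses of \Cref{def-q-linear} ($X\cap e_i=\varnothing$ for $i\geq 2$ and $Y\cap e_i=\varnothing$ for $i\leq L-1$) together with the disjointness of non-consecutive edges; this is precisely why the gadget devotes a full fresh $(k-1)$-set to each of $e_x,e_y$ (so that $e_x,e_y$ touch only $x$, resp. $y$, inside $V(\H)$) and why $e_x$ and $e_y$ must be left non-adjacent in $G$. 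One should also dispatch the degenerate short case $L=1$ (a single hyperedge containing both $x$ and $y$), but this is immediate.
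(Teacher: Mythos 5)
Your proof is correct, and it shares the paper's core observation — that $q$-linear paths in $\H$ are exactly the blue induced paths in the bicolored line graph $\varphi_{k,q}$ — but it closes the vertex-versus-edge gap differently. The paper simply notes that a $q$-linear path from $x$ to $y$ exists if and only if there exist edges $e_x\ni x$ and $e_y\ni y$ admitting a blue induced path between them in $\varphi_{k,q}(\H)$ (invoking, implicitly, \Cref{Lemma-subpath} to trim such a path into one satisfying the non-reentry clauses of \Cref{def-q-linear}); as written this is a disjunction over up to $m^2$ \textsc{PAFP} queries, i.e.\ a Turing-style reduction. You instead add the two gadget edges $e_x=\{x\}\cup A$ and $e_y=\{y\}\cup B$ on fresh vertices and ask a \emph{single} \textsc{PAFP} question on $\varphi_{k,q}(\H')$; because $e_x$ and $e_y$ meet $V(\H)$ only in $x$ and $y$ respectively, chordlessness of the induced path automatically enforces the conditions $x\notin v_j$ for $j\geq 2$ and $y\notin v_i$ for $i\leq L-1$, so no trimming lemma is needed and the equivalence is exact. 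This buys you a genuine many-one (Karp) reduction rather than a polynomial number of oracle calls, at the cost of a slightly larger instance; it is also pleasingly symmetric with the gadget the paper uses in the reverse direction (deleting all edges at $x$ and $y$ other than $e$ and $e'$ in the proof that \textsc{PAFP} restricted to $\Im(\varphi_{k,k-2})$ is tractable). All the verifications you flag as delicate — non-adjacency of $e_x$ and $e_y$, distinctness of the path vertices, and the $L=1$ case — are handled correctly.
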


\begin{proof}
	This is clear: by definition, a sequence of edges $(e_1,\ldots,e_L)$ in $\H$ is a $q$-linear path if and only if it is a blue induced path in $\varphi_{k,q}(\H)$ ("blue" means two consecutive edges intersect on between 1 and $q$ vertices, "induced" means two non-consecutive edges do not intersect). Therefore, there exists a $q$-linear path from $x$ to $y$ in $\H$ ($x \neq y$) if and only if there exist edges $e_x \ni x$ and $e_y \ni y$ in $\H$ such that there exists a blue induced path between $e_x$ and $e_y$ in $\varphi_{k,q}(\H)$.
\end{proof}

\indent However, $\textsc{PAFP}$ is known to be NP-complete in general \cite{GMO76}. In fact, unless P=NP, there is no linear approximation ratio for the minimum number of red edges induced by a blue path between two given vertices \cite{HKK12}. For the problem on directed graphs (the blue edges are directed arcs), which is by far the most studied version in the literature, a few tractable cases are known but they are of little help to us:
\begin{itemize}[noitemsep,nolistsep]
	\item It is shown in \cite{Yin97} that the problem is tractable if the red edges form a matching and a \textit{skew symmetry condition} is satisfied. Even though the undirected version is also true with basically the same proof, it does not solve $\textsc{HypConnectivity}_{k,q}$ since a general bicolored graph in $\Im(\varphi_{k,q})$ does not satisfy these conditions (nor does it easily reduce to one that does).
	\item Other tractable cases are addressed in \cite{CKT01} and \cite{KP09}, however they are very specific to directed acyclic graphs.
\end{itemize}

\begin{figure}[htbp]
	\centering
	\includegraphics[scale=.6]{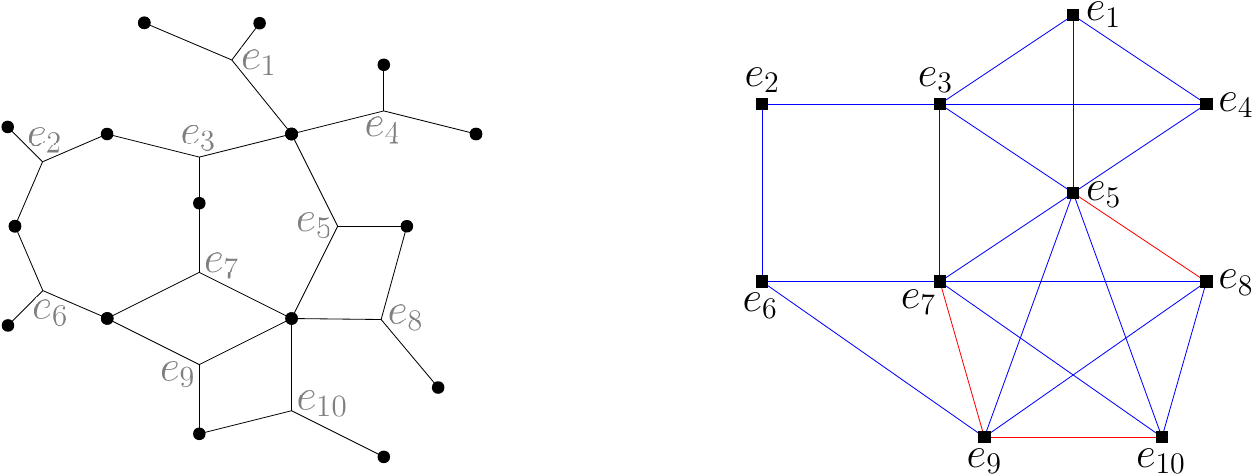}
	\caption{On the left: a 3-uniform hypergraph $\H$. On the right: the bicolored graph $G=\varphi_{3,1}(\H)$.}\label{Example_PAFP}
\end{figure}

\subsubsection{Reducing some instances of {\normalfont\textsc{PAFP}} to {\normalfont\textsc{HypConnectivity}$_{k,q}$}}

\hphantom{\indent}Instead, now that we know $\textsc{HypConnectivity}_{k,k-2}$ is solvable in polynomial time for all $k \geq 3$, it is interesting to turn the tables and examine the implications on $\textsc{PAFP}$:

\begin{mytheorem}
	$\textsc{PAFP}$ is tractable on bicolored graphs in $\bigcup_{k \geq 3}\Im(\varphi_{k,k-2})$ for which a preimage can be computed in polynomial time.
\end{mytheorem}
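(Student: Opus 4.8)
The plan is to reduce each query of $\textsc{PAFP}$ on such a graph to a constant amount of bookkeeping together with one call to the polynomial-time algorithm for $\textsc{HypConnectivity}_{k,k-2}$ provided by \Cref{coro-main}. Given an instance $(G,x,y)$ with $G\in\Im(\varphi_{k,k-2})$ ($k\geq 3$), first compute in polynomial time, as permitted by hypothesis, a $k$-uniform hypergraph $\H$ with $\varphi_{k,k-2}(\H)=G$. Then $V(G)=E(\H)$, so $x$ and $y$ are edges $e_x,e_y$ of $\H$, and by the correspondence observed above (a sequence of edges of $\H$ is a blue induced path in $\varphi_{k,k-2}(\H)$ if and only if it is a $(k-2)$-linear path in $\H$) the query amounts to deciding whether there is a $(k-2)$-linear path $(f_1,\ldots,f_L)$ in $\H$ with $f_1=e_x$ and $f_L=e_y$. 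I would dispatch the short cases at once: such a path with $L=1$ exists iff $e_x=e_y$; a length-$2$ path $(e_x,e_y)$ exists iff $1\leq|e_x\cap e_y|\leq k-2$; and if $|e_x\cap e_y|=k-1$ there is no such path at all (lengths $1$ and $2$ are excluded as above, and $L\geq 3$ would force the non-consecutive edges $e_x,e_y$ to be disjoint). So the only genuine case is $e_x\cap e_y=\varnothing$, where one looks for $L\geq 3$.

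For that case, the crucial observation is that an edge $f$ with $|f\cap e_x|\geq k-1$ (in particular $f=e_x$) can never occur in a $(k-2)$-linear path whose first edge is $e_x$: as a second edge it would create a forbidden intersection of size $\geq k-1$ with $e_x$, and at any later position it would have to be disjoint from $e_x$; symmetrically for $e_y$. I would therefore delete from $\H$ every edge meeting $e_x$ or $e_y$ in at least $k-1$ vertices, then identify the vertex set $e_x$ to a single new vertex $a$ and the vertex set $e_y$ to a single new vertex $b$ (each surviving edge $f$ becomes $f\setminus(e_x\cup e_y)$ together with $a$ and/or $b$ according to whether $f$ met $e_x$ and/or $e_y$), and finally re-pad each edge with fresh private vertices to restore $k$-uniformity; call the result $\H^\sharp$. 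The claim to be proven is that there is a $(k-2)$-linear path from $e_x$ to $e_y$ in $\H$ if and only if there is a $(k-2)$-linear path from $a$ to $b$ in $\H^\sharp$, i.e. if and only if $\textsc{HypConnectivity}_{k,k-2}(\H^\sharp,a,b)$ answers YES. Since computing $\H$ is polynomial by hypothesis and building $\H^\sharp$ from $\H$ is clearly polynomial, \Cref{coro-main} then finishes the proof.

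The heart of the argument, and the only place I expect real work, is this equivalence. In the forward direction, take a path $(e_x,f_2,\ldots,f_{L-1},e_y)$; then $f_2$ is the unique edge meeting $e_x$, and it meets it in $1$ to $k-2$ vertices, and symmetrically $f_{L-1}$ is the unique edge meeting $e_y$, so the identification changes no pairwise intersection among the edges $f_2,\ldots,f_{L-1}$, and one obtains a valid $(k-2)$-linear path $\big((f_2\setminus e_x)\cup\{a\},f_3,\ldots,(f_{L-1}\setminus e_y)\cup\{b\}\big)$ from $a$ to $b$ in $\H^\sharp$ (with the obvious adjustment when $L=3$ and a single middle edge meets both $e_x$ and $e_y$). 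The delicate direction is the converse: given a $(k-2)$-linear path $(h_1,\ldots,h_m)$ from $a$ to $b$ in $\H^\sharp$, only $h_1$ contains $a$ and only $h_m$ contains $b$; lifting $h_1,\ldots,h_m$ to the corresponding edges $g_1,\ldots,g_m$ of $\H$ and forming $(e_x,g_1,\ldots,g_m,e_y)$, one must check $(k-2)$-linearity — the bound $1\leq|e_x\cap g_1|\leq k-2$ holds \emph{precisely because} we deleted every edge sharing $k-1$ or more vertices with $e_x$, and the non-consecutive conditions $e_x\cap g_p=\varnothing$ for $p\geq 2$ hold because $a\notin h_p$. This deletion step is exactly the subtle point: without it, an edge sharing $k-1$ vertices with $e_x$ would, after identifying one of those vertices with $a$, share only $k-2$ and so produce a spurious $a$–$b$ path with no counterpart in $\H$. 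Everything else — distinctness of the $g_i$, the $e_x$–$e_y$ pair being harmless since $e_x\cap e_y=\varnothing$, and the padding vertices being irrelevant to $(k-2)$-linearity (the standard rank-to-uniform reduction) — is routine.
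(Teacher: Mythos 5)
Your proof is correct, but it reaches the reduction to $\textsc{HypConnectivity}_{k,k-2}$ by a genuinely different construction than the paper's. Both arguments rest on the same translation (blue induced paths between $e_x$ and $e_y$ in $\varphi_{k,k-2}(\H)$ are exactly the $(k-2)$-linear paths in $\H$ starting with $e_x$ and ending with $e_y$) and both conclude via \Cref{coro-main}; the difference lies in how the hypergraph path is forced to begin and end with the prescribed edges. The paper guesses endpoints rather than modifying edges: for each of the at most $k^2$ pairs $(x,y)\in e_x\times e_y$ it forms $\H_{x,y}$ by deleting every edge adjacent to $x$ or $y$ other than $e_x$ and $e_y$, so that any $(k-2)$-linear path from $x$ to $y$ in $\H_{x,y}$ must start with $e_x$ and end with $e_y$; it answers YES iff one of these $k^2$ oracle calls succeeds (a path $(e_x,f_2,\ldots,f_{L-1},e_y)$ is recovered by the call with $x\in e_x\setminus f_2$ and $y\in e_y\setminus f_{L-1}$). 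You instead prune every edge meeting $e_x$ or $e_y$ in at least $k-1$ vertices, contract $e_x$ and $e_y$ to single vertices $a$ and $b$, re-pad to restore uniformity, and make a single oracle call. Your construction is sound: the pruning step is indeed the crux of the converse direction, and your observation that only the first (resp.\ last) edge of a path from $a$ (resp.\ to $b$) can meet $e_x$ (resp.\ $e_y$) is exactly what guarantees that the contraction alters no relevant pairwise intersection, in both directions. What each approach buys: yours uses one oracle call instead of $k^2$ and handles the degenerate cases ($L\leq 2$, $|e_x\cap e_y|=k-1$) explicitly, which the paper leaves implicit; the paper's endpoint-guessing trick avoids the contraction bookkeeping entirely and is immediate to verify, at the cost of a polynomial factor that is irrelevant to the tractability claim. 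Both are complete proofs of the theorem.
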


\begin{proof}
	Let $G=\varphi_{k,k-2}(\H)$ for some $k$-uniform hypergraph $\H$, and let $e,e' \in V(G)=E(\H)$ be distinct. As we have seen before, the blue induced paths between $e$ and $e'$ in $G$ are exactly the $(k-2)$-linear paths $(e=e_1,\ldots,e_L=e')$ in $\H$. Since $\textsc{HypConnectivity}_{k,k-2}$ requires a start vertex and an end vertex in its input, define, for all $x \in e$ and $y \in e'$, the hypergraph $\H_{x,y}$ obtained from $\H$ by removing all edges adjacent to $x$ and $y$ other than $e$ and $e'$, so that any $(k-2)$-linear path from $x$ to $y$ in $\H_{x,y}$ necessarily starts with $e$ and ends with $e'$. There exists a blue induced path between $e$ and $e'$ in $G$ if and only if there exist $x \in e$ and $y \in e'$ such that there is a $(k-2)$-linear path from $x$ to $y$ in $\H_{x,y}$, which concludes since $\textsc{HypConnectivity}_{k,k-2}$ is solvable in polynomial time.
\end{proof}

\indent Therefore, any sufficient condition for a bicolored graph $G$ to be in $\Im(\varphi_{k,k-2})$ for some $k \geq 3$, if it can be checked in polynomial time and comes with a way to reconstruct a preimage hypergraph in polynomial time, would add to the very short list of known tractable cases for $\textsc{PAFP}$.
\medskip
\\ \indent For standard (i.e. non-colored) line graphs, the recognition problem has been studied extensively. Line graphs of graphs are characterized by a finite list of forbidden induced subgraphs ("FIS") \cite{Bei70}. Line graphs of hypergraphs, on the other hand, are notoriously difficult to recognize. There is no finite FIS characterization for line graphs of $k$-uniform hypergraphs if $k \geq 3$ \cite{Lov77}, and this recognition problem is even known to be NP-complete for $k=3$ \cite{PRT81}. However, adding information about the size of the pairwise intersections of (hyper)edges, instead of simply telling which ones are non-empty, changes the problem. For example, if all these sizes are given and in $\{0,1\}$ (which is equivalent to asking the hypergraph to be \textit{linear}) then, while remaining NP-complete for $k=3$ \cite{PRT81} \cite{HK97}, the problem becomes easier in some cases:
\begin{itemize}[noitemsep,nolistsep]
	\item For $k=3$, there is a finite FIS characterization for line graphs of 3-uniform linear hypergraphs if the minimum vertex-degree of the graph is at least 69, as well as a polynomial time algorithm to reconstruct the hypergraph in the positive case \cite{NRS82}. This bound has since been improved from 69 to 16 for the finite FIS characterization and 10 for the tractability of the recognition problem \cite{SST09}. There is no analogous result for $k \geq 4$, no matter what constant lower bound is put on the minimum vertex-degree \cite{MT97}.
	\item For any $k \geq 3$, there is a finite FIS characterization for line graphs of $k$-uniform linear hypergraphs if the minimum edge-degree of the graph is at least $f(k)$, where $f$ is a polynomial function, as well as a polynomial (whose power increases with $k$) time algorithm to reconstruct the hypergraph in the positive case \cite{NRS82}. This result has been generalized by replacing the linearity of the hypergraph by any constant upper bound on its \textit{multiplicity} \cite{BGM21}.
\end{itemize}
\hphantom{\indent}These results bring some hope of a finite FIS characterization for bicolored line graphs under some similar restriction over the minimum vertex-degree or edge-degree of the graph, and of a way to reconstruct a preimage in polynomial time which we crucially need. The case $k=3$ is the most promising because the exact size of each intersection is also given (in $\{0,1,2\}$: 0 = no edge, 1 = blue edge, 2 = red edge), although it is NP-complete in general since instances with all blue edges correspond to the 3-uniform linear case for standard line graphs which we know is NP-complete. \Cref{Forbidden} features some induced bicolored subgraphs that cannot appear in a bicolored graph from $\bigcup_{k \geq 3}\Im(\varphi_{k,k-2})$. For instance, an induced red path on three vertices is impossible because, in a $k$-uniform hypergraph with $k \geq 3$, if $|e_1 \cap e_2|=|e_2 \cap e_3|=k-1$ then $|e_1 \cap e_3| \geq k-2>0$.

\begin{figure}[htbp]
	\centering
	\includegraphics[scale=.5]{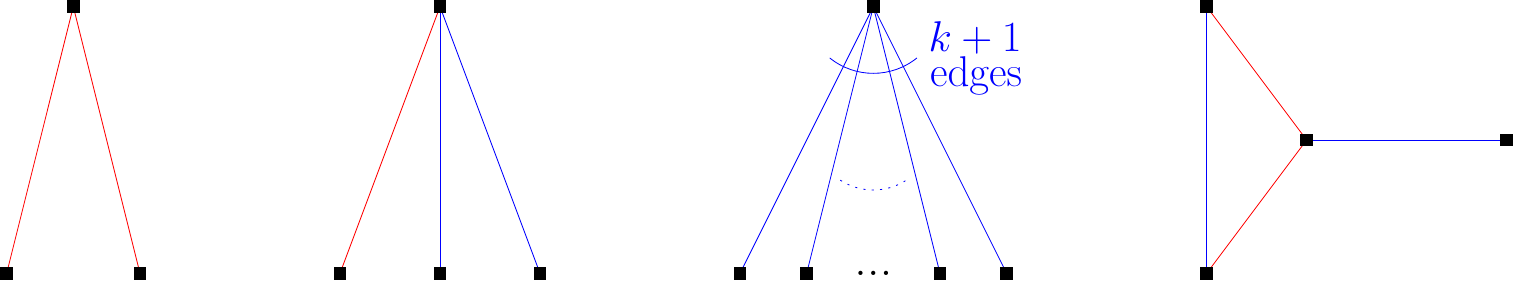}
	\caption{Some induced subgraphs that cannot appear in $G \in \Im(\varphi_{k,k-2})$.}\label{Forbidden}
\end{figure}

\section*{Conclusion and perspectives}

\hphantom{\indent}In this paper, we have introduced $q$-linear paths in hypergraphs of rank $k$, and in the case $q=k-2$ we have described the structure of the $(k-2)$-linear connected components as well as a polynomial time algorithm to compute them. The time complexity in $O(m^2k)$ might be optimal, since it seems difficult to avoid an "accept or put aside" process on the edges where each edge is potentially examined $O(m)$ times, and the mere computation of the intersection of two edges is in $O(k)$ time.
\medskip
\\ \indent What about other values of $q$? The linear case $q=1$ is of particular interest, since linear paths appear in numerous other problems. However, if we want to try and generalize our techniques while maintaining a time complexity that is polynomial in $k$, it might be more reasonable to look at the case $q=k-c$ where $c \geq 3$ is a constant, with adapted definitions of islands and archipelagos (whose entries would be of size between $k-c+1$ and $k-1$). As an illustration of the difficulties that can be encountered during the algorithm, consider the case $k=4$ and $q=1$, where at some point an edge $e=\{x,y,z,t\}$ is discovered with $x,y$ already known vertices from different islands and $z,t$ unknown: on one hand $e$ could be part of a new merged island (since $x,y \in e$), but on the other hand $e$ could be a crossing edge towards a new island with entry $\{z,t\}$ (since $z$ and $t$ are not separated), and it seems hard to conciliate the two.
\medskip
\\ \indent The bicolored line graph recognition problem is open. As mentioned in \Cref{Section4}, the added information on the size of the pairwise intersections of edges might make this problem somewhat easier compared to standard line graphs, especially in the case $k=3$. The characterization of line graphs of hypergraphs by a Krausz partition into cliques \cite{NRS82} is easily adaptable to the bicolored version. Some characterizations by finite families of induced subgraphs from \cite{NRS82} and their proofs might be adaptable as well, which would yield new classes of tractable instances for \textsc{PAFP}. Looking beyond applications to \textsc{PAFP}, a general \textit{weighted line graph} recognition problem, where each edge of the graph would wear a number between 1 and $k-1$ indicating the exact size of the corresponding intersection, seems interesting in itself.

\end{document}